\definecolor{navy}{RGB}{0,0,128}
\definecolor{royalblue}{RGB}{65,105,225}
\definecolor{accent}{RGB}{0,100,148} 
\tikzset{%
  symbol/.style={
    draw=none,
    every to/.append style={
      edge node={node [sloped, allow upside down, auto=false]{$#1$}}
    },
  },
}
\let\LATEXth\th
\let\th\LATEXth
\newtheorem{theorem}{Theorem}[section]
\newtheorem{lemma}[theorem]{Lemma}
\newtheorem{proposition}[theorem]{Proposition}
\newtheorem{corollary}[theorem]{Corollary}
\newtheorem*{summary*}{Summary}
\newtheorem{conjecture}{Conjecture}
\theoremstyle{definition}
\newtheorem{definition}[theorem]{Definition}
\newtheorem{example}[theorem]{Example}
\theoremstyle{remark}
\newtheorem{remark}[theorem]{Remark}
\numberwithin{equation}{section}
\DeclareMathOperator{\Hol}{Hol} 
\DeclareMathOperator{\End}{End} 
\newcommand{\Z}{\mathbb{Z}}
\newcommand{\Herm}{\mathrm{Herm}}
\newcommand{\id}{\mathrm{id}}
\newcommand{\Sep}{\mathrm{Sep}}
\newcommand{\Ad}{\mathrm{Ad}}
\newcommand{\Ind}{\mathrm{Ind}}
\newcommand{\ind}{\mathrm{ind}}
\newcommand{\sgn}{\mathrm{sgn}}
\newcommand{\ad}{\mathrm{ad}}
\newcommand{\sep}{\mathsf{sep}}  
\definecolor{QGLBlue}{RGB}{65,105,225}   
\definecolor{QGLRed}{RGB}{203,65,84}     
\definecolor{QGLGreen}{RGB}{34,139,34}   
\newcommand{\hlEq}[1]{%
  \begingroup\setlength{\fboxsep}{2pt}%
  \colorbox{QGLBlue!5}{$\displaystyle #1$}%
  \endgroup}
\definecolor{QGLItem}{RGB}{255,245,204} 
\definecolor{QGLItem}{RGB}{65,105,225}
\newtcolorbox{HLblock}{enhanced,breakable,
  colback=QGLBlue!5, colframe=QGLBlue!5, 
  boxrule=0pt, arc=0pt, outer arc=0pt,
  left=2pt,right=2pt,top=1.5pt,bottom=1.4pt}
\begin{document}

\title{Quantum Entanglement as a Cohomological Obstruction}

\author{Kazuki Ikeda}
\address{}
\curraddr{}

\email{kazuki.ikeda@umb.edu}
\address{Department of Physics, University of Massachusetts Boston, USA}
\address{Center For Nuclear Theory, Department of Physics and Astronomy, Stony Brook University, USA}

\thanks{The author expresses gratitude to Steven Rayan for his careful reading of the manuscript and for his invaluable comments. The author is also grateful to Myungbo Shim for useful discussions. This work was partially supported by the NSF under Grant No. OSI-2328774.}

\subjclass[2020]{81P40 (Primary),
14D24, 
14F05, 
58J20, 
53C05 
}
\date{}

\dedicatory{}
\begin{abstract}
We recast quantum entanglement as a cohomological obstruction to reconstructing a global quantum state from locally compatible information. We address this by considering presheaf cohomologies of states and entanglement witnesses. Sheafification erases the global-from-local signature while leaving within-patch multipartite structure, captured by local entanglement groups introduced here. For smooth parameter families, the obstruction admits a differential-geometric representative obtained by pairing an appropriate witness field with the curvature of a natural unitary connection on the associated bundle of amplitudes. We also introduce a \emph{Quantum Entanglement Index} (QEI) as an index-theoretic invariant of entangled states and explain its behavior. Finally, we outline a theoretical physics approach to probe these ideas in quantum many-body systems and suggest a possible entanglement-induced correction as an experimental validation. Detailed numerical implementations for concrete quantum many-body models are presented in the companion paper~\cite{2026arXiv260113467I}.
\end{abstract}
\maketitle
\tableofcontents
\section{Introduction}
\noindent
\textbf{Preliminaries.}
Quantum entanglement is the organizing principle of quantum systems. It encodes correlations that cannot be reproduced by classical probabilistic mixtures. In an entangled state, information is shared across the whole system in a way that is not reducible to correlations of its subsystems. This indicates that local information obtained by tracing out subsystems need not determine the global state. Motivated by this thought, we test entanglement through a reconstruction question: when do locally compatible marginals assemble into a global state, and when is such a state unique? In algebro-geometric terms, the presheaf of states may fail to satisfy the sheaf axioms in the presence of entanglement: gluing can fail, and even when gluing is possible, uniqueness can fail. In this setting, entanglement appears as a cohomological obstruction to global reconstruction. Reduced density operators may agree on all overlaps while a global state is not unique.

A variety of quantitative measures of entanglement have been introduced, ranging from entanglement entropy for pure states, to entanglement of formation, distillable entanglement, and relative entropy of entanglement for mixed states, to more computationally motivated quantities such as concurrence and negativity (see, e.g., \cite{Horodecki:1996nc,2001quant.ph..1032T,Guhne:2008qic,umegaki1962conditional}). These are real‑valued functions of states and are useful for analysis and numerics, but they are not directly accessible experimentally. By contrast, we work with entanglement witnesses, which are Hermitian observables. As physical observables, their expectation values provide testable certificates for separability or inseparability. We also use their global organization to examine compatibility across patches.

The aim of this work is to develop an algebro-geometric and topological framework in which entanglement is expressed as a cohomological obstruction to gluing local data. We proceed in two parts. On the discrete side, we organize states and entanglement witnesses into presheaves and express obstructions as Čech cohomology classes. On the geometric side, for smooth parameter families of states we obtain a Chern-Weil representative by pairing a parallel family of witnesses with the curvature of a natural unitary connection on the associated bundle of amplitudes (see \cite{UHLMANN1986229,Petz:1999xrh,BottTu}). This links the reconstruction problem to standard objects in differential geometry. The result is a correspondence between compatibility data and characteristic classes on parameter space, developed in a finite‑dimensional setting and suggesting broader interactions with geometry.

\vskip0.3cm
\noindent
\textbf{The gluing problem.}
Fix a finite index set $I=\{1,\dots,N\}$ and finite dimensional complex Hilbert spaces $H_j\simeq\mathbb C^{d_j}$. For every nonempty $U\subseteq I$, put
\[
H_U:=\bigotimes_{j\in U}H_j,\qquad
\mathcal D(U):=\{\rho\in\mathrm{End}(H_U)\mid \rho=\rho^\dagger,\ \rho\ge0,\ \mathrm{Tr}\,\rho=1\}.
\]
Let $\mathcal S\subset \mathcal D$ be any restriction-stable presheaf of quantum states: for each $U\subseteq I$, $\mathcal S(U)\subseteq\mathcal D(U)$ and for all $V\subseteq U$ one has $\mathrm{Tr}_{U\setminus V}\,\mathcal S(U)\subseteq\mathcal S(V)$ (partial trace on the discarded factor is a common choice for a restriction map). With restriction maps $r^U_{V}:S(U)\to S(V)$ for $V\subset U$, define the $0$-cochains by $C^0(\mathcal U,S):=\prod_{i} S(U_i)$ and the $1$-cochains $C^1(\mathcal U,S):=\prod_{i<j} S(U_{ij})$ with $U_{ij}:=U_i\cap U_j$. Let $H^0$ be the equalizer of the two families of restriction maps:
\[
H^0(\mathcal U,S)
:=\Big\{(s_i)_i\in\textstyle\prod_i S(U_i)\ :\ r^{U_i}_{U_{ij}}(s_i)=r^{U_j}_{U_{ij}}(s_j)\ \text{ for all }i<j\Big\}.
\]

When each $S(U)$ is an abelian group and the restrictions are homomorphisms, $\delta^0:C^0(\mathcal U,S)\to C^1(\mathcal U,S)$ is
\[
(\delta^0 s)_{ij}\;:=\; r^{U_i}_{U_{ij}}(s_i)\;-\; r^{U_j}_{U_{ij}}(s_j)\qquad (i<j),
\]
and
\[
H^0(\mathcal U,S)=\ker\delta^0.
\]

\begin{example}
Let $\mathcal V(U)=\mathrm{Herm}(H_U)$ equipped with the trace pairing $\langle X,Y\rangle = \mathrm{Tr}(XY)$. We consider the usual restriction given by partial trace. Writing $X=(X_i)_i\in\prod_i \mathcal V(U_i)$, for each overlap $U_{ij}$ we have
\[
(\delta^0 X)_{ij}
\;=\;
\Tr_{\,U_i\setminus U_{ij}}(X_i)\;-\;\Tr_{\,U_j\setminus U_{ij}}(X_j)
\;\in\; V(U_{ij}).
\]
Thus
\[
H^0(\mathcal U,\mathcal V)
=\Big\{(X_i)_i:\ \Tr_{\,U_i\setminus U_{ij}}(X_i)=\Tr_{\,U_j\setminus U_{ij}}(X_j)\ \text{for all }i<j\Big\}.
\]
\end{example}

\vskip0.3cm
The map
\[
j_{\mathcal U}:\ \mathcal S(I)\longrightarrow H^0(\mathcal U,\mathcal S),\qquad
\rho\longmapsto \bigl(\rho|_{U_i}\bigr)_i
\]
encodes the sheaf axiom on $(\mathcal U,\mathcal S)$. Its failure appears in two distinct modes:
\begin{HLblock}
\begin{enumerate}[label=\textbf{(\Alph*)},leftmargin=*]
\item\label{A}\textbf{Gluing failure:} $j_{\mathcal U}$ is not surjective. There exist compatible marginals $\{\rho_i\in\mathcal S(U_i)\}_i$ (i.e. agreeing on all overlaps) which admit no global $\rho\in\mathcal S(I)$ with $\mathrm{Tr}_{I\setminus U_i}\rho=\rho_i$ for every $i$.
\item\label{B}\textbf{Non-uniqueness:} $j_{\mathcal U}$ is not injective. There exist two distinct global states
$\rho_1\neq\rho_2\in\mathcal S(I)$ with identical marginals on every $U_i$.
\end{enumerate}
\end{HLblock}
Whether \ref{A} or \ref{B} occurs depends on the choice of $\mathcal S$. When $\mathcal S$ is a sheaf, both pathologies are absent by definition. In physically relevant presheaves, however, entanglement forces one (or both) to appear. For instance, for $\mathcal S=\mathcal D$ and $I=\{1,2\}$ the four Bell states $\ket{\Psi^\pm}=\frac{\ket{01}\pm\ket{10}}{\sqrt{2}},\ket{\Phi^\pm}=\frac{\ket{00}\pm\ket{11}}{\sqrt{2}}$ share the same single–site marginals (\(\Tr_{\{i\}}\ket{\Psi^\pm}\bra{\Psi^\pm}=\Tr_{\{i\}}\ket{\Phi^\pm}\bra{\Phi^\pm}=\frac{\mathbf 1}{2},~i\in\ I\)), so the restriction $\mathcal D(I)\to\mathcal D(\{1\})\times\mathcal D(\{2\})$ is not injective. This is an instance of \ref{B}. For the broader theme of uniqueness of state extensions, see \cite{25f8c26d-c81c-39c1-9128-a71146ac84ff,33675660-3119-37a8-a5d2-fa5a40dfb227,7f2d66a9-2d0a-3ab6-8756-1cda100696a6}.

An instructive example occurs for the presheaf $P\subset D$ of pure states. If $\mathcal U=\{U_i\}_{i=1}^m$ is a partition of $I$ (i.e.\ the $U_i$ are disjoint and $\bigsqcup_i U_i=I$), then any family of pure states $\rho_i\in P(U_i)$ glues uniquely to the tensor product state $\rho=\bigotimes_{i=1}^m \rho_i$. No overlap-compatibility conditions arise because $U_i\cap U_j=\varnothing$. By contrast, for a genuine cover with nontrivial overlaps, requiring a \emph{global pure} extension is restrictive: if a reduced state on an overlap $U_{ij}$ is pure, then the corresponding local vector must factor across the cut $U_{ij}\mid (U_i\setminus U_{ij})$, and a further $U(1)$-valued Čech cocycle records the residual phase obstruction. See Lemma~\ref{lem:pure-ordered} for the precise statement.

\begin{proposition}\label{prop:pure-fail}
Let $\mathcal U=\{U_i\}_{i=1}^m$ be a partition of $I$. Given pure local states $\rho_i=|\psi_i\rangle\!\langle\psi_i|\in P(U_i)$, there exists a unique global pure state $\rho=|\Psi\rangle\!\langle\Psi|\in P(I)$ with $\rho|_{U_i}=\rho_i$ for all $i$, namely $|\Psi\rangle=\bigotimes_{i=1}^m |\psi_i\rangle$. Conversely, for a global pure state $\rho=|\Psi\rangle\!\langle\Psi|$ the following are equivalent: (i) $\rho|_{U_i}$ is pure for every $i$; (ii) $|\Psi\rangle=\bigotimes_{i=1}^m |\psi_i\rangle$ (a product across the partition).
\end{proposition}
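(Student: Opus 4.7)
The plan is to reduce the proposition to an iterative application of the Schmidt decomposition across each bipartition $(U_i,\,I\setminus U_i)$.

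First I would settle existence by setting $|\Psi\rangle := \bigotimes_{i=1}^m |\psi_i\rangle$ and observing that $\Tr_{I\setminus U_i}(|\Psi\rangle\langle\Psi|) = |\psi_i\rangle\langle\psi_i|$, since partial traces of product vectors factor on the nose. The same calculation delivers (ii)$\Rightarrow$(i) in the converse, so only the two uniqueness-type directions remain.

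The core lemma I would prove is: if $|\Psi\rangle \in H_I$ is a unit vector whose marginal $\Tr_{I\setminus U_i}(|\Psi\rangle\langle\Psi|)$ is a rank-one projector for some fixed $i$, then $|\Psi\rangle$ factors across the bipartition $(U_i,\,I\setminus U_i)$. For this I would invoke the Schmidt decomposition
\[
|\Psi\rangle = \sum_k \sqrt{\lambda_k}\,|\alpha_k\rangle_{U_i} \otimes |\beta_k\rangle_{I\setminus U_i},
\]
for which the $U_i$-marginal equals $\sum_k \lambda_k\,|\alpha_k\rangle\langle\alpha_k|$; purity forces a single nonzero $\lambda_k$, giving $|\Psi\rangle = |\alpha\rangle_{U_i}\otimes|\beta\rangle_{I\setminus U_i}$. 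If additionally the marginal equals $|\psi_i\rangle\langle\psi_i|$, then $|\alpha\rangle$ agrees with $|\psi_i\rangle$ up to a phase that can be absorbed into $|\beta\rangle$.

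To promote this bipartite factorization to the full product structure I would induct on the number of patches. The first application of the lemma produces $|\Psi\rangle = |\psi_1\rangle_{U_1}\otimes|\phi_1\rangle_{I\setminus U_1}$. The key observation is
\[
\Tr_{I\setminus U_2}(|\Psi\rangle\langle\Psi|) = \Tr_{(I\setminus U_1)\setminus U_2}(|\phi_1\rangle\langle\phi_1|),
\]
since $U_1 \subseteq I\setminus U_2$ and $\Tr_{U_1}(|\psi_1\rangle\langle\psi_1|) = 1$. The remaining patches $\{U_2,\dots,U_m\}$ partition $I\setminus U_1$ and the $U_2$-marginal is still pure by hypothesis, so the lemma applied to $|\phi_1\rangle$ peels off $|\psi_2\rangle$. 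Iterating down to $U_m$ yields $|\Psi\rangle = \bigotimes_{i=1}^m |\psi_i\rangle$, simultaneously settling uniqueness in the first assertion and (i)$\Rightarrow$(ii) in the converse.

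The one point I expect to require a little care is the bookkeeping of phases across successive Schmidt steps: each factorization is only determined up to a $U(1)$ phase on each tensor factor, and I would need to verify that these can be collected consistently so that the recovered local factors agree with the prescribed $|\psi_i\rangle$ rather than merely with the projector $|\psi_i\rangle\langle\psi_i|$. Since the statement concerns projectors this is cosmetic, but it should be made explicit to align the argument with the ray-level uniqueness claim.
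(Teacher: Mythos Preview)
Your argument is correct. The paper states this proposition without proof, treating it as elementary; the only place the reasoning surfaces is in the last sentence of the proof of Lemma~\ref{lem:pure-ordered}, where purity of the marginals is said to force Schmidt rank~$1$ across each cut and hence a full tensor product---precisely your core lemma applied iteratively.
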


Our approach is entirely in the density-matrix (operator) formalism: we pose gluing as a problem for a presheaf of state spaces and measure its failures by cohomological obstructions \(R^0\) (Definition \ref{def:compat-defect}) for non–uniqueness \ref{B} and the local entanglement groups \(E^q\) (Definition \ref{def:Eq-ordered-cosimplicial}) for within-patch multipartite content \ref{A} with operational (separable-witness) certificates. This complements the spectral/representation theoretic marginal program, where necessary constraints on compatible spectra are obtained via moment map \cite{guillemin1982convexity,kirwan1984convexity,berenstein2000coadjoint} and via marginal spectra analyses \cite{Christandl2006,2014CMaPh.332....1C}. For a math treatment linking quantum marginals to projections of coadjoint/orbital measures, see Collins–McSwiggen \cite{Collins:2021xyq}. Our approach is orthogonal to these methods. Conceptually, our use of presheaves echoes the sheaf-theoretic account of contextuality/nonlocality as obstructions to global sections \cite{abramsky2011sheaf}, though here the presheaf consists of density operators. More specifically, our obstruction $R^0$ measures the degree of non-uniqueness, while the invariants $E^q$ capture the local entanglement that either prevents or allows gluing. Through the witness characterization, these quantities are equivalent to determining whether certain pairings with separable witnesses vanish or not. This is discussed from the perspective of quantum information theory. The use of entanglement witnesses as separating functionals for the cone of separable states goes back to the Horodecki criterion \cite{Horodecki:1996nc}, with systematic expositions by Terhal \cite{2001quant.ph..1032T} and G{\"u}hne–T{\'o}th \cite{Guhne:2008qic}.

\vskip0.3cm
\noindent
\textbf{Witness presheaf and obstructions.}
We encode entanglement by witnesses. For $U\subseteq I$ let the witness cone be
\begin{equation}\label{eq:Witt}
   \mathsf{Wit}(U)
   :=\bigl\{W=W^\dagger\in\End(H_U)\ \big|\ 
       \Tr(W\sigma)\ge0\ \ \forall\,\sigma\in\Sep(U)\bigr\},
\end{equation}
and put $\mathcal W(U):=\mathrm{span}_{\mathbb R}\mathrm{Wit}(U)$. Here \(\Sep(U)\) is the set of all separable states on $U$. For an inclusion \(V\subseteq U\) define the restriction
\begin{equation}\label{eq:w-restr}
   r^U_{V}:\mathcal W(U)\to\mathcal W(V),\qquad
   r^U_{V}(W):=\Tr_{U\setminus V}\!\bigl[W\,(\mathbf 1_V\otimes\tau_{U\setminus V})\bigr],
\end{equation}
with a fully separable state \(\tau\) on the traced factor. We fix, once and for all, a sitewise faithful state $\{\tau_j\}_{j\in I}$ and set $\tau_S := \bigotimes_{j\in S}\tau_j$ for every $S\subset I$. With this multiplicative rule, we see that the witness restriction
$r^{U}_{V}(W):=\Tr_{U\setminus V}\!\big[\,W\,(\mathbf{1}_V\otimes \tau_{U\setminus V})\big]$
satisfies the axiom: for $U_3\subset U_2\subset U_1$,
\[
r^{U_2}_{U_3}\circ r^{U_1}_{U_2} \;=\; r^{U_1}_{U_3}.
\]
Indeed, functoriality follows from $\tau_{U_1\setminus U_3}=\tau_{U_2\setminus U_3}\otimes \tau_{U_1\setminus U_2}$ and the cyclicity of the trace. This fixed choice will be used throughout.

We use $W$ as functionals on $\mathcal V(U):=\mathrm{Herm}(H_U)$ via the trace pairing. In finite dimension, one has $\mathcal{W}(U)=\mathcal V(U)$ (Proposition~\ref{prop:wit-separate-span}), and the linear Čech complex contracts in positive degrees by product state extensions (Proposition~\ref{prop:degree0}). Consequently, $H^{>0}(\mathcal U, \mathcal{W})=0$ for every finite cover $\mathcal U$, and there is no nontrivial group-valued ``existence obstruction'' at degree~1. Existence of a global state is instead a cone-feasibility question certified by duality: the certificates for positive semi-definite (PSD) and separable infeasibility are given in Propositions~\ref{prop:state-level-extension} and~\ref{prop:farkas}, which are extensions of Farkas's lemma. Uniqueness is measured by the kernel $R^0=\ker j$ of the degree–0 map. Operationally, vanishing of a Čech class built from local states is equivalent to pairing to zero against every compatible family in \(C^\bullet(\mathcal U,\mathcal W)\) (Theorem~\ref{thm:witness-vanish}).

For each region $U$ we view $\mathcal S(U)\subset\mathcal D(U)$ as a presheaf of states with restriction by partial trace $r^U_V:\mathcal S(U)\to\mathcal S(V)$ for $V\subset U$. The corresponding witness cone (Definition \ref{def:dualcone})
\[
\mathrm{Wit}(U):=C^*_{\mathrm{sep}}(U)
\]
and its real linear span is $W(U):=\operatorname{span}_{\mathbb{R}}\mathrm{Wit}(U)$.
Witnesses pull back along inclusions by the adjoints $r_U^V{}^{\dagger}$ so that the trace pairing is natural: $\langle r_U^V{}^{\dagger}W,\,X\rangle=\langle W,\,r_U^V X\rangle$.

Within a single patch, the ancilla column built from state insertion and single-slot resets has cohomology $E^q(U)$. However, the signed stabilize--reset pairing does not detect nontriviality of a cohomology class: for any $Y$ one has
\[
\sum_{m=0}^{q+1}(-1)^m \Tr\bigl(W\, d_q^{(m)}(Y)\bigr)=\Tr\bigl(W\,\delta_E(Y)\bigr),
\]
so it vanishes whenever $Y$ is $\delta_E$--closed (Definition~\ref{def:witness-trivial}). Operational entanglement detection is therefore captured
by LED$(q)$, which tests whether the obstruction $\delta_E(Y)$ can be made negative against a separability witness. For the reindexed differential, LED$(q)$ exhibits a parity collapse: it is impossible for even $q\ge 1$, while for odd $q\ge 1$ it is equivalent to an ordinary $q=0$ witness test (Proposition~\ref{prop:LED-tau-independence}).

\vskip0.3cm
\noindent
\textbf{Sheafification.}
Sheafification of a presheaf $\mathcal S$ produces a sheaf $\mathcal S^\#$ and a natural map $\eta:\mathcal S\to\mathcal S^\#$ with the property that, on any fixed cover $\mathcal U=\{U_i\}$, the degree-$0$ restriction $j^\#_{\mathcal U}:\mathcal S^\#(U)\!\xrightarrow{\;\sim\;} H^0(\mathcal U,\mathcal S^\#)$ is bijective. Thus all gluing and separatedness issues visible on $\mathcal U$ are removed, and any two global states that are indistinguishable by all restrictions to $\mathcal U$ (and its refinements) become equal in $\mathcal S^\#$. However, this does not mean that the underlying state is separable: sheafification erases only the ``global vs.\ local'' signature of entanglement. Patchwise multipartite structure (as measured, e.g., by the local entanglement groups $E^q$ (Definition \ref{def:Eq-ordered-cosimplicial}) attached to a single patch) may still be present, but it is invisible to the local tests. Therefore $\overline{E}^q=\varinjlim_{\mathcal U} E^q(\mathcal U)$ can remain non‑zero after sheafification (Theorem \ref{thm:completeness-weak}). In particular, every computation that depends solely on local projections and their linear combinations (e.g.\ $\sum_i \Tr[W_i\,\rho|_{U_i}]$) factors through $\eta$ and is therefore preserved under sheafification on that cover.

A presheaf on a finite cover behaves as a distributed data structure: each $U_i$ stores a local record. Sheafification performs a natural consistency completion and a quotient by observational equivalence. Here any two records that produce the same observable outcomes under projections are considered equivalent. After sheafification, the remaining invariants live in the classical side. By contrast, non-classical features we study here reside in the presheaf picture prior to sheafification.

\vskip0.3cm
\noindent
\textbf{Quantum Entanglement Index.}
Section~\ref{sec:obstruction} organizes states and witnesses into a presheaf and expresses ``global-from-local'' failures as Čech classes. When the local data vary smoothly over a parameter manifold $X$, the obstruction acquires a differential geometric description. Section~\ref{sec:diff-form-obstruction} then explores these obstructions from the perspective of differential geometry on parameter space: given a smooth family of full‑rank states \(\rho:X\to\mathcal D_{\mathrm{full}}(H)\), we pass to the pullback amplitude bundle
\[
E_\rho:=\{(x,W)\in X\times GL(H): WW^\dagger=\rho(x)\}\to X,
\]
which is a principal \(U(r)\)–bundle.  In this setting the Čech cocycle built from local amplitudes and their unitary transitions is compared with de~Rham cohomology: for a smooth $A$–parallel witness field $W$, the ordered pairing equals the Chern–Weil form (Theorem \ref{thm:global-uhl}):
\[
\big\langle W,\ [\delta^k(u)]_{\check C}\big\rangle
\;=\;
\Big[\,(2\pi i)^{-k}\,\Tr\big(WF_A^{\,k}\big)\,\Big]_{\!dR},
\]
so the obstruction class is represented on $X$ by the closed forms $(2\pi i)^{-k}\Tr(WF_A^{k})$. Physically, it can be understood as an extension of Berry/TKNN number (see Section \ref{sec:phys}).

This geometric counterpart converts the entanglement obstructions into characteristic classes on $X$ and furnishes the input for the index‑theoretic refinement that follows in Section~\ref{sec:QGL}. Let $(E,A)$ be a Hermitian bundle with unitary connection and let $W\in\Gamma(\mathrm{End}(E))$ be an $A$–parallel witness field on $X$. Writing $S:=\mathrm{sgn}(W)$ and $F_A$ for the curvature, the \emph{Quantum Entanglement Index} (QEI) (Definition~\ref{def:quantum-index}) records the difference of Dirac indices seen by the $S$–positive/negative sectors:
\[
\mathrm{Ind}_S\big(D_X\!\otimes E\big)
:= \mathrm{ind}\big(D_X\!\otimes E_+\big) - \mathrm{ind}\big(D_X\!\otimes E_-\big)
= \Big\langle \widehat A(TX)\wedge \mathrm{Tr}\big(Se^{F_A/2\pi i}\big),[X]\Big\rangle .
\]
Thus an $A$–parallel witness grades the theory by an integer, and this grading is topological on parameter space:
\begin{figure}[H]
\centering
\begin{tikzcd}[column sep=huge]
\mathsf{QGL}(X):=\{(E,A,W):D_AW=0\}
\arrow[r, dashed]
& \mathcal{D}\mathrm{-mod}(\mathrm{Bun}_G)
\end{tikzcd}
\end{figure}

\noindent
We use this index to grade automorphic data: to a spectral datum $(E,A,W)$ we associate an automorphic object together with its $\mathbb Z$–degree $\mathrm{Ind}_S$. In this way, the quantum entanglement refinement overlays the classical automorphic category with a natural grading determined by the witness. A Hecke functor then has a quantized effect on the grading. If $H_{x,\lambda}$ is a Hecke modification of coweight $\lambda$ at $x$, its action shifts the QEI by the signed charge $\langle S,\lambda\rangle$:
\[
\mathrm{Ind}_S\!\left(H_{x,\lambda}\,\mathsf{Aut}_{(E,W)}\right)
=\mathrm{Ind}_S\!\left(\mathsf{Aut}_{(E,W)}\right)+\langle S,\lambda\rangle .
\]
The correspondence can be summarized in the following diagram:
\begin{figure}[H]
\centering
\begin{tikzcd}[column sep=large]
& \mathrm{Hecke}_x \arrow[dl,"p_1"'] \arrow[dr,"p_2"] & \\
\mathsf{Aut}_{(E,W)} \arrow[rr, dashed, "H_{x,\lambda}"] \arrow[d, "\mathrm{Ind}_S" left] 
&& \mathsf{Aut}_{(E,W)} \arrow[d, "\mathrm{Ind}_S"] \\
\mathbb Z \arrow[rr, "{+\langle S,\lambda\rangle}"] && \mathbb Z
\end{tikzcd}
\end{figure}
\noindent
This expresses that Hecke modifications induce discrete jumps between entanglement sectors, which can be interpreted as an entanglement-induced quantum phase transition in quantum many-body systems. The automorphic object is transformed while its degree is shifted by $\langle S,\lambda\rangle$. The index also admits a differential–geometric realization.

\vskip0.3cm
\noindent
\textbf{Bridging to geometric Langlands.}
We begin with fundamental structures of the geometric Langlands correspondence. It is expected that flat ${}^{L}\!G$–local systems on $X$ correspond to $\mathcal{D}$–modules on $\mathrm{Bun}_G$:
\begin{figure}[H]
\centering
\begin{tikzcd}[column sep=large]
\mathrm{Loc}_{\,^{L}\!G}(X) \arrow[r, dashed, "\mathrm{GL}"] 
& \mathcal{D}\mathrm{-mod}(\mathrm{Bun}_G)
\end{tikzcd}
\end{figure}
\noindent
Then the Hecke correspondence at a point $x\in X$ gives endofunctors of the automorphic category by convolution with spherical kernels. 
Concretely, for every $V\in\mathrm{Rep}({}^{L}\!G)$ one has the functor
\[
H_{x,V}\;=\;p_{2!}\!\left(p_1^{*}(-)\otimes\mathcal S_{V}\right),
\]
assembled from the correspondence:
\begin{figure}[H]
\centering
\begin{tikzcd}
& \mathrm{Hecke}_x \arrow[dl,"p_1"'] \arrow[dr,"p_2"] & \\
\mathrm{Bun}_G \arrow[rr, dashed,
"{H_{x,V}=\,p_{2!}\big(p_1^{*}(-)\otimes\mathcal S_{V}\big)}"'] 
&& \mathrm{Bun}_G
\end{tikzcd}
\end{figure}

\noindent
For a local system $E$, the associated automorphic object $\mathrm{Aut}_E$ is a Hecke eigensheaf: for every $x\in X$ and $V\in\mathrm{Rep}({}^{L}\!G)$,
\[
H_{x,V}(\mathrm{Aut}_E)\;\simeq\;(V_{E_x})\otimes \mathrm{Aut}_E,
\]
so the fiber $V_{E_x}$ plays the role of the eigenvalue on the automorphic side.

\bigskip
\noindent
Finally, the spherical geometric Satake equivalence identifies the geometric inputs for the kernels: $G(\mathcal O)$–equivariant (perverse) sheaves (or $\mathcal{D}$–modules) on the affine Grassmannian are equivalent to $\mathrm{Rep}({}^{L}\!G)$. Under this identification, the kernel $\mathcal S_V$ corresponds to the representation $V$.
\begin{figure}[H]
\centering
\begin{tikzcd}[column sep=huge]
\mathsf{Perv}_{G(\mathcal O)}(\mathrm{Gr}_G) 
\arrow[r, "\mathrm{Satake}", "\simeq"'] 
& \mathrm{Rep}\!\left(^{L}\!G\right)
\end{tikzcd}
\end{figure}

\noindent
In our quantum refinement (Section~\ref{sec:QGL}), the same Hecke functors would act on automorphic data equipped with the integral grading given by the QEI, and Hecke modifications of coweight~$\lambda$ shift the degree by $\langle S,\lambda\rangle$. While the traditional framework remains intact, our approach here introduces a natural integer grading on the automorphic side that is sensitive to the entanglement response. 

The concept is further investigated in the context of topological matter \cite{2026arXiv260113467I}, where the correspondence is demonstrated numerically.

\section{\label{sec:obstruction}Quantum entanglement from the viewpoint of topology}
\subsection{\label{sec:presheaf}Presheaf of quantum states}
For each finite subsystem $U \subseteq I$, let
\[
   \mathcal V(U) := \mathrm{Herm}(H_U)
\]
be the real vector space of Hermitian operators on $H_U$, equipped with the trace pairing $\langle X,Y\rangle = \mathrm{Tr}(XY)$.

\begin{definition}\label{def:dualcone}
For each finite subsystem $U\subseteq I$, define the separable cone
\[
  C_{\mathrm{sep}}(U)\ :=\ \mathrm{cone}\{\rho_{\mathrm{sep}}\in\mathcal D_{\mathrm{sep}}(U)\},
\]
and its dual cone
\[
  C_{\mathrm{sep}}^\ast(U)\ :=\ \{\,W\in \mathcal V(U):\ \Tr(W\sigma)\ge 0\ \ \forall\,\sigma\in C_{\mathrm{sep}}(U)\,\}.
\]
Elements of $C_{\mathrm{sep}}^\ast(U)$ are entanglement witnesses: $W$ detects entanglement in
$\rho$ if $\Tr(W\rho)<0$.
\end{definition}

\noindent
$C_{\mathrm{sep}}(U)$ is closed, convex, and pointed, and partial trace is positive and sends $C_{\mathrm{sep}}(U)$ into $C_{\mathrm{sep}}(V)$ for $V\subseteq U$. Thus $(V,C_{\mathrm{sep}})$ is a presheaf of ordered vector spaces.

\begin{proposition}\label{prop:wit-separate-span}
For every finite $U\subseteq I$:
\begin{enumerate}\itemsep.25em
\item \(\mathrm{Int}\,C_{\mathrm{sep}}^\ast(U)\neq\varnothing\) and \(\mathbf 1\in \mathrm{Int}\,C_{\mathrm{sep}}^\ast(U)\).
\item \(\displaystyle \bigcap_{W\in C_{\mathrm{sep}}^\ast(U)} \ker\langle W,\cdot\rangle \;=\; \{0\}\).
Equivalently,
\[
  J(U)\ :=\ \{X\in \mathcal V(U):\ \Tr(WX)=0\ \forall\,W\in C_{\mathrm{sep}}^\ast(U)\}\ =\ \{0\}.
\]
\item \(\displaystyle \operatorname{span}_{\mathbb R} C_{\mathrm{sep}}^\ast(U)\;=\;\mathcal V(U)\).
In particular, \(W(U):=\operatorname{span}_{\mathbb R}\mathrm{Wit}(U)=\mathcal V(U)\).
\end{enumerate}
\end{proposition}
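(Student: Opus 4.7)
The plan is to prove (1) by a short perturbation argument, then to deduce (3) from (1) via an elementary fact about convex cones in finite dimension, and finally to obtain (2) from (3) using non-degeneracy of the trace pairing. Thus (1) is really the only step with content; parts (2) and (3) are essentially free consequences.

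For (1), the membership $\mathbf 1\in C_{\mathrm{sep}}^\ast(U)$ is immediate: every $\sigma\in C_{\mathrm{sep}}(U)$ is a nonnegative multiple of a separable density, so $\Tr(\mathbf 1\,\sigma)=\Tr\sigma\ge 0$. To upgrade this to an interior point I would take any $X\in \mathcal V(U)$ with $\|X\|_{\mathrm{op}}<1$ and verify $\mathbf 1+X\in C_{\mathrm{sep}}^\ast(U)$: writing $\sigma=t\rho$ with $t\ge 0$ and $\rho$ a separable density operator, Hölder gives $|\Tr(X\rho)|\le \|X\|_{\mathrm{op}}\Tr\rho<1$, hence $\Tr\bigl((\mathbf 1+X)\sigma\bigr)=t(1+\Tr(X\rho))\ge 0$. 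Equivalently, one may observe that $C_{\mathrm{sep}}^\ast(U)$ contains the PSD cone (which is self-dual) and $\mathbf 1$ is already interior there.

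For (3), I would invoke the elementary fact that a convex cone $K$ in a finite-dimensional real vector space $V$ with nonempty interior satisfies $K-K=V$: if $p$ is interior to $K$, some ball $B$ around $0$ lies in $K-p$, so $B\subset K-K$, and since $K-K$ is a linear subspace it must equal $V$. Applied to $K=C_{\mathrm{sep}}^\ast(U)$ with $p=\mathbf 1$, this gives $\operatorname{span}_{\mathbb R} C_{\mathrm{sep}}^\ast(U)=\mathcal V(U)$. The identification $\mathrm{Wit}(U)=C_{\mathrm{sep}}^\ast(U)$ is immediate by comparing the two defining positivity conditions on separable density operators and on the cone they generate.

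For (2), if $X\in J(U)$ then $\Tr(WX)=0$ for every $W\in C_{\mathrm{sep}}^\ast(U)$, hence by $\mathbb R$-linearity for every element of $\operatorname{span}_{\mathbb R} C_{\mathrm{sep}}^\ast(U)$, which by (3) is all of $\mathcal V(U)$. Non-degeneracy of the trace pairing on $\mathrm{Herm}(H_U)$ (test against rank-one projectors onto eigenspaces of $X$) then forces $X=0$. I do not anticipate a serious obstacle; the only point requiring care is fixing the topology on $\mathcal V(U)$ used in ``interior'' for part (1), for which the operator-norm topology is the natural choice and, being finite-dimensional, is equivalent to any other.
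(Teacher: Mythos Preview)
Your proposal is correct and essentially mirrors the paper's argument: both recognize that (1) is the only substantive step (proved by the same perturbation $\mathbf 1+X$ remaining in the dual cone for small $X$), with (2) and (3) being free consequences of having an interior point. The only cosmetic difference is the order of deductions: the paper derives (2) directly from (1) via $W=\mathbf 1+tX$ and then (3) from full-dimensionality, whereas you go (1)$\to$(3)$\to$(2) via non-degeneracy of the trace pairing; both routes are equivalent.
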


\begin{proof}
(1) For every nonzero \(\sigma\in C_{\mathrm{sep}}(U)\) we have \(\Tr(\sigma)>0\). Hence \(\Tr(\mathbf 1\cdot\sigma)>0\) for all \(\sigma\in C_{\mathrm{sep}}(U)\setminus\{0\}\), which by the standard characterization of interiors of dual cones (in finite dimension) places \(\mathbf 1\) in \(\mathrm{Int}\,C_{\mathrm{sep}}^\ast(U)\).

(2) Let \(X\neq 0\). If \(\Tr(\mathbf 1\,X)\neq 0\) then \(\mathbf 1\) already separates \(X\). Otherwise, choose \(|t|\) small and set \(W:=\mathbf 1+tX\). Since \(\mathbf 1\in \mathrm{Int}\,C_{\mathrm{sep}}^\ast(U)\), such \(W\) still lies in \(C_{\mathrm{sep}}^\ast(U)\), while \(\Tr(WX)=t\,\Tr(X^2)\neq 0\) (trace pairing on Hermitians). Thus the intersection of kernels is \(\{0\}\).

(3) A convex cone with nonempty interior is full–dimensional, hence the linear span of \(C_{\mathrm{sep}}^\ast(U)\) equals \(\mathcal V(U)\). Identifying \(\mathcal V(U)\cong \mathcal V(U)^\ast\) by the trace pairing gives the claim for \(W(U)\) as well.
\end{proof}

Consequently, the separability test is as follows:
\[
   \rho\in\mathcal D(U)\ \text{is separable}
   \quad\Longleftrightarrow\quad
   \Tr(W\rho)\ge 0 \ \ \text{for all } W\in C_{\sep}^\ast(U).
\]
Separable states are exactly those that are nonnegative on all witnesses. Accordingly, vanishing of obstruction classes is characterized by pairing to zero with all compatible families in the linear span $W(\bullet)$ (Theorem~\ref{thm:witness-vanish}).

\subsection{Entanglement obstructions}
\begin{definition}
\label{def:compat-defect}
Let $\mathcal U=\{U_i\}_{i\in\mathcal I}$ be a finite cover. For a family of local states $\sigma=(\sigma_i)_{i\in\mathcal I}\in C^0(\mathcal U,\mathcal D)$ define its (linear) compatibility defect
\[
\Delta(\sigma):=\delta\sigma\ \in\ C^1(\mathcal U,\mathcal V).
\]
Let $j:\mathcal V(I)\to H^0(\mathcal U, \mathcal V)$ be the degree-0 restriction map $X\mapsto (X|_{U_i})_i$. We define
\[
  Q^{0}(\mathcal U):=\frac{H^0\!\big(\mathcal U,\mathcal V\big)}{\,j\big(\mathcal V(I)\big)}\!,
  \qquad
  R^{0}(\mathcal U):=\ker\big(j:\mathcal V(I)\to H^0(\mathcal U,\mathcal V)\big).
\]
\end{definition}

\begin{proposition}
\label{prop:degree0}
There exists a Čech contracting homotopy built from product–state extensions $e_{VU}:X\mapsto X\otimes\tau_{U\setminus V}$. Hence $H^{k>0}(\mathcal U,\mathcal V)=0$ and $j$ is surjective, so $Q^0(\mathcal U)=0$.

Moreover, for a local-state family $\sigma$ one has
\[
\Delta(\sigma)=0\ \Longleftrightarrow\ \exists\,X\in \mathcal V(I)\ \text{with}\ X|_{U_i}=\sigma_i\quad\forall i.
\]
\end{proposition}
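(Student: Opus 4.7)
The plan is to establish exactness of the augmented Čech complex
\[
0\to\mathcal V(I)\xrightarrow{\,j\,}C^0(\mathcal U,\mathcal V)\xrightarrow{\,\delta^0\,}C^1(\mathcal U,\mathcal V)\xrightarrow{\,\delta^1\,}\cdots
\]
by constructing a contracting homotopy from the product-state extensions $e_{V,U}:X\mapsto X\otimes\tau_{U\setminus V}$, which satisfy $r^U_V\circ e_{V,U}=\id_{\mathcal V(V)}$ because $\Tr(\tau_{U\setminus V})=1$. Exactness at positive degrees gives $H^{k>0}(\mathcal U,\mathcal V)=0$, while exactness at degree $0$ says $j$ surjects onto $H^0(\mathcal U,\mathcal V)$, so $Q^0(\mathcal U)=0$; the ``moreover'' clause is then immediate, as $\Delta(\sigma)=0$ says precisely that $\sigma\in\ker\delta^0=H^0(\mathcal U,\mathcal V)=j(\mathcal V(I))$.

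First I would prove degree-$0$ surjectivity directly with an inclusion-exclusion ansatz. Given a compatible family $\sigma=(\sigma_i)\in H^0(\mathcal U,\mathcal V)$, the iterated restrictions $\sigma_S:=r^{U_i}_{U_S}(\sigma_i)$ for any $i\in S$ are independent of the choice by compatibility (which even propagates the common trace across empty overlaps $U_{ij}=\varnothing$ via $\mathcal V(\varnothing)=\mathbb R$). I would then set
\[
X\;:=\;\sum_{\varnothing\neq S\subseteq\mathcal I} (-1)^{|S|+1}\,e_{U_S,\,I}(\sigma_S)
\;=\;\sum_{S} (-1)^{|S|+1}\,\sigma_S\otimes\tau_{I\setminus U_S}\;\in\;\mathcal V(I),
\]
and compute $X|_{U_k}=r^I_{U_k}(X)$ using $\Tr_{U_S\setminus U_k}(\sigma_S)=\sigma_{S\cup\{k\}}$ together with $\tau_{I\setminus U_S}=\tau_{U_k\setminus U_S}\otimes\tau_{I\setminus(U_k\cup U_S)}$. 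Each summand collapses to $e_{U_{S\cup\{k\}},\,U_k}(\sigma_{S\cup\{k\}})$, and regrouping by $T:=S\cup\{k\}$ shows that every non-singleton $T\ni k$ contributes $[(-1)^{|T|+1}+(-1)^{|T|}]\,e_{U_T,\,U_k}(\sigma_T)=0$, leaving only the $T=\{k\}$ term, which equals $\sigma_k$.

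For acyclicity in positive degrees I would reduce to a constant-coefficient calculation via the linear splitting $\Herm(H_j)=\mathbb R\tau_j\oplus\Herm(H_j)^{0}$, with $\Herm(H_j)^{0}:=\ker\bigl(X\mapsto\Tr X/\Tr\tau_j\bigr)$. Tensoring yields $\mathcal V(U)=\bigoplus_{S\subseteq U}\mathcal V^{0}(S)$ where $\mathcal V^{0}(S):=\bigotimes_{j\in S}\Herm(H_j)^{0}$, and both partial trace and the extensions $e_{V,U}$ preserve this grading: $r^U_V$ is the identity on the $S$-summand when $S\subseteq V$ and zero otherwise, while $e_{V,U}$ is the identity on each admissible $\mathcal V^{0}(S)$. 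This exhibits $\mathcal V=\bigoplus_S\mathcal V_S$ as a direct sum of sub-presheaves with $\mathcal V_S(U)=\mathcal V^{0}(S)$ for $S\subseteq U$ and $0$ otherwise; then $C^\bullet(\mathcal U,\mathcal V_S)$ is the simplicial cochain complex of the full simplex on $K_S:=\{i\in\mathcal I:S\subseteq U_i\}$ with constant coefficients $\mathcal V^{0}(S)$, acyclic in positive degrees via the classical cone contraction at any chosen $*_S\in K_S$. Summing over $S$ completes the vanishing. The main obstacle lies in this bookkeeping: one must verify that the tensor splitting intertwines partial traces and product-state extensions at every degree, so that $\mathcal V=\bigoplus_S\mathcal V_S$ holds as a decomposition of \emph{presheaves} and not merely of vector spaces; once secured, both the simplex contraction and the degree-$0$ inclusion-exclusion unwind to a contracting homotopy built from the $e_{V,U}$.
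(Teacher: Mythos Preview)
Your proposal is correct. The degree-$0$ inclusion--exclusion construction of a global $X$ with $X|_{U_k}=\sigma_k$ is exactly the paper's Step~3, and the ``moreover'' clause follows as you say.

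For the positive-degree vanishing you take a genuinely different route. The paper builds the contracting homotopy directly: it defines
\[
(h^p c)_{i_0\dots i_{p-1}}\;=\;\sum_{i\in\mathcal I} e^{U_{i_0\dots i_{p-1}}}_{\,U_{i_0\dots i_{p-1}i}}\!\bigl(c_{i_0\dots i_{p-1}i}\bigr)
\]
(adding one extra \v Cech index and extending back up by the product state), and verifies $\delta h+h\delta=\mathrm{id}$ from the identities $r\circ e=\mathrm{id}$ and $r\circ e=e\circ r$ on nested inclusions. You instead split $\mathcal V$ along the sitewise decomposition $\mathrm{Herm}(H_j)=\mathbb R\tau_j\oplus\mathrm{Herm}(H_j)^0$ into sub-presheaves $\mathcal V_S$ with constant value $\mathcal V^0(S)$ on $\{U:S\subseteq U\}$ and zero elsewhere, so that $C^\bullet(\mathcal U,\mathcal V_S)$ is the ordered simplicial cochain complex of the full simplex on $K_S=\{i:S\subseteq U_i\}$ with constant coefficients, which is acyclic by the cone contraction at any vertex (and trivially when $K_S=\varnothing$). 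Your approach is more structural and makes clear \emph{why} the complex contracts---each graded piece is a constant presheaf on a simplex---at the price of the compatibility bookkeeping you note; the paper's approach is more hands-on and delivers the single explicit homotopy built from the $e_{V,U}$ without any decomposition. Unwinding your cone contraction on each $K_S$ and summing over $S$ does reproduce the paper's $h^p$, so the two are ultimately equivalent.
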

\begin{proof}
Fix a finite cover $\mathcal U=\{U_i\}_{i=1}^m$ of $I$. For every inclusion $V\subset U$ define the linear extension
\[
e_V^U:V(V)\to V(U),\qquad e_V^U(X):=X\otimes \tau_{U\setminus V},
\]
where $\tau_{U\setminus V}:=\bigotimes_{j\in U\setminus V}\tau_j$ and $\Tr(\tau_{U\setminus V})=1$. Then the restriction $r_V^U=\Tr_{U\setminus V}$ satisfies
\[
r_V^U\circ e_V^U=\id_{V(V)}.
\]
Hence each restriction map is split surjective (the presheaf is flabby on this finite site).

Let $\check C^p(\mathcal U,V)=\prod_{i_0<\cdots<i_p}V(U_{i_0\cdots i_p})$ with the usual Čech differential $\delta$. Choose the distinguished index $1$ and define, for $p\ge 1$, a degree $(-1)$ map $h^p:\check C^p(\mathcal U,V)\to \check C^{p-1}(\mathcal U,V)$ by
\[
(h^p c)_{i_0\cdots i_{p-1}} :=
\begin{cases}
e^{U_{i_0\cdots i_{p-1}}}_{U_{1 i_0\cdots i_{p-1}}}\!\bigl(c_{1 i_0\cdots i_{p-1}}\bigr), & 1<i_0,\\[2mm]
0, & i_0=1,
\end{cases}
\]
where $U_{i_0\cdots i_{p-1}}:=U_{i_0}\cap\cdots\cap U_{i_{p-1}}$ and
$U_{1 i_0\cdots i_{p-1}}:=U_1\cap U_{i_0}\cap\cdots\cap U_{i_{p-1}}$. A standard computation using only $r\circ e=\id$ and the functoriality of the extensions shows that
\[
\delta h^p + h^{p+1}\delta=\id
\qquad\text{for all }p\ge 1.
\]
Therefore the Čech complex is contractible in positive degrees, so $H^{k>0}(\mathcal U,V)=0$.

In particular, the degree--$0$ restriction map $j:V(I)\to H^0(\mathcal U,V)$ is surjective, hence $Q_0(\mathcal U)=0$. Finally, for $\sigma\in \check C^0(\mathcal U,V)$ one has $\delta\sigma=0$ if and only if $\sigma$ is a $0$--cocycle, and then the contracting homotopy yields an $X\in V(I)$ with $X|_{U_i}=\sigma_i$ for all $i$.
\end{proof}

By non‑degeneracy of the trace pairing on $C^1(\mathcal U,\mathcal V)$, 
\[
\delta\sigma=0\quad\Longleftrightarrow\quad \langle W,\delta\sigma\rangle=0\ \ \text{for all }\,W\in C^1(\mathcal U,\,W(\bullet)).
\]

\subsection{Feasibility by duality}
\begin{proposition}
\label{prop:state-level-extension}
Let $\sigma=(\sigma_i)\in C^0(\mathcal U,\mathcal{D})$ be compatible ($\Delta(\sigma)=0$).

\emph{(PSD extension).} 
If there exist Hermitian operators $Y_i$ and $\alpha\in\mathbb R$ such that
\[
\Xi:=\sum_i r_i^{\!*}(Y_i)+\alpha\,\mathbf 1 \succeq 0, \qquad \sum_i \operatorname{Tr}(Y_i\,\sigma_i)+\alpha<0,
\]
then no global density matrix $\rho\in\mathcal{D}(I)$ realizes the marginals $\sigma_i$\footnote{For Hermitian operators $A,B$ we write $A\succeq B$ if $A-B$ is positive semidefinite, and $A\succ B$ if $A-B$ is positive definite. In particular $A\succeq 0$ means $A$ is PSD.}.

\emph{(Separable extension).} 
If there exist witnesses $W_i\in C^*_{\mathrm{sep}}(U_i)$ with
\[
\sum_i \operatorname{Tr}(W_i\,\sigma_i)<0,
\]
then no separable global state $\rho_{\mathrm{sep}}\in\mathcal{D}_{\mathrm{sep}}(I)$ realizes $\sigma$.
\end{proposition}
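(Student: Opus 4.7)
The plan is to treat both parts as one-direction Lagrangian certificates of infeasibility: in each case I will exhibit a nonnegative linear functional on the would-be global state whose value on any hypothetical realization equals the strictly negative quantity in the hypothesis, yielding a contradiction in a single trace pairing.

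For the PSD part, my first step is to unpack the adjoint of the partial-trace restriction under the trace pairing: $r_i^{*}:\mathcal V(U_i)\to\mathcal V(I)$ is the ampliation $Y\mapsto Y\otimes\mathbf 1_{I\setminus U_i}$, so that $\mathrm{Tr}\big(r_i^{*}(Y)\,\rho\big)=\mathrm{Tr}\big(Y\,r_i(\rho)\big)$ for every $\rho\in\mathcal V(I)$. I then assume, for contradiction, that some $\rho\in\mathcal D(I)$ realizes $r_i(\rho)=\sigma_i$ for all $i$; in particular $\mathrm{Tr}(\rho)=\mathrm{Tr}(\sigma_i)=1$. Pairing $\Xi$ with $\rho$, the ampliation identity converts $\mathrm{Tr}\big(r_i^{*}(Y_i)\rho\big)$ into $\mathrm{Tr}(Y_i\sigma_i)$ and the trace normalization turns the $\alpha\,\mathbf 1$ term into $\alpha$, giving
\[
\mathrm{Tr}(\Xi\rho)=\sum_i \mathrm{Tr}(Y_i\sigma_i)+\alpha<0
\]
by hypothesis; but $\Xi,\rho\succeq 0$ force $\mathrm{Tr}(\Xi\rho)\ge 0$, the desired contradiction.

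For the separable part, the key ingredient I will invoke is that partial trace sends fully separable states to fully separable states: if $\rho_\mathrm{sep}\in\mathcal D_\mathrm{sep}(I)$ is written as $\sum_k p_k\bigotimes_{j\in I}\rho_{k,j}$, then $r_i(\rho_\mathrm{sep})=\sum_k p_k\bigotimes_{j\in U_i}\rho_{k,j}\in\mathcal D_\mathrm{sep}(U_i)$. Consequently, if some $\rho_\mathrm{sep}$ realized the marginals, every $\sigma_i=r_i(\rho_\mathrm{sep})$ would itself be separable on $U_i$. Since by definition $W_i\in C_\mathrm{sep}^{*}(U_i)$ satisfies $\mathrm{Tr}(W_i\,\tau)\ge 0$ for every $\tau\in C_\mathrm{sep}(U_i)$, this yields $\mathrm{Tr}(W_i\sigma_i)\ge 0$ for each $i$, and summing contradicts $\sum_i \mathrm{Tr}(W_i\sigma_i)<0$.

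There is no serious obstacle here: both arguments reduce to single-line trace pairings, and the proposition delivers only the ``easy'' (sufficient) direction of Farkas-type certificates. The genuinely hard direction -- converse completeness, namely that whenever no extension exists some such certificate must exist -- will require proper conic duality (closedness and pointedness of $C_\mathrm{sep}$, Slater-type regularity on the PSD side), and should be the content of Proposition~\ref{prop:farkas}. For the present statement the only points that need checking are the ampliation identity for $r_i^{*}$ and closure of separability under partial trace, both of which are standard.
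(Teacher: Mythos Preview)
Your proof is correct and essentially the same as the paper's. The PSD argument is identical. For the separable part, the paper packages the witnesses into a single global witness $\widehat W:=\sum_i r_i^{*}(W_i)\in C^{*}_{\mathrm{sep}}(I)$ and pairs against $\rho_{\mathrm{sep}}$, whereas you push $\rho_{\mathrm{sep}}$ down through the partial traces and pair locally; these are the two sides of the same adjointness identity and amount to the same computation.
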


\begin{proof}
For each inclusion $U_i\subset I$, the restriction $r_i=r^I_{U_i}:\mathcal V(I)\to \mathcal V(U_i)$ is the partial trace, and its adjoint $r_i^{\!*}:\mathcal V(U_i)\to \mathcal V(I)$ is the extension $Z\mapsto Z\otimes \mathbf 1$ on the missing tensor factors. These satisfy the trace adjointness relation
\[
\Tr\big(r_i^{\!*}(Y)\,\rho\big) = \Tr\big(Y\, r_i(\rho)\big).
\]
Moreover, if $W\in C^*_{\mathrm{sep}}(U)$ is a separability witness, then $W\otimes \mathbf 1\in C^*_{\mathrm{sep}}(I)$, since
\[
\Tr\big((W\otimes \mathbf 1)\,\rho_{\mathrm{sep}}\big) = \Tr\big(W\,\Tr_{I\setminus U}\rho_{\mathrm{sep}}\big)\ge 0
\]
for every separable $\rho_{\mathrm{sep}}\in \mathcal D_{\mathrm{sep}}(I)$.

\smallskip
\noindent{(PSD extension).}
Suppose by contradiction that there exists a global density matrix $\rho\in \mathcal D(I)$ with marginals $\sigma_i$. Then by adjointness,
\[
\Tr(\Xi\rho) = \sum_i \Tr(r_i^{\!*}(Y_i)\rho)+\alpha\Tr(\rho) = \sum_i \Tr(Y_i\sigma_i)+\alpha < 0.
\]
On the other hand, since $\Xi\succeq 0$ and $\rho\succeq 0$, we must have $\Tr(\Xi\rho)\ge 0$, which is a contradiction. Thus no such $\rho$ exists.

\smallskip
\noindent{(Separable extension).}
Let $W_i\in C^*_{\mathrm{sep}}(U_i)$ be as in the statement, and define $\widehat W:=\sum_i r_i^{*}(W_i)$. By the closure of $C^*_{\mathrm{sep}}(I)$ under such extensions and sums, we have $\widehat W\in C^*_{\mathrm{sep}}(I)$. If there were a separable extension $\rho_{\mathrm{sep}}\in \mathcal D_{\mathrm{sep}}(I)$ of $\sigma$, then
\[
\Tr(\widehat W\rho_{\mathrm{sep}}) = \sum_i \Tr(r_i^{*}(W_i)\rho_{\mathrm{sep}}) = \sum_i \Tr(W_i\sigma_i) < 0.
\]
But by definition of $C^*_{\mathrm{sep}}(I)$, every separability witness has nonnegative expectation on separable states. This contradiction shows that no separable extension exists.
\end{proof}

\begin{definition}
\label{def:compatible-witness-and-local}
We introduce the following terms and concepts.\\
\emph{(1) Čech cochains and the adjoint differential.}
For a finite open cover $\mathcal U=\{U_i\}_{i\in\mathcal I}$ and $k\ge 0$, set
\[
  C^{k}(\mathcal U,\mathcal V)\;:=\;\prod_{i_0<\cdots<i_k} \mathcal V\big(U_{i_0}\cap\cdots\cap U_{i_k}\big).
\]
Let $\delta:C^{k-1}(\mathcal U,\mathcal V)\to C^{k}(\mathcal U,\mathcal V)$ be the Čech coboundary built from the restriction maps $r^{(m)}$ given by partial traces over the missing factor. The adjoint $\delta^\ast:C^{k}(\mathcal U,\mathcal V)\to C^{k-1}(\mathcal U,\mathcal V)$ is defined pointwise by
\begin{equation}
\label{eq:delta_star}
  (\delta^\ast W)_{i_0\ldots i_{k-1}}\;:=\;\sum_{m=0}^{k}(-1)^m\,\big(r^{(m)}\big)^{\!*}\!\Big(W_{i_0\ldots \widehat{i_m}\ldots i_k}\Big),
\end{equation}
where $\big(r^{(m)}\big)^{\!*}$ is the extension map $Z\mapsto Z\otimes \mathbf 1$ on the traced–out tensor factor, i.e. $\langle (r^{(m)})^* Z, X\rangle=\langle Z, r^{(m)}X\rangle$ for all matching $X,Z$.

\smallskip\noindent
\emph{(2) Compatible witness families.}
A family $W\in C^{k}(\mathcal U,C_{\sep}^\ast)$ is called compatible if $\delta^\ast W=0$. Equivalently, $W$ satisfies the Čech cocycle condition with respect to the adjoint (extension) maps.

\smallskip\noindent
\emph{(3) Cochains coming from local states.}
Write $C^{k}_{+}(\mathcal U)\subset C^{k}(\mathcal U,\mathcal V)$ for the cone of state–valued $k$–cochains, i.e.\ those with each component a density operator on the corresponding overlap (positive semidefinite and trace one). We say a $k$–cochain $c\in C^{k}(\mathcal U,\mathcal V)$ comes from local states if
\begin{itemize}
  \item for $k=0$: $c\in C^{0}_{+}(\mathcal U)$ (a family of local states), and
  \item for $k\ge 1$: there exists $\sigma\in C^{k-1}_{+}(\mathcal U)$ such that
        $c=\delta\sigma$.
\end{itemize}
In particular, a cocycle $c\in Z^{k}(\mathcal U,\mathcal V)$
``coming from local states'' (with $k\ge 1$) is an obstruction cocycle of the form $c=\delta\sigma$ for some state–valued $(k{-}1)$–cochain $\sigma$.
\end{definition}

\begin{remark}
\label{rmk:witness-sheaf}
Separatedness (injectivity of $j_U$) is equivalent to: for every nonzero $[X]\in \mathcal V(U)$ there exist an index $i$ and $W_i\in W(U_i)$ such that $\mathrm{Tr}[\,W_i\, X|_{U_i}\,]\neq 0$. Surjectivity of $j_U$ (gluing) is equivalent to: if a compatible family $s=(s_i)_{i\in I}\in H^0(\mathcal U,\mathcal V)$ satisfies $\sum_i \mathrm{Tr}(W_i s_i)=0$ for every compatible family $W=(W_i)_i\in C^0(\mathcal U,\,W(\bullet))$ with $\delta^*W=0$, then $s=j_U([X])$ for some $[X]\in \mathcal V(U)$. If $j_U$ is also injective, $[X]$ is unique.
\end{remark}

\begin{theorem}
\label{thm:witness-vanish}
Let $\mathcal U=\{U_i\}_{i\in I}$ be a finite cover. Let $c\in Z^k(\mathcal U,\mathcal V)$ be a Čech $k$–cocycle. Then the following are equivalent:
\[
[c]=0 \ \Longleftrightarrow\ \langle W,c\rangle=0\quad\text{for every }W\in C^k(\mathcal U,\,W(\bullet))\text{ with }\delta^* W=0,
\]
where $W(\bullet)=\operatorname{span}_\mathbb{R}\mathsf{Wit}(\bullet)$ \eqref{eq:Witt} and $\delta^*$ is the trace–adjoint Čech coboundary \eqref{eq:delta_star}.
\end{theorem}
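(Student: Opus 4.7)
The plan is to reduce the claim to the finite-dimensional Hodge-style duality $\operatorname{im}(\delta) = (\ker \delta^{*})^{\perp}$ on the Čech complex, after which both implications follow formally. First, I would note that by Proposition~\ref{prop:wit-separate-span}(3), $W(U) = \operatorname{span}_{\mathbb R} \mathsf{Wit}(U) = \mathcal V(U)$ for every finite $U \subseteq I$. Consequently $C^{k}(\mathcal U, W(\bullet)) = C^{k}(\mathcal U, \mathcal V)$, so the condition ``$W$ compatible'' is simply $W \in \ker \delta^{*} \subset C^{k}(\mathcal U, \mathcal V)$. This removes the apparent restriction to witness-valued cochains and reduces the statement to a purely linear-algebraic claim about the Hermitian-valued Čech complex.

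Next, I would equip each Čech level with its natural inner product. On $\mathrm{Herm}(H_U)$ the trace pairing $\langle X, Y\rangle = \mathrm{Tr}(XY)$ is positive definite, since $\mathrm{Tr}(X^{2}) = \|X\|_{\mathrm{HS}}^{2}$ for Hermitian $X$; extending orthogonally across the product decomposition of $C^{k}(\mathcal U, \mathcal V)$ makes the whole Čech complex a complex of finite-dimensional real Hilbert spaces. One then verifies that $\delta^{*}$ from \eqref{eq:delta_star} is genuinely the formal adjoint of $\delta$, which reduces to the single adjunction $\mathrm{Tr}_{U}\!\bigl((r^{U}_{V})^{*} Z \cdot X\bigr) = \mathrm{Tr}_{V}\!\bigl(Z \cdot r^{U}_{V} X\bigr)$ relating partial trace to the extension map $Z \mapsto Z \otimes \mathbf 1_{U \setminus V}$, combined with the matching of the simplicial face signs $(-1)^{m}$ on both sides.

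The theorem is then a one-line application of finite-dimensional duality: for any linear map $T$ between finite-dimensional real inner-product spaces one has $(\operatorname{im} T)^{\perp} = \ker T^{*}$, and hence $\operatorname{im} T = (\ker T^{*})^{\perp}$ by double orthogonal. Applied to $\delta : C^{k-1}(\mathcal U, \mathcal V) \to C^{k}(\mathcal U, \mathcal V)$, a cocycle $c \in Z^{k}$ satisfies $[c] = 0$ iff $c \in \operatorname{im} \delta = (\ker \delta^{*})^{\perp}$, which by definition means $\langle W, c\rangle = 0$ for every $W$ with $\delta^{*} W = 0$. The forward direction also admits the direct one-line check $\langle W, \delta \sigma\rangle = \langle \delta^{*} W, \sigma\rangle = 0$. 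The main obstacle I expect is the sign and slot bookkeeping in the adjoint verification: one must confirm that each partial trace on a specified tensor factor pairs with its extension on the same factor, and that the alternating signs $(-1)^{m}$ in $\delta$ and $\delta^{*}$ match face-by-face. Once this is done for a single ordered tuple the general case follows verbatim, and finite-dimensionality is essential because it lets us pass freely between $\ker \delta^{*}$ and $(\operatorname{im} \delta)^{\perp}$ without any closure or topology concerns.
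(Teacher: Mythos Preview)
Your proposal is correct and follows essentially the same route as the paper: both arguments invoke Proposition~\ref{prop:wit-separate-span} to identify $W(\bullet)$ with $\mathcal V(\bullet)$, then use the finite-dimensional duality $\operatorname{im}\delta = (\ker\delta^{*})^{\perp}$ together with the adjointness relation $\langle W,\delta b\rangle=\langle\delta^{*}W,b\rangle$ to conclude. Your version is slightly more explicit about the inner-product structure and the adjoint verification, but the logical content is identical.
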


\begin{proof}
We first note that all vector spaces in this work are finite dimensional. For each inclusion of overlaps denote by $r^{(m)}$ the restriction (partial trace) and by $(r^{(m)})^*$ its trace–adjoint extension, so that
\begin{equation}
\label{eq:duality}
\langle W,\delta b\rangle=\langle \delta^* W,\,b\rangle    
\end{equation}
for all matching cochains $W,b$. Set $S:=\delta C^{k-1}(\mathcal U,\mathcal V)\subset C^k(\mathcal U,\mathcal V)$. By the finite dimensional duality, we have $S^\perp=\ker\delta^*$.

($\Rightarrow$). If $[c]=0$, then $c=\delta b$ for some $b$, hence $\langle W,c\rangle=\langle \delta^* W,b\rangle=0$ for all $W$ with $\delta^* W=0$ by~\eqref{eq:duality}.

($\Leftarrow$). Conversely, assume $\langle W,c\rangle=0$ for all $W\in C^k(\mathcal U,\,W(\bullet))$ with $\delta^* W=0$. Since $W(\bullet)\cong \mathcal V(\bullet)^*$ in finite dimension (Proposition \ref{prop:wit-separate-span}), the set of such $W$ is exactly $\ker\delta^*$ inside $C^k(\mathcal U,\mathcal V)^*$. Thus $c\in(\ker\delta^*)^\perp=S$, i.e. $c=\delta b$ and $[c]=0$.
\end{proof}

In finite dimension, the linear span of the dual separable cone equals the whole space, hence $W(U)=\text{span}_{\mathbb R}\mathsf{Wit}(U)=\mathcal V(U)$ (Proposition~\ref{prop:wit-separate-span}). Therefore Čech complexes with coefficients in $\mathcal V$ or $W$ are simultaneously contractible via the same homotopy. We use witnesses only as separating functionals via pairings (Theorem~\ref{thm:witness-vanish}).

\begin{proposition}
\label{prop:farkas}
Let $\sigma=(\sigma_i)_{i\in\mathcal I}\in C^0(\mathcal U,\mathcal D)$ be compatible ($\delta\sigma=0$). Consider the feasibility set
\[
\mathcal F(\sigma):=\{\rho\in\mathcal D(I)\ :\ \Tr_{I\setminus U_i}\rho=\sigma_i\ \ \forall i\}.
\]
Then $\mathcal F(\sigma)=\emptyset$ if and only if there exists a (not necessarily compatible) family $W=(W_i)_{i\in\mathcal I}\in C^0(\mathcal U,W(\bullet))$ such that
\[
\sum_{i}\Tr(W_i\sigma_i)\ <\ \lambda_{\min}\!\Big(\sum_{i} r_i^{*}(W_i)\Big), \qquad r_i^\ast(W_i):=W_i\otimes \mathbf 1_{I\setminus U_i}.
\]
Equivalently, a hyperplane defined by witnesses separates the affine constraint from the PSD cone. If one restricts $W_i$ to the dual separable cone, the same gives a certificate against separable realizations.
\end{proposition}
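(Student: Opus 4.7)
The plan is to recognise the statement as a Farkas–type duality and obtain it from strict Hahn–Banach separation of $\sigma$ from the marginal image of $\mathcal D(I)$ in $C^0(\mathcal U,\mathcal V)$, then dualise through the trace pairing to read off the witness family. The easy direction $(\Leftarrow)$ is a one-line manipulation: assuming such $W$ exists and $\rho\in\mathcal F(\sigma)$, the adjointness identity $\sum_i\Tr(W_i\,r_i(\rho))=\Tr(\hat W\rho)$ combined with $\Tr(M\rho)\ge\lambda_{\min}(M)$ on density matrices forces $\sum_i\Tr(W_i\sigma_i)\ge\lambda_{\min}(\hat W)$, contradicting the hypothesis. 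Here $\hat W:=\sum_i r_i^{\!*}(W_i)$ and the only input is that the partial trace is the adjoint of the extension by identity.

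For the nontrivial direction $(\Rightarrow)$, I would introduce the linear marginal map $\Phi:\mathcal V(I)\to C^0(\mathcal U,\mathcal V)$, $\rho\mapsto(r_i(\rho))_i$, and observe that $K:=\Phi(\mathcal D(I))$ is convex and compact, since it is the continuous linear image of the compact state space $\mathcal D(I)$ in finite dimension. The hypothesis $\mathcal F(\sigma)=\emptyset$ is exactly $\sigma\notin K$, so strict Hahn–Banach separation supplies $L\in C^0(\mathcal U,\mathcal V)^{\ast}$ and $c\in\mathbb R$ with $L(\sigma)<c\le L(\tau)$ for every $\tau\in K$. Self-duality of $\mathcal V(U_i)$ under the trace pairing lets me represent $L$ as $L((X_i))=\sum_i\Tr(W_i X_i)$ for a unique $W=(W_i)$; by Proposition~\ref{prop:wit-separate-span} we automatically have $W\in C^0(\mathcal U,W(\bullet))$. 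Minimising the right-hand inequality over $\rho\in\mathcal D(I)$ converts $c$ into $\min_{\rho\in\mathcal D(I)}\Tr(\hat W\rho)=\lambda_{\min}(\hat W)$, producing the stated strict inequality. Note that the compatibility condition $\delta^{\ast}W=0$ is \emph{not} imposed and will generically fail—the key expository point being that this is a cone–feasibility duality rather than a cohomological one.

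The separable variant runs on the same tracks, replacing $\mathcal D(I)$ by the still compact convex body $\mathcal D_{\sep}(I)$; the supporting hyperplane may then be arranged so that each $W_i$ lies in $C^{\ast}_{\sep}(U_i)$, whence $\hat W\in C^{\ast}_{\sep}(I)$ and $\min_{\rho_{\sep}}\Tr(\hat W\rho_{\sep})\ge 0$, thereby recovering the simpler certificate of Proposition~\ref{prop:state-level-extension}. The step I expect to be most delicate is precisely this cone control in the separable case: raw Hahn–Banach a priori only produces $W_i\in\mathcal V(U_i)$, and one must refine the choice of separating functional so that each component falls in the dual separable cone while the strict inequality is preserved. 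This uses that $C_{\sep}(U_i)$ has full linear span (Proposition~\ref{prop:wit-separate-span}) together with the stability $r_i^{\ast}(C^{\ast}_{\sep}(U_i))\subseteq C^{\ast}_{\sep}(I)$ already invoked in Proposition~\ref{prop:state-level-extension}; the unconstrained PSD case needs no such care.
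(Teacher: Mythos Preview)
Your proposal is correct and follows essentially the same approach as the paper: both arguments introduce the marginal map $\Phi$ (the paper calls it $R$), observe that its image $K=\Phi(\mathcal D(I))$ is compact convex, apply strict Hahn--Banach separation to produce the family $W=(W_i)$ via the trace pairing, and then convert the infimum over $\mathcal D(I)$ into $\lambda_{\min}(\hat W)$ by adjointness; the easy direction and the separable addendum are handled identically. Your flagging of the cone-control subtlety in the separable case is apt---the paper, like you, only establishes the one-directional certificate there (if $W_i\in C^*_{\sep}(U_i)$ and the inequality holds, no separable extension exists), which is all the statement actually claims.
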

\begin{proof}
Let $R:\mathcal V(I)\to C^0(\mathcal U,\mathcal V)$ be the linear map $R(\rho)=(r_i(\rho))_i$ given by partial traces $r_i$. The feasible set of marginals realized by density operators is
\[
\mathcal S:=R\bigl(\mathcal D(I)\bigr)\subset C^0(\mathcal U,\mathcal V),
\]
a compact convex set since $\mathcal D(I)$ is compact convex and $R$ is linear and continuous.

($\Rightarrow$) Assume $F(\sigma)=\varnothing$, i.e.\ $\sigma\notin\mathcal S$.
By the strict separation theorem for closed convex sets, there exist a covector $W=(W_i)_i\in C^0(\mathcal U,\mathcal V)^*$ (which we identify with $C^0(\mathcal U,\mathcal V)$ via the trace pairing) and a real number $\beta$ such that
\[
\langle W,s\rangle \ge \beta \quad(\forall s\in\mathcal S),\qquad \langle W,\sigma\rangle<\beta .
\]
Write $A:=\sum_i r_i^*(W_i)\in \mathcal V(I)$ so that, by adjointness of $r_i$ and $r_i^*$,
\[
\inf_{\rho\in \mathcal D(I)}\ \sum_i \Tr\bigl(W_i\,r_i(\rho)\bigr) \;=\;\inf_{\rho\in \mathcal D(I)}\ \Tr\bigl(A\rho\bigr) \;=\;\lambda_{\min}(A).
\]
Since $R(\rho)\in\mathcal S$ for all $\rho\in\mathcal D(I)$, we may choose $\beta=\lambda_{\min}(A)$, and the separation inequality becomes
\[
\sum_i \Tr(W_i\,\sigma_i)\;<\;\lambda_{\min}\!\Bigl(\sum_i r_i^*(W_i)\Bigr),
\]
which is exactly the desired inequality.

($\Leftarrow$) Conversely, suppose there exist $W_i$ with
\[
\sum_i \Tr(W_i\,\sigma_i)\;<\;\lambda_{\min}\!\Bigl(\sum_i r_i^*(W_i)\Bigr).
\]
If $\rho\in \mathcal D(I)$ realized $\sigma$, then by adjointness
\[
\sum_i \Tr(W_i\,\sigma_i)=\sum_i \Tr\bigl(W_i\,r_i(\rho)\bigr) =\Tr\Bigl(\sum_i r_i^*(W_i)\,\rho\Bigr)\ \ge\ \lambda_{\min}\!\Bigl(\sum_i r_i^*(W_i)\Bigr),
\]
which is a contradiction. Hence $F(\sigma)=\varnothing$.

Finally, if in addition $W_i\in C^*_{\mathrm{sep}}(U_i)$ for all $i$, then $A=\sum_i r_i^*(W_i)\in C^*_{\mathrm{sep}}(I)$, so $\Tr(A\,\rho_{\mathrm{sep}})\ge 0$ for every $\rho_{\mathrm{sep}}\in \mathcal D_{\mathrm{sep}}(I)$. Therefore the stricter condition $\sum_i \Tr(W_i\sigma_i)<0$ rules out separable realizations, reproducing the separable certificate as stated.
\end{proof}

\subsection{\label{sec:pure}Obstructions in pure states}
For a finite index set $I$ and finite subsets $A,B\subseteq I$ with $A\cap B=\varnothing$, we write $A\,|\,B$ to denote the bipartition of $A\cup B$ and the corresponding tensor factorization $H_{A\cup B}\cong H_A\otimes H_B$. For a single subset $U\subseteq I$ we abbreviate $U\,|\,I\setminus U$ for the cut of $I$ into $U$ and its complement. If $A\subseteq B$, the shorthand $A\,|\,B$ means the internal cut of $B$ into $A$ and $B\setminus A$, i.e.\ $H_B\cong H_A\otimes H_{B\setminus A}$. In particular, with $U_{ip}:=U_i\cap U_p$, the expression $U_{ip}\,|\,U_p$ stands for the bipartition $H_{U_p}\cong H_{U_{ip}}\otimes H_{U_p\setminus U_{ip}}$.

\begin{lemma}
\label{lem:pure-ordered}
Let $\mathcal P(U)\subset\mathcal D(U)$ denote the presheaf of pure states. Fix a finite cover $\mathcal U=\{U_i\}_{i\in\mathcal I}$ and a family $\{\rho_i\in\mathcal P(U_i)\}_{i\in\mathcal I}$ such that for all $i,j$ the overlap marginals $\rho_i|_{U_{ij}}$ and $\rho_j|_{U_{ij}}$ coincide and are pure. Choose unit vectors $\psi_i\in H_{U_i}$ with $\rho_i=|\psi_i\rangle\!\langle\psi_i|$. On overlaps $U_{ij}$ write
\(
\psi_i=\xi_{ij}\otimes\chi_i^{(ij)},
\ \psi_j=g_{ij}\,\xi_{ij}\otimes\chi_j^{(ij)},
\)
with $g_{ij}\in U(1)$ and unit vectors $\chi_\bullet^{(ij)}$ on the complementary factors. Then $g=\{g_{ij}\}\in Z^1(\mathcal U,U(1))$ is a $U(1)$ Čech $1$–cocycle, and the following are equivalent:
\begin{enumerate}[label=(\roman*)]
\item\label{it:pure-glue}
There exists $\rho=|\Psi\rangle\!\langle\Psi|\in\mathcal P(I)$ with $\rho|_{U_i}=\rho_i$ for all $i$.
\item\label{it:pure-cocycle}
$[g]=0$ in $H^1(\mathcal U,U(1))$, i.e. $g_{ij}=e^{i(\alpha_j-\alpha_i)}$ on overlaps for some phases $\{\alpha_i\}$.
\end{enumerate}
If the cover contains (or refines to) all singletons $\{j\}$, then \ref{it:pure-glue} is also equivalent to:
\begin{enumerate}[label=(\roman*),resume]
\item\label{it:pure-sep}
The global pure state is fully separable, $\Psi=\bigotimes_{j\in I}\phi_j$.
\end{enumerate}
\end{lemma}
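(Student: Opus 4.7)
The plan is to verify that $g=\{g_{ij}\}$ is a Čech $U(1)$-cocycle, then to establish (i)$\Leftrightarrow$(ii) by phase-extraction in one direction and inductive gluing in the other, and finally to reduce (iii) to Proposition~\ref{prop:pure-fail} when the cover contains singletons.

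\textbf{Cocycle check.} Purity of $\rho_i|_{U_{ij}}$ forces $\psi_i$ to factor across the cut $U_{ij}\,|\,U_i\setminus U_{ij}$ as $\xi_i^{(ij)}\otimes\chi_i^{(ij)}$, with the $H_{U_{ij}}$-factor unique up to phase. Coincidence of marginals $\rho_i|_{U_{ij}}=\rho_j|_{U_{ij}}$ yields $g_{ij}\in U(1)$ via $\xi_j^{(ij)}=g_{ij}\xi_i^{(ij)}$, after setting $\xi_{ij}:=\xi_i^{(ij)}$. On a triple overlap $U_{ijk}$, the common pure marginal admits a unit-vector representative $\zeta_{ijk}\in H_{U_{ijk}}$, and the finer factorizations $\psi_\ell=\xi_\ell^{(ijk)}\otimes\chi_\ell^{(ijk)}$ (valid since $\rho|_{U_{ijk}}$ is also pure) produce phases $c_i,c_j,c_k$ with $\xi_\ell^{(ijk)}=c_\ell\zeta_{ijk}$. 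Reading off $g_{ij}=c_j/c_i$ and the analogous relations then forces the cocycle identity $g_{ij}g_{jk}=g_{ik}$ on $U_{ijk}$.

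\textbf{(i)\,$\Rightarrow$\,(ii).} A global pure $\Psi$ realizing the $\rho_i$ factors as $\Psi=\phi_i\otimes\eta_i$ by purity of $\rho|_{U_i}$, and $\phi_i=e^{i\alpha_i}\psi_i$ for some $\alpha_i\in\mathbb R$. Descending the decomposition to the finer cut $U_{ij}\,|\,I\setminus U_{ij}$ (legitimate because $\rho|_{U_{ij}}$ is pure) and matching the $H_{U_{ij}}$-parts of the $i$- and $j$-decompositions of $\Psi$ exhibits $g_{ij}$ as the coboundary $e^{i(\alpha_j-\alpha_i)}$, so $[g]=0$.

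\textbf{(ii)\,$\Rightarrow$\,(i).} Given $\{\alpha_i\}$ with $g_{ij}=e^{i(\alpha_j-\alpha_i)}$, define $\tilde\psi_i:=e^{-i\alpha_i}\psi_i$; on each $U_{ij}$ the $H_{U_{ij}}$-factors of $\tilde\psi_i$ and $\tilde\psi_j$ now agree as unit vectors. I would construct $\Psi$ inductively: order the patches $i_1,\dots,i_m$ and build $\Psi^{(k)}\in H_{V_k}$ for $V_k:=U_{i_1}\cup\cdots\cup U_{i_k}$, starting from $\Psi^{(1)}:=\tilde\psi_{i_1}$. At each step, the marginal of $\tilde\psi_{i_{k+1}}$ on $V_k\cap U_{i_{k+1}}$ is pure (by a repeated application of the fact that pure marginals on two overlapping subsystems force a pure marginal on their union) and agrees with the corresponding factor of $\Psi^{(k)}$ by the inductive hypothesis together with the cocycle trivialization. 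One then appends the complementary tensor factor of $\tilde\psi_{i_{k+1}}$ on $U_{i_{k+1}}\setminus V_k$ to obtain $\Psi^{(k+1)}$. The final $\Psi:=\Psi^{(m)}$ restricts to the prescribed $\rho_i$ on each $U_i$ by construction.

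\textbf{Singleton case and main obstacle.} If $\mathcal U$ (or a refinement) contains every singleton $\{j\}$, the singletons form a partition, and Proposition~\ref{prop:pure-fail} applied to the $\Psi$ constructed in (i) immediately gives $\Psi=\bigotimes_{j\in I}\phi_j$. The delicate step is the inductive gluing in (ii)\,$\Rightarrow$\,(i): verifying that each successive overlap $V_k\cap U_{i_{k+1}}$ receives matching data from all previously processed patches requires the cocycle condition not only on pairs but on every higher intersection. Conceptually, the cleanest viewpoint is that the ray-to-vector lifting problem defines a sheaf of $U(1)$-torsors on $\mathcal U$ whose global section is obstructed precisely by $[g]\in H^1(\mathcal U,U(1))$; making this intuition rigorous reduces, at each level, to the Schmidt-uniqueness argument used in the cocycle check.
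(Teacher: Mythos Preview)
Your proposal is correct and follows essentially the same route as the paper: cocycle check on triple overlaps, phase extraction for (i)$\Rightarrow$(ii), rephasing to $\tilde\psi_i$ and inductive gluing for (ii)$\Rightarrow$(i), and the Schmidt-rank argument (via Proposition~\ref{prop:pure-fail}) for the singleton case. The one organizational difference worth noting is precisely where you flag the ``main obstacle'': the paper glues along a spanning tree $T$ of the nerve of $\mathcal U$ rather than a linear ordering of patches. This sidesteps the issue of $V_k\cap U_{i_{k+1}}$ being a union of several overlap pieces: along the tree each new patch is attached across a single edge, and consistency with non-tree overlaps then follows automatically because the cocycle is trivial on every cycle (any closed loop factors through the unique tree path). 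Your linear-order induction works too, but the tree device makes the consistency verification you worry about essentially tautological.
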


\begin{proof}
On triple overlaps one computes $g_{ij}g_{jk}g_{ki}=1$, so $g$ is a $U(1)$–valued cocycle.

\emph{\ref{it:pure-glue}$\Rightarrow$\ref{it:pure-cocycle}.}
If $\rho=|\Psi\rangle\!\langle\Psi|$ restricts to $\rho_i$, then $\Psi=\psi_i\otimes\phi_i$ across $U_i\,|\,I\setminus U_i$, so $\psi_i|_{U_{ij}}=\psi_j|_{U_{ij}}$ as vectors; hence we can choose representatives with $g_{ij}=1$ and $[g]=0$.

\emph{\ref{it:pure-cocycle}$\Rightarrow$\ref{it:pure-glue}.}
Pick a spanning tree $T$ in the nerve of $\mathcal U$ and choose phases $\alpha_i$ so that $g_{ij}=e^{i(\alpha_j-\alpha_i)}=1$ on every tree edge $(i,j)\in T$. Replace $\psi_i$ by $\psi'_i:=e^{i\alpha_i}\psi_i$. Then on each edge $(i,j)\in T$ we have $\psi'_i|_{U_{ij}}=\psi'_j|_{U_{ij}}=\xi_{ij}$ as vectors. Glue inductively along the tree: pick a root $i_1$ and set $\Psi_{(1)}:=\psi'_{i_1}$. When attaching a new vertex $i$ with parent $p$ in $T$, write $\psi'_p=\xi_{ip}\otimes\eta_{p}$ and $\psi'_i=\xi_{ip}\otimes\eta_i$ across $U_{ip}\,|\,U_p\!\setminus\!U_{ip}$ and $U_{ip}\,|\,U_i\!\setminus\!U_{ip}$, and put $\Psi_{(\text{new})}:=(\Psi_{(\text{old})}|_{U_p})$ extended by $\eta_i$ on $U_i\setminus U_{ip}$. This produces a vector on the union whose restriction to each already glued $U_j$ is still $\psi'_j$, and whose restriction to $U_i$ is $\psi'_i$. Since $T$ has no cycles and pairwise overlaps are already matched along edges, no further phase conditions arise. Consistency on overlaps with non-parent neighbours follows along the unique tree path (the cocycle is trivial on cycles). Continuing yields $\Psi\in H_I$ with $\Psi|_{U_i}=\psi'_i$, hence $\rho|_{U_i}=\rho_i$.

If the cover contains singletons, then purity of all single‑site marginals forces $\Psi$ to have Schmidt rank $1$ across every $j\,|\,I\setminus\{j\}$, hence $\Psi=\bigotimes_{j\in I}\phi_j$.
\end{proof}

The argument above is independent of the ordered quotient and uses only the projective nature of pure states. In the ordered framework one may regard each $\rho_i$ as a class in $\mathcal V(U_i)$ represented by a rank–one projector. The proof above shows that the only obstruction to a global pure extension is the $U(1)$ Čech class $[g]$, not an ordered entanglement obstruction. When the cover contains singletons, any global pure extension is necessarily fully separable, so no entanglement–driven obstruction arises in the pure case beyond the phase cocycle.

\subsection{\label{sec:local_entanglement}Local entanglement}
Fix once and for all a finite–dimensional Hilbert space $H_{\mathrm{aux}}$ with dimension $d\ge2$ and a faithful state $\tau_{\mathrm{aux}}\in\mathrm{Dens}(H_{\mathrm{aux}})$. Let $\mathrm{Aux}:=\{\mathsf a_1,\mathsf a_2,\ldots\}$ be a countable set of formal labels
disjoint from the physical index set $I$. For each $\ell\ge1$ we identify $H_{\mathsf a_\ell}\cong H_{\mathrm{aux}}$, and write $\mathbf 1_{\mathsf a_\ell}$ and $\tau_{\mathsf a_\ell}$ for the corresponding identity and faithful state. For $q\ge0$ put $A_q:=\{\mathsf a_1,\ldots,\mathsf a_q\}$ (with $A_0=\varnothing$).

Given a finite region $S\subseteq I$, its $q$-fold ancilla–cosimplicial thickening is the disjoint union
\[
   S^{(q)} \;:=\; S\ \sqcup\ A_q, \qquad H_{S^{(q)}} \;=\; H_S \otimes H_{\mathrm{aux}}^{\otimes q}.
\]
We use the same auxiliary labels $A_q$ for all $S$ in a fixed construction, so that restriction (partial trace) maps ignore the auxiliary legs and remain compatible across overlaps. As before, let $\mathcal V(S)\ :=\ \mathrm{Herm}\big(H_S\big)$ denote the real vector space of Hermitian operators on $H_S$.

For a finite cover $\mathcal U=\{U_i\}_{i\in\mathcal I}$ and $p,q\ge0$ set
\[
  C^{p,q}(\mathcal U) \;:=\; \prod_{i_0<\cdots<i_p} \mathcal V\big( (U_{i_0}\cap\cdots\cap U_{i_p})^{(q)} \big),
\]
where the thickening $(\cdot)^{(q)}$ uses the fixed $A_q$ above. The differential $\delta_{\mathrm C}:C^{p,q}\!\to C^{p+1,q}$ is the Čech coboundary, which we used in the previous subsections, built from restrictions to overlaps.

In the following, we refer to complexes as ``ancilla'' (vertical) or ``Čech'' (horizontal) according to their direction in the following diagram:
\[ \begin{tikzcd}[column sep=2.2em,row sep=1.8em,ampersand replacement=\&] C^{p,q+1} \arrow[r,"\delta_{\mathrm C}"] \& C^{p+1,q+1}\\ C^{p,q} \arrow[r,"\delta_{\mathrm C}"] \arrow[u,"\delta_E"] \& C^{p+1,q} \arrow[u,"\delta_E"] \end{tikzcd} \]

\medskip
Let
\[
E:\ \mathrm{Herm}(H\!\otimes\!H_{\mathrm{aux}})\longrightarrow \mathrm{Herm}(H\!\otimes\!H_{\mathrm{aux}}), \qquad E(Z):=\operatorname{Tr}_{\mathrm{aux}}(Z)\otimes \tau_{\mathrm{aux}},
\]
so $E$ is completely positive trace-preserving (CPTP) and idempotent ($E^2=E$). Note that $E(\mathbf 1_H\otimes\mathbf 1_{\mathrm{aux}})=d\,\mathbf 1_H\otimes \tau_{\mathrm{aux}}$ in general, hence $E(\mathbf 1)\neq \mathbf 1$ unless $\tau_{\mathrm{aux}}=\mathbf 1/d$.

For $X\in \mathcal V\big(S^{(q)}\big)$ and $q\ge 0$ define
\begin{align*}
d^{(0)}_q(X) &:= X\otimes \tau_{\mathsf a_{q+1}},\\
d^{(1)}_q(X) &:= \bigl(\mathrm{id}^{\otimes q}\otimes E\bigr)\bigl(X\otimes \tau_{\mathsf a_{q+1}}\bigr)\qquad\text{(reset the newly added slot)},\\
d^{(i)}_q(X) &:= \bigl(\mathrm{id}^{\otimes (i-2)}\otimes E\otimes \mathrm{id}^{\otimes (q-i+2)}\bigr)\bigl(X\otimes \tau_{\mathsf a_{q+1}}\bigr),
\qquad 2\le i\le q+1,
\end{align*}
where for $i\ge 2$ the map $E$ acts on the $(i\!-\!1)$‑st pre‑existing auxiliary leg. Set $\displaystyle \delta_E:=\sum_{m=0}^{q+1}(-1)^m d^{(m)}_q$ and extend componentwise to $C^{p,q}(\mathcal U)$.

\begin{lemma}
\label{lem:ancilla-cosimplicial}
For all $q\ge 0$ and $0\le i<j\le q+2$,
\[
d^{(j)}_{q+1}\circ d^{(i)}_q \;=\; d^{(i)}_{q+1}\circ d^{(j-1)}_q.
\]
Consequently $\delta_E^2=0$.
\end{lemma}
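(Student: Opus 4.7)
The plan is to reduce $\delta_E^2=0$ to the cosimplicial identities by the classical sign-cancellation argument, and then verify the identities by unpacking each face map into its two elementary pieces (append-$\tau$ and a reset $E$ at a specified slot) and using basic commutation facts about those pieces. Concretely, I would expand
\[
\delta_E^2 \;=\; \sum_{i=0}^{q+1}\sum_{j=0}^{q+2} (-1)^{i+j}\, d^{(j)}_{q+1}\, d^{(i)}_q,
\]
split the double sum into the ranges $i<j$ and $j\le i$, and substitute $(i,j)\mapsto (j',i')=(i, j-1)$ in the first half using the cosimplicial identity to rewrite it as the negative of the second half. Termwise cancellation then gives $\delta_E^2=0$.

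So the real work is the cosimplicial identity $d^{(j)}_{q+1} d^{(i)}_q = d^{(i)}_{q+1} d^{(j-1)}_q$. I would isolate three elementary commutation facts about the building blocks of each face map:
\emph{(a)} the append map $X\mapsto X\otimes \tau_{\mathsf a_{q+1}}$ commutes with $E$ applied at any slot other than the one being appended, since the two operations act on disjoint tensor factors;
\emph{(b)} for distinct aux slots $a\neq b$, the resets $E_a$ and $E_b$ commute, because each is a partial trace at that slot followed by tensoring with $\tau$ at that slot, and those act on disjoint factors;
\emph{(c)} a reset applied to a slot that already carries $\tau$ is the identity, since $E(\tau)=\mathrm{Tr}(\tau)\otimes\tau=\tau$.

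With (a)--(c) in place, each composite $d^{(j)}_{q+1} d^{(i)}_q$ unpacks into two sequential appends followed by at most two applications of $E$ at prescribed slots within the resulting $(q+2)$-slot tensor, and the cosimplicial identity becomes the statement that, after reordering via (a) and (b), the two composites land $E$ at the same pair of slots. I would then run a case analysis on whether each of $i$, $j-1$, $j$ lies in $\{0\}$, $\{1\}$, or $\{2,\dots,q+1\}$, corresponding to the three qualitative types of face operation (no $E$, $E$ at the newly appended slot, $E$ at a pre-existing slot). In the generic case $2\le i<j\le q+1$, (b) reduces the identity to commutation of two $E$'s at distinct pre-existing positions; the boundary cases $i\in\{0,1\}$ or $j\in\{1,2\}$ are where (c) absorbs a trivial reset on a freshly appended $\tau$-slot.

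The main obstacle is the index bookkeeping. The face-index convention singles out $i=1$ for the newly appended slot while indices $i=2,\ldots,q+1$ label pre-existing slots; after a second append, the once-``new'' slot becomes pre-existing, and the cosimplicial decrement $j\mapsto j-1$ is exactly what realigns the $E$-position on both sides of the identity. The careful part is tracking, for each $(i,j)$, which tensor factor carries $\tau$ at each intermediate stage and which factor ultimately receives $E$, and then checking that (a)--(c) conspire to produce the same tensor operator on both sides. Once the reindexing is handled, each individual case collapses to one of the three commutations above.
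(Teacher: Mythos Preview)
Your proposal is correct and follows essentially the same approach as the paper: both decompose each coface into an append-$\tau$ map followed by (at most) one reset, then establish the cosimplicial identities by the same commutation facts you list---your (a), (b), (c) are precisely the paper's identities (i)--(iii) together with the observation that resets on distinct legs commute---and finally obtain $\delta_E^2=0$ by the standard sign-cancellation reindexing. The only ingredient the paper states that you leave implicit is $E^2=E$ for the case where both composites hit the same leg, but this is immediate from idempotency of $E$ and would surface naturally in your case analysis.
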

\begin{proof}
Let $J_q:\mathcal V(S^{(q)})\to\mathcal V(S^{(q+1)})$ be state insertion, $J_q(X)=X\otimes\tau$. 
On $\mathcal V(S^{(r)})$, we write
\[
N_{r}:=\mathrm{id}^{\otimes (r-1)}\otimes E\qquad\text{(reset the ``new'' slot)},
\quad
R^{(m)}_{r}:=\mathrm{id}^{\otimes (m-1)}\otimes E\otimes \mathrm{id}^{\otimes (r-m)}
\]
for $1\le m\le r$ (reset the $m$‑th pre‑existing slot). Then
\[
d^{(0)}_q=J_q,\qquad d^{(1)}_q=N_{q+1}\circ J_q,\qquad
d^{(i)}_q=R^{(i-1)}_{q+1}\circ J_q\ \ (i\ge2).
\]

We use the following identities, valid for any faithful $\tau$:
\begin{align*}
\text{(i)}\ & R^{(m)}_{q+1}\circ J_q \;=\; J_q\circ R^{(m)}_{q} \quad(1\le m\le q),\\
\text{(ii)}\ & N_{q+1}\circ J_q \;=\; J_q \qquad\qquad\qquad\quad\ \ \ (\text{since }E(\tau)=\tau,\ \mathrm{Tr}\,\tau=1),\\
\text{(iii)}\ & J_{q+1}\circ N_{q+1} \;=\; R^{(q+1)}_{q+2}\circ J_{q+1}.
\end{align*}
Identity (i) says a reset on an old slot commutes with inserting a fresh state; (ii) says resetting the newly inserted slot does nothing; (iii) says that after the next insertion, the previously new slot becomes the $(q\!+\!1)$‑st old slot.

Using (i)–(iii) one rewrites both sides of $d^{(j)}_{q+1}d^{(i)}_q=d^{(i)}_{q+1}d^{(j-1)}_q$ as the same composition of two resets (possibly on the same leg) followed by $J_{q+1}J_q$. Resets on different legs commute. If the same leg is hit twice, $E^2=E$ applies. The alternating sum of cofaces therefore squares to zero, hence $\delta_E^2=0$.
\end{proof}

$\delta_{\mathrm C}$ acts on only physical legs and each $d^{(m)}_q$ acts only on auxiliary legs, so $\delta_{\mathrm C}\delta_E=\delta_E\delta_{\mathrm C}$. Therefore $(C^{\bullet,\bullet}(\mathcal U),\delta_{\mathrm C},\delta_E)$ is a bicomplex.

\begin{HLblock}
\begin{definition}
\label{def:Eq-ordered-cosimplicial}
For a cover $\mathcal U$ set $C^{0,q}(\mathcal U):=\prod_{i}\mathcal V\big(U_i^{(q)}\big)$ and define
\[
E^{q}(\mathcal U)\ :=\ H^{q}\!\big(C^{0,\bullet}(\mathcal U),\delta_E\big).
\]
We call $E^{q}$ the \emph{local entanglement groups}. Here ``local'' means within a single patch of the chosen cover.
\end{definition}
\end{HLblock}

At degree $q=0$, for any faithful $\tau_{\rm aux}$, we have
\[
d^{(0)}_0(X)=X\otimes \tau_{\rm aux},\qquad d^{(1)}_0(X)=(\mathrm{id}\otimes E)(X\otimes \tau_{\rm aux})=X\otimes \tau_{\rm aux},
\]
hence $\delta_E\equiv 0$ in degree $0$ and $E^0(U)\cong \mathcal V(U)$. Nontriviality at $q=0$ is decided by ordinary separability witnesses on $U$.

\medskip
\noindent\textbf{Ancilla--state independence.}
If $\tau,\tau'$ are faithful and $E_\tau,E_{\tau'}$ are the corresponding reset maps, then after choosing a trace--preserving linear isomorphism
$T:\Herm(H_{\mathrm{aux}})\to\Herm(H_{\mathrm{aux}})$ with $T(\tau)=\tau'$, the induced chain map $F_\bullet=\id\otimes T^{\otimes(\bullet)}$ yields isomorphisms
\[
E^q_\tau(\mathcal U)\;\xrightarrow{\;\sim\;}\;E^q_{\tau'}(\mathcal U)\qquad(\forall\,q\ge 0).
\]
This identification depends on the choice of $T$, but in particular the groups $E^q_\tau(\mathcal U)$ are all (noncanonically) isomorphic for faithful $\tau$.

\medskip
Entanglement already visible on a single patch is decided at degree $q=0$ by ordinary separability witnesses. For $q\ge 1$, the groups $E^q(\mathcal U)=H^q(C^{0,\bullet}(\mathcal U),\delta_E)$ organize auxiliary--traceless components in a cohomological way (and are independent, up to noncanonical isomorphism, of the faithful choice of $\tau_{\mathrm{aux}}$; see Lemma~\ref{lem:tau-independence-cosimplicial}). Operational witness-based statements for the signed stabilize--then--reset pairing are captured instead by LED$(q)$ (Definition~\eqref{def:LED}) and its parity collapse (Proposition~\ref{prop:LED-tau-independence} / Theorem~\ref{thm:completeness-weak}). See also Definition~\ref{def:witness-trivial} for the fact that the signed pairing detects $\delta_E(Y)$ rather than the cohomology class $[Y]$.

Let $\tau_{\rm aux}$ and $\tau'_{\rm aux}$ be faithful states, and let $\delta_E,\delta_{E'}$ be the ancilla differentials built from $E(Z)=\Tr_{\rm aux}(Z)\otimes\tau_{\rm aux}$ and $E'(Z)=\Tr_{\rm aux}(Z)\otimes\tau'_{\rm aux}$. By Lemma~\ref{lem:tau-independence-cosimplicial}
there is a degreewise linear isomorphism $F_\bullet$ with $\delta_{E'}\!\circ F_q = F_{q+1}\!\circ \delta_E$, hence a natural identification
$E^q_{\tau_{\rm aux}}(U)\cong E^q_{\tau'_{\rm aux}}(U)$ for all $q$.

For the convenient choice $\tau_{\rm aux}=\mathbf 1/d$, one has $d^{(0)}_0=d^{(1)}_0$, so $\delta_E=0$ in degree $q=0$ and compatible ancilla–cosimplicial witnesses coincide with all witnesses on the patch. The $q=0$ pairing reduces to the usual separability test on $U$.

\begin{lemma}
\label{lem:tau-independence-cosimplicial}
Let $\tau,\tau'$ be faithful states on $H_{\mathrm{aux}}$. Choose a unital, trace–preserving linear isomorphism
$T:\mathrm{Herm}(H_{\mathrm{aux}})\to \mathrm{Herm}(H_{\mathrm{aux}})$ with $T(\tau)=\tau'$. For $q\ge 0$ define
$F_q:=\mathrm{id}\otimes T^{\otimes q}:\mathcal V(U^{(q)})\to \mathcal V(U^{(q)})$. Then for all $q$ and all cofaces $d^{(m)}_{q,(\bullet)}$ built from $E_{(\bullet)}(Z)=\operatorname{Tr}_{\mathrm{aux}}(Z)\otimes (\bullet)$ and the insertion of the corresponding state,
\[
d^{(m)}_{q,\tau'}\circ F_q \;=\; F_{q+1}\circ d^{(m)}_{q,\tau}\qquad(0\le m\le q+1).
\]
Consequently $F_\bullet$ is a chain isomorphism and induces (noncanonical) isomorphisms $E^q_{\tau}(U)\cong E^q_{\tau'}(U)$ for all $q$.
Moreover, once $T$ is fixed, these isomorphisms are functorial with respect to restrictions.
\end{lemma}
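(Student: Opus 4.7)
The plan is to reduce the claimed intertwining $d^{(m)}_{q,\tau'}\circ F_q = F_{q+1}\circ d^{(m)}_{q,\tau}$ to a single commutation between $T$ and the reset map, and then to run a three-case check on pure tensors $X_0\otimes x_1\otimes\cdots\otimes x_q\in\mathcal V(U^{(q)})$. The commutation I want is $E_{\tau'}\circ T = T\circ E_\tau$ on $\mathrm{Herm}(H_{\mathrm{aux}})$, and it is essentially a one-line check: both sides send $Z$ to $\Tr(Z)\,\tau'$, on the left because $T$ preserves trace, and on the right because $T(\tau)=\tau'$. I will also use the trivial identity $E_\tau(\tau)=\tau$, immediate from $\Tr\tau=1$ (and its analogue for $\tau'$).

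With these two facts in hand, I would verify the three types of coface separately. For $m=0$ both $d^{(0)}_{q,\tau'}\circ F_q(X)$ and $F_{q+1}\circ d^{(0)}_{q,\tau}(X)$ collapse to $X_0\otimes T(x_1)\otimes\cdots\otimes T(x_q)\otimes\tau'$ using $T(\tau)=\tau'$. For $m=1$, because $E_\tau(\tau)=\tau$ and $E_{\tau'}(\tau')=\tau'$, the coface $d^{(1)}$ coincides with $d^{(0)}$ for either faithful state, so this case reduces to the previous one. For $m\ge 2$ the reset acts on the $(i-1)$-st pre-existing aux slot: on the left we first apply $T$ to that slot (via $F_q$) and then $E_{\tau'}$, while on the right we first apply $E_\tau$ and then $T$ (via $F_{q+1}$); the commutation $E_{\tau'}\circ T = T\circ E_\tau$ matches these. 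The freshly inserted slot is matched by $T(\tau)=\tau'$, and the remaining slots carry only $\mathrm{id}$ or $T$ and commute freely. Summing with the alternating signs then gives $\delta_{E_{\tau'}}\circ F_q=F_{q+1}\circ\delta_{E_\tau}$.

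Since $T$ is a linear isomorphism, so is each $F_q$, and $F_\bullet$ descends to a natural isomorphism $E^q_\tau(U)\cong E^q_{\tau'}(U)$ on cohomology. The extension to the full bicomplex is essentially free: $F_q$ acts only on auxiliary legs, so it commutes on the nose with the Čech differential $\delta_{\mathrm C}$ and with the physical partial traces used as restriction maps, hence both with $\delta_{\mathrm C}$ horizontally and with any overlap-restriction vertically. The only step requiring genuine care is bookkeeping in the $m\ge 2$ case, since the insertion of the fresh aux slot shifts the aux-labelling and one must verify that the $E_{\tau'}$ appearing on the left really acts on the slot currently dressed by $T$; once this alignment is made explicit, the commutation $E_{\tau'}T=TE_\tau$ closes the argument and I do not anticipate any deeper obstacle.
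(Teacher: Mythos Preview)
Your proposal is correct and follows essentially the same approach as the paper: both isolate the single-leg commutation (you state it on $\mathrm{Herm}(H_{\mathrm{aux}})$ as $E_{\tau'}\circ T=T\circ E_\tau$, the paper on $\mathrm{Herm}(H\otimes H_{\mathrm{aux}})$ as $E_{\tau'}\circ(\mathrm{id}\otimes T)=(\mathrm{id}\otimes T)\circ E_\tau$), then run the same three-case verification $m=0$, $m=1$, $m\ge 2$ and conclude with the observation that $F_q$ touches only auxiliary legs and hence commutes with the physical partial traces. The only difference is cosmetic: you work on pure tensors and extend by linearity, while the paper writes the identity directly for general $Z$.
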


\begin{proof}
We first show the key identity on a single auxiliary factor:
\begin{equation}
\label{eq:one-leg}
E_{\tau'}\circ(\mathrm{id}\otimes T)\;=\;(\mathrm{id}\otimes T)\circ E_{\tau}
\quad\text{on }\mathrm{Herm}(H\otimes H_{\mathrm{aux}}).
\end{equation}
For any $Z\in \mathrm{Herm}(H\otimes H_{\mathrm{aux}})$,
using linearity of the partial trace, trace–preservation of $T$, and $T(\tau)=\tau'$,
\[
\begin{aligned}
E_{\tau'}\!\big((\mathrm{id}\otimes T)(Z)\big)
&= \mathrm{Tr}_{\mathrm{aux}}\!\big((\mathrm{id}\otimes T)(Z)\big)\ \otimes\ \tau'
= \big(\mathrm{id}\otimes \mathrm{Tr}\circ T\big)(Z)\ \otimes\ \tau'\\
&= \big(\mathrm{id}\otimes \mathrm{Tr}\big)(Z)\ \otimes\ \tau'
= \big(\mathrm{id}\otimes T\big)\!\Big(\big(\mathrm{id}\otimes \mathrm{Tr}\big)(Z)\ \otimes\ \tau\Big)\\
&= (\mathrm{id}\otimes T)\!\big(E_{\tau}(Z)\big),
\end{aligned}
\]
which is \eqref{eq:one-leg}.

\medskip
Fix $q\ge 0$ and $X\in \mathcal V(U^{(q)})$. For $m=0$,
\[
d^{(0)}_{q,\tau'}(F_qX)=(F_qX)\otimes \tau'=(\mathrm{id}\otimes T^{\otimes q})(X)\otimes T(\tau)
=F_{q+1}(X\otimes\tau)=F_{q+1}d^{(0)}_{q,\tau}(X).
\]
For $m=1$,
\[
\begin{aligned}
F_{q+1}d^{(1)}_{q,\tau}(X)
&=F_{q+1} \big((\mathrm{id}^{\otimes q}\otimes E_\tau)(X\otimes\tau)\big)\\
&=(\mathrm{id}^{\otimes q}\otimes E_{\tau'})\big((\mathrm{id}\otimes T^{\otimes q})(X)\otimes T(\tau)\big)
= d^{(1)}_{q,\tau'}(F_qX),
\end{aligned}
\]
using $E_{\tau'}\circ(\mathrm{id}\otimes T)=(\mathrm{id}\otimes T)\circ E_\tau$ on the targeted leg. For $2\le m\le q+1$ the same argument on the $m$-th pre-existing leg gives $d^{(m)}_{q,\tau'}\circ F_q=F_{q+1}\circ d^{(m)}_{q,\tau}$. Summing with the alternating signs proves $\delta_{E_{\tau'}}\circ F_q=F_{q+1}\circ \delta_{E_\tau}$. Since $T$ is a linear isomorphism, each $F_q=\mathrm{id}\otimes T^{\otimes q}$ is a linear isomorphism with inverse $\mathrm{id}\otimes (T^{-1})^{\otimes q}$, so $F_\bullet$ is a chain isomorphism. Hence it induces natural isomorphisms on ancilla cohomology: $E^{q}_{\tau}(U)\cong E^{q}_{\tau'}(U)$ for all $q$.

\medskip
The restriction maps are partial traces on physical legs. Since each $F_q$ acts only on auxiliary legs, $F_\bullet$ commutes with all horizontal restriction maps. Therefore $F_\bullet$ respects the full bicomplex and yields natural identifications also after taking Čech cohomology in the horizontal direction. This completes the proof.
\end{proof}

\medskip
Over refinements $\mathcal V\succeq\mathcal U$, use the natural push–forwards to define the cover–independent invariants
\[
 \overline{Q}^{\,0}:=\varinjlim_{\mathcal U} Q^{0}(\mathcal U),\qquad
 \overline{R}^{\,0}:=\varinjlim_{\mathcal U} R^{0}(\mathcal U),\qquad
 \overline{E}^{\,q}:=\varinjlim_{\mathcal U} E^{q}(\mathcal U).
\]

\begin{definition}
\label{def:LED}
Fix a patch $U$ and a faithful auxiliary state $\tau_{\rm aux}$.  For $q\ge1$ set
\[
\mathcal C^q(\rho_U)
:=\ \rho_U\otimes\tau_{\rm aux}^{\otimes q}
\ +\ \mathcal V(U)\otimes \mathcal V_0(H_{\rm aux})^{\otimes q}
\ \subset\ \mathcal V\big(U^{(q)}\big),
\]
where $\mathcal V_0(H_{\rm aux})$ denotes the traceless subspace. We say that $\rho_U$ has LED (Local Entanglement Detectability) of order $q\ge1$ if there exist $Y\in\mathcal C^q(\rho_U)$ and a separability witness $W\in\mathsf{Wit}\big(U^{(q+1)}\big)$
such that
\[
\sum_{m=0}^{q+1}(-1)^m\,\Tr\!\big[W\,d_q^{(m)}(Y)\big]\ <\ 0.
\]
For $q=0$ (where $d^{(0)}_0=d^{(1)}_0$ and hence $\delta_E\equiv0$) we declare $\mathrm{LED}(0)$ if and only if there exists $W\in\mathsf{Wit}(U)$ with $\Tr(W\rho_U)<0$.
\end{definition}

\begin{proposition}
\label{prop:LED-tau-independence}
Fix a patch $U\subset I$, a reduced state $\rho_U\in D(U)$, and a faithful auxiliary state $\tau_{\mathrm{aux}}$. Let $q\ge 1$, and let $\delta_E=\sum_{m=0}^{q+1}(-1)^m d_q^{(m)}$ be the ancilla differential built from $\tau_{\mathrm{aux}}$.

\begin{enumerate}
\item If $q$ is even, then for every $Y\in C_q(\rho_U)$ one has $\delta_E(Y)=0$, hence
\[
\sum_{m=0}^{q+1}(-1)^m\Tr\!\left(W\,d_q^{(m)}(Y)\right)
=\Tr\!\left(W\,\delta_E(Y)\right)=0
\]
for every Hermitian $W$ on $U^{(q+1)}$. In particular, $\mathrm{LED}(q)$ never occurs for even $q$.

\item If $q$ is odd, then $\rho_U$ is entangled (i.e.\ not fully separable on $U$) if and only if $\rho_U$ has $\mathrm{LED}(q)$. Moreover this equivalence is independent of the choice of faithful $\tau_{\mathrm{aux}}$.

More explicitly, if $W_U\in\mathrm{Wit}(U)$ satisfies $\Tr(W_U\rho_U)<0$, then for any faithful $\tau_{\mathrm{aux}}$ the choices
\[
Y:=\rho_U\otimes \tau_{\mathrm{aux}}^{\otimes q}\in C_q(\rho_U),
\qquad
W:=W_U\otimes \mathbf 1_{A_{q+1}}\in \mathrm{Wit}\!\left(U^{(q+1)}\right)
\]
yield
\[
\sum_{m=0}^{q+1}(-1)^m\Tr\!\left(W\,d_q^{(m)}(Y)\right) =\Tr(W_U\rho_U)<0.
\]
\end{enumerate}
\end{proposition}
\begin{proof}
Write $\tau:=\tau_{\mathrm{aux}}$.
Every $Y\in C_q(\rho_U)$ decomposes uniquely as
\[
Y=\iota_q(\rho_U)+Y_0,\qquad 
\iota_q(\rho_U)=\rho_U\otimes \tau^{\otimes q},\qquad 
Y_0\in V(U)\otimes V_0(H_{\mathrm{aux}})^{\otimes q}.
\]
For the reindexed cofaces, resetting the freshly inserted state does nothing, hence $d_q^{(0)}=d_q^{(1)}$ (because $E(\tau)=\tau$). Therefore the $m=0,1$ contributions cancel in \(\delta_E=\sum_{m=0}^{q+1}(-1)^m d_q^{(m)}\).

For $m\ge 2$, the coface $d_q^{(m)}$ applies $E$ to the $(m-1)$-st \emph{old} auxiliary leg. Since each auxiliary tensorand of $Y_0$ is traceless, applying $E$ to any old auxiliary leg kills $Y_0$, so $d_q^{(m)}(Y_0)=0$ for all $m\ge 2$. On the other hand, on $\iota_q(\rho_U)$ every auxiliary leg is already $\tau$, so each reset leaves it unchanged and
\[
d_q^{(m)}\bigl(\iota_q(\rho_U)\bigr)=\iota_{q+1}(\rho_U)=\rho_U\otimes\tau^{\otimes(q+1)}
\qquad (m\ge 2).
\]
Hence
\[
\delta_E(Y)=\Bigl(\sum_{m=2}^{q+1}(-1)^m\Bigr)\,\iota_{q+1}(\rho_U)
=
\begin{cases}
0, & q \text{ even},\\
\iota_{q+1}(\rho_U), & q \text{ odd}.
\end{cases}
\]

If $q$ is even, then $\delta_E(Y)=0$ for all $Y\in C_q(\rho_U)$, so for every Hermitian $W$ on $U^{(q+1)}$,
\[
\sum_{m=0}^{q+1}(-1)^m \Tr\!\bigl(W d_q^{(m)}(Y)\bigr)=\Tr\!\bigl(W\,\delta_E(Y)\bigr)=0.
\]
In particular LED$(q)$ cannot occur.

Assume $q$ is odd. If LED$(q)$ holds, there exist $Y\in C_q(\rho_U)$ and $W\in \mathrm{Wit}\bigl(U^{(q+1)}\bigr)$ such that
\[
0>\sum_{m=0}^{q+1}(-1)^m \Tr\!\bigl(W d_q^{(m)}(Y)\bigr)
=\Tr\!\bigl(W(\rho_U\otimes\tau^{\otimes(q+1)})\bigr).
\]
Define the witness restriction
\[
W_U:=\Tr_{A_{q+1}}\!\bigl(W(\mathbf 1_U\otimes\tau^{\otimes(q+1)})\bigr)\in V(U).
\]
For every fully separable $\sigma$ on $U$, the state $\sigma\otimes\tau^{\otimes(q+1)}$ is fully separable on $U^{(q+1)}$, so \(\Tr(W_U\sigma)=\Tr\bigl(W(\sigma\otimes\tau^{\otimes(q+1)})\bigr)\ge 0\). Thus $W_U\in \mathrm{Wit}(U)$, and moreover
\[
\Tr(W_U\rho_U)=\Tr\!\bigl(W(\rho_U\otimes\tau^{\otimes(q+1)})\bigr)<0,
\]
so $\rho_U$ is entangled.

Conversely, if $\rho_U$ is entangled, choose $W_U\in \mathrm{Wit}(U)$ with $\Tr(W_U\rho_U)<0$ and set
$Y:=\rho_U\otimes\tau^{\otimes q}$ and $W:=W_U\otimes \mathbf 1_{A_{q+1}}$.
Then $W\in\mathrm{Wit}\bigl(U^{(q+1)}\bigr)$ and
\[
\sum_{m=0}^{q+1}(-1)^m \Tr\!\bigl(W d_q^{(m)}(Y)\bigr)
=\Tr(W_U\rho_U)<0,
\]
so LED$(q)$ holds.

The odd-degree equivalence depends only on entanglement of $\rho_U$, hence it is independent of the faithful choice of $\tau_{\mathrm{aux}}$.
\end{proof}

Let \(\tau_{\mathrm{aux}}=\mathbf 1/d\) on \(H_{\mathrm{aux}}\). Then, in degree \(q=0\),
\[
d^{(0)}_0(X)=X\otimes \mathbf 1,\qquad
d^{(1)}_0(X)=E(X\otimes \mathbf 1)=X\otimes \mathbf 1,
\]
so \(\delta_{\mathrm E}=d^{(0)}_0-d^{(1)}_0= 0\) on \(\mathcal V(U^{(0)})\). Consequently, as real vector spaces,
\[
E^0(U)=\ker\left(\delta_{\mathrm E}:\mathcal V\left (U^{(0)}\right) \to \mathcal V\left (U^{(1)}\right) \right)\ \cong\ \mathcal V\left (U^{(0)}\right) .
\]
Indeed, \(E(Z)=\mathrm{Tr}_{\mathrm{aux}}(Z)\otimes \tau_{\mathrm{aux}}\) gives \(E(X\otimes\mathbf 1)=\mathrm{Tr}_{\mathrm{aux}}(\mathbf 1)\,X\otimes\tau_{\mathrm{aux}}
=d\,X\otimes(\mathbf 1/d)=X\otimes\mathbf 1\).

At \(q=0\) nontriviality is purely operational (see also Section \ref{sec:example}): a class \([X]\in E^0(U)\) is witness–nontrivial if and only if there exists a separability witness \(W\) on \(U\) with \(\mathrm{Tr}(WX)<0\). Thus, group exactness and witness–(non)triviality are logically distinct notions on this row.

\begin{HLblock}
\begin{theorem}
\label{thm:completeness-weak}
Let $\rho$ be a global state on $I$. If either of the following holds, then a genuine obstruction is present:

\begin{enumerate}
\item\label{R0} There exists a finite cover $\mathcal U$ with
\[
R^0(\mathcal U)\;=\;\ker\bigl(j:\mathcal V(I)\to H^0(\mathcal U,\mathcal V)\bigr)\;\neq\;0.
\]

\item\label{LED_item} (Operational obstruction) There exist a finite cover $\mathcal U$ and a patch $U_i\in\mathcal U$ such that the reduced state $\rho|_{U_i}$ satisfies $\mathrm{LED}(q)$ (Definition~\ref{def:LED}) for some $q\ge 0$. Then $\rho$ is entangled. Moreover, $\mathrm{LED}(q)$ can occur only for odd $q$. For even $q$ the signed stabilize--reset sum vanishes for every $Y\in\mathcal C^q(\rho|_{U_i})$.
\end{enumerate}

\noindent
If, in addition, there exists a representative $\widetilde Y\in\mathcal C^q(\rho|_{U_i})$ with $\delta_E\widetilde Y=0$ and $\widetilde Y\notin \mathrm{im}\,\delta_E$, then the $U_i$--column defines a nonzero class $[\widetilde Y]\in E^q(\mathcal U)$; in particular $\overline{E}^{\,q}:=\varinjlim_{\mathcal U}E^q(\mathcal U)$ is nonzero.
\end{theorem}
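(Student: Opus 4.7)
The approach splits along the two claims. Part~\ref{R0} is essentially definitional: $R^0(\mathcal U)=\ker j$, so $R^0(\mathcal U)\ne 0$ means $j$ fails to be injective, and this is exactly the type-\ref{B} non-uniqueness obstruction from the introduction. (Concretely, a nonzero $X\in\ker j$ is a traceless Hermitian operator on $H_I$ all of whose marginals $X|_{U_i}$ vanish, so $\rho\pm\varepsilon X$ are distinct global states sharing all marginals on $\mathcal U$ whenever $\rho$ is full-rank and $\varepsilon$ is sufficiently small.)

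For Part~\ref{LED_item}, the plan is to compute $\delta_E Y$ explicitly on the affine space $\mathcal C^q(\rho|_{U_i})$. Writing $Y=\rho|_{U_i}\otimes\tau^{\otimes q}+V$ with $V\in\mathcal V(U_i)\otimes\mathcal V_0(H_{\rm aux})^{\otimes q}$, two observations drive the calculation: (i) the reset $E$ on the freshly-inserted slot fixes $\tau_{a_{q+1}}$, forcing $d^{(0)}_q(Y)=d^{(1)}_q(Y)=Y\otimes\tau_{a_{q+1}}$; (ii) for each $i\ge 2$, the reset $E_{a_{i-1}}$ on a pre-existing auxiliary slot annihilates $V$ (every tensor factor of $V$ on $H_{\rm aux}$ lies in the traceless subspace $\mathcal V_0$) while preserving $\rho|_{U_i}\otimes\tau^{\otimes q}$. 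Hence $d^{(i)}_q(Y)=\rho|_{U_i}\otimes\tau^{\otimes(q+1)}$ for all $i\ge 2$, and the signed sum telescopes to
\[
\delta_E Y=\Bigl(\sum_{m=2}^{q+1}(-1)^m\Bigr)\rho|_{U_i}\otimes\tau^{\otimes(q+1)}=\begin{cases}0, & q\text{ even},\\ \rho|_{U_i}\otimes\tau^{\otimes(q+1)}, & q\text{ odd}.\end{cases}
\]
The even-$q$ vanishing is precisely the ``moreover'' statement: for even $q\ge 2$, the LED signed pairing $\Tr[W\,\delta_E Y]$ is identically zero, so LED$(q)$ is impossible.

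For odd $q\ge 1$, if LED$(q)$ holds via $W\in\mathsf{Wit}(U_i^{(q+1)})$, the computation gives $\Tr[W(\rho|_{U_i}\otimes\tau^{\otimes(q+1)})]<0$. Were $\rho|_{U_i}$ separable on $U_i$, then $\rho|_{U_i}\otimes\tau^{\otimes(q+1)}$ would be fully separable on $U_i^{(q+1)}$, forcing the pairing to be $\ge 0$; contradiction. Hence $\rho|_{U_i}$ is entangled, and since partial trace preserves separability, $\rho$ is entangled on $I$. The $q=0$ case is immediate from the separate definition in terms of a witness on $U_i$. For the addendum, given $\widetilde Y\in\mathcal C^q(\rho|_{U_i})$ with $\delta_E\widetilde Y=0$ and $\widetilde Y\notin\mathrm{im}\,\delta_E$, form the cochain $c\in C^{0,q}(\mathcal U)=\prod_j\mathcal V(U_j^{(q)})$ with $c_i=\widetilde Y$ and $c_j=0$ for $j\ne i$; since $\delta_E$ acts componentwise on the cover, $c$ is a $\delta_E$-cocycle, and any boundary relation $c=\delta_E b$ would force $\widetilde Y=\delta_E b_i\in\mathrm{im}\,\delta_E$, a contradiction. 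Hence $[c]\ne 0$ in $E^q(\mathcal U)$, and $\overline E^{\,q}\ne 0$ follows via the canonical map into the colimit. The main obstacle is the $\delta_E$-computation itself, particularly verifying uniform annihilation of $V$ by every $E_{a_{i-1}}$ with $i\ge 2$, which hinges on the precise placement of $\mathcal V_0^{\otimes q}$ in the definition of $\mathcal C^q(\rho_U)$.
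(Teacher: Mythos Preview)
Your proposal is correct and follows essentially the same approach as the paper's proof. Both arguments decompose $Y=\rho_U\otimes\tau^{\otimes q}+Y_0$ with $Y_0$ traceless on every auxiliary factor, use $d^{(0)}_q=d^{(1)}_q$ together with annihilation of $Y_0$ by each $d^{(m)}_q$ ($m\ge 2$) to reduce $\delta_E Y$ to $\big(\sum_{m=2}^{q+1}(-1)^m\big)\,\rho_U\otimes\tau^{\otimes(q+1)}$, and then read off the even/odd dichotomy; the only cosmetic difference is that for odd $q$ the paper explicitly constructs the restricted witness $W_U:=r^{U^{(q+1)}}_U(W)\in\mathsf{Wit}(U)$ with $\Tr(W_U\rho_U)<0$, whereas you argue the contrapositive directly from separability of $\rho_U\otimes\tau^{\otimes(q+1)}$---these are adjoint formulations of the same step.
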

\end{HLblock}

\begin{proof}
Statement \eqref{R0} is immediate from the definition of $R^0$.

For \eqref{LED_item}, fix $U:=U_i$ and write $q\ge 1$ (the case $q=0$ is the usual witness test on $U$). By Definition~\ref{def:LED}, there exist $Y\in\mathcal C^q(\rho_U)$ and $W\in\mathsf{Wit}\big(U^{(q+1)}\big)$ with
\[
\sum_{m=0}^{q+1}(-1)^m\,\Tr\!\big[W\,d^{(m)}_q(Y)\big]\;<\;0.
\]
Decompose $Y=\iota_q(\rho_U)+Y_0$ with $Y_0\in \mathcal V(U)\otimes \mathcal V_0(H_{\rm aux})^{\otimes q}$. Using $d^{(0)}_q=d^{(1)}_q$ (state insertion followed by reset on the new slot) and that each $d^{(m)}_q$, $m\ge 2$, kills the traceless part on the $m{-}1$‑st old auxiliary leg, one obtains
\[
\delta_E(Y)\;=\;\sum_{m=2}^{q+1}(-1)^m\,d^{(m)}_q\big(\iota_q(\rho_U)\big)
\;=\;\Big(\sum_{m=2}^{q+1}(-1)^m\Big)\,\iota_{q+1}(\rho_U)
\;=\;
\begin{cases}
0, & q\ \text{even},\\[2pt]
\iota_{q+1}(\rho_U), & q\ \text{odd}.
\end{cases}
\]
Hence, for even $q$ the signed combination vanishes for every $Y\in\mathcal C^q(\rho_U)$, so LED$(q)$ cannot occur.

Assume $q$ is odd. Then $\delta_E(Y)=\iota_{q+1}(\rho_U)$ and the LED inequality reduces to
\[
\Tr\!\big[W\,(\rho_U\otimes\tau_{\rm aux}^{\otimes(q+1)})\big]\;<\;0.
\]
Define the (functorial) witness restriction
\[
W_U\ :=\ r^{U^{(q+1)}}_{U}(W)
\ =\ \Tr_{A_{q+1}}\!\big[\,W\,(\mathbf 1_U\otimes \tau_{\rm aux}^{\otimes(q+1)})\big]
\ \in\ \mathsf{Wit}(U).
\]
For any separable $\sigma$ on $U$, the state $\sigma\otimes\tau_{\rm aux}^{\otimes(q+1)}$ is separable on $U^{(q+1)}$, hence
\(
\Tr(W_U\sigma)=\Tr\!\big[W(\sigma\otimes\tau_{\rm aux}^{\otimes(q+1)})\big]\ge0
\),
so $W_U$ is indeed a separability witness on $U$. By adjointness of $r$ and $\iota$,
\[
\Tr(W_U\rho_U)
=\Tr\!\big[W\,(\rho_U\otimes\tau_{\rm aux}^{\otimes(q+1)})\big]\;<\;0,
\]
which shows that $\rho_U$ is entangled. Since every globally separable state has separable marginals, the global state $\rho$ is entangled as well.

Then the last statement of the theorem follows by definition. 
\end{proof}

\vskip0.3cm
Bell tests are a standard tool for certifying entanglement (as highlighted by the 2022 Nobel Prize to Clauser, Aspect, and Zeilinger). It provides a natural $q=0$ separability witness on a two‑site patch (Proposition~\ref{prop:CHSH-q0}). As shown in Proposition~\ref{prop:CHSH-no-new-q>0}, their ancilla extensions either vanish (even $q$) or collapse to the same expectation (odd $q$), offering no additional power for $q>0$. This motivates the LED$(q)$ machinery introduced in this work.

\begin{proposition}\label{prop:CHSH-q0}
Let $U=\{i,j\}$ be a two-site patch and let $A_0,A_1$ (on $i$) and $B_0,B_1$ (on $j$) be dichotomic observables with spectrum $\{\pm1\}$. Define the CHSH operator
\[
\mathcal{B}\;=\;A_0\otimes(B_0+B_1)\;+\;A_1\otimes(B_0-B_1),
\qquad
W_{\mathrm{CHSH}}\;:=\;2\,\mathbf{1}-\mathcal{B}.
\]
Then $W_{\mathrm{CHSH}}\in C^*_{\!\mathrm{sep}}(U)$, i.e. $\Tr[W_{\mathrm{CHSH}}\sigma]\ge 0$ for every separable
$\sigma$ on $U$. Consequently,
\[
\Tr\!\left[W_{\mathrm{CHSH}}\,\rho|_U\right]<0
\quad\Longleftrightarrow\quad
\langle \mathcal{B}\rangle_{\rho|_U}>2
\]
certifies local entanglement on $U$ at degree $q=0$ in our framework.
\end{proposition}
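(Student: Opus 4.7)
The plan is to reduce the assertion $W_{\mathrm{CHSH}}\in C^*_{\mathrm{sep}}(U)$ to the classical CHSH bound on product states and then extend by convexity, after which the stated equivalence is immediate from the definition of $W_{\mathrm{CHSH}}$.

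First I would observe that, since the separable cone $C_{\mathrm{sep}}(U)$ is the conic hull of product states and the trace pairing is linear, it suffices to verify $\mathrm{Tr}[W_{\mathrm{CHSH}}\,(\sigma_i\otimes\sigma_j)]\ge 0$ for arbitrary local states $\sigma_i$ on $H_i$ and $\sigma_j$ on $H_j$. Setting $a_k:=\mathrm{Tr}(A_k\sigma_i)$ and $b_k:=\mathrm{Tr}(B_k\sigma_j)$ for $k\in\{0,1\}$, the spectral assumption $A_k,B_k\in\mathrm{Herm}$ with spectrum $\{\pm1\}$ gives $a_k,b_k\in[-1,1]$. Computing on the product state yields
\[
\mathrm{Tr}\!\bigl[\mathcal{B}\,(\sigma_i\otimes\sigma_j)\bigr]
\;=\;a_0(b_0+b_1)+a_1(b_0-b_1)
\;=\;(a_0+a_1)b_0+(a_0-a_1)b_1.
\]

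Next I would bound this quantity by $2$. Since the final expression is linear in $(b_0,b_1)$, its maximum over the box $[-1,1]^2$ equals $|a_0+a_1|+|a_0-a_1|$, and the elementary identity $|a_0+a_1|+|a_0-a_1|=2\max(|a_0|,|a_1|)\le 2$ for $a_0,a_1\in[-1,1]$ finishes the bound. Hence $\mathrm{Tr}[\mathcal{B}\,\sigma_{\mathrm{prod}}]\le 2$ on all product states, and by convexity (and closure) the inequality extends to every $\sigma\in C_{\mathrm{sep}}(U)$ of trace one; homogeneity in $\sigma$ then gives $\mathrm{Tr}[\mathcal{B}\,\sigma]\le 2\,\mathrm{Tr}(\sigma)$ for all $\sigma\in C_{\mathrm{sep}}(U)$, so $\mathrm{Tr}[W_{\mathrm{CHSH}}\,\sigma]=2\,\mathrm{Tr}(\sigma)-\mathrm{Tr}[\mathcal{B}\,\sigma]\ge 0$, i.e.\ $W_{\mathrm{CHSH}}\in C^*_{\mathrm{sep}}(U)$.

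The remaining equivalence $\mathrm{Tr}[W_{\mathrm{CHSH}}\rho|_U]<0 \Longleftrightarrow \langle\mathcal{B}\rangle_{\rho|_U}>2$ follows tautologically from $W_{\mathrm{CHSH}}=2\mathbf 1-\mathcal B$ together with $\mathrm{Tr}(\rho|_U)=1$. Interpreting this in the LED framework of Definition~\ref{def:LED}, the case $q=0$ reads precisely ``there exists $W\in\mathsf{Wit}(U)$ with $\mathrm{Tr}(W\rho_U)<0$'', and taking $W=W_{\mathrm{CHSH}}$ produces exactly the CHSH violation certificate. There is no real obstacle here beyond the classical CHSH bookkeeping; the only point requiring mild care is the passage from $\pm1$-spectrum observables to the interval bound $a_k,b_k\in[-1,1]$ on expectation values, which is handled by noting that $\pm\mathbf 1\succeq \pm A_k$ and $\pm\mathbf 1\succeq \pm B_k$, so $|a_k|,|b_k|\le 1$ for every state.
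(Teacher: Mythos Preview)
Your proof is correct and follows essentially the same approach as the paper: both reduce to the classical CHSH bound $\langle\mathcal B\rangle_\sigma\le 2$ on separable states and then read off the equivalence from $W_{\mathrm{CHSH}}=2\mathbf 1-\mathcal B$. The only difference is that the paper simply invokes ``the CHSH inequality'' as a known fact, whereas you spell out the elementary bound $|a_0+a_1|+|a_0-a_1|\le 2$ on product states and then extend by convexity.
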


\begin{proof}
For separable $\sigma=\sum_k p_k\,\alpha_k\otimes\beta_k$ with $\alpha_k,\beta_k$ single-site states, the CHSH inequality gives $\langle\mathcal{B}\rangle_\sigma\le 2$. Hence $\Tr[(2\mathbf{1}-\mathcal{B})\sigma]\ge 0$, so $W_{\mathrm{CHSH}}\in C^*_{\!\mathrm{sep}}(U)$. The stated equivalence is immediate.
\end{proof}

\begin{remark}\label{rem:no-converse}
Failure to violate CHSH does not imply separability of $\rho|_U$. Within a fixed patch one may still find other $q=0$ witnesses with negative expectation. Moreover, Proposition~\ref{prop:LED-tau-independence} shows that for odd $q\ge 1$, LED$(q)$ reduces to such a $q=0$ witness test on the same patch, while for even $q\ge 1$ it is impossible.
\end{remark}

\begin{proposition}\label{prop:CHSH-no-new-q>0}
Fix a patch $U$ and $q\ge1$. Consider the ancilla–cosimplicial thickening $U^{(q+1)}$ and the extended witness $\widetilde{W}:=W_{\mathrm{CHSH}}\otimes\mathbf{1}_{\mathrm{anc}}$ on $U^{(q+1)}$. Then:
\begin{enumerate}
\item[(a)] (\emph{Even $q$}) For every $Y\in C_q(\rho_U)$,
\[
\sum_{m=0}^{q+1}(-1)^m\,\Tr\!\big[\widetilde{W}\,d^{(m)}_q(Y)\big] \;=\;0.
\]
Thus no LED$(q)$ detection is possible at even degrees.
\item[(b)] (\emph{Odd $q$}) For every $Y\in C_q(\rho_U)$,
\[
\sum_{m=0}^{q+1}(-1)^m\,\Tr\!\big[\widetilde{W}\,d^{(m)}_q(Y)\big]
\;=\;\Tr\!\left[W_{\mathrm{CHSH}}\,\rho_U\right].
\]
Equivalently, LED$(q)$ with $\widetilde{W}$ holds iff the $q=0$ CHSH test holds on $U$.
\end{enumerate}

In particular, the Bell/CHSH test provides no additional detection power for any $q>0$.
\end{proposition}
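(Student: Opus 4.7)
The plan is to exploit the product form $\widetilde W=W_{\mathrm{CHSH}}\otimes\mathbf 1_{\mathrm{anc}}$, which makes the trace against $\widetilde W$ depend on $d^{(m)}_q(Y)$ only through the full partial trace on the $q{+}1$ auxiliary legs. I would begin with the trace identity
\[
\Tr\!\big[\widetilde W\, Z\big]\;=\;\Tr\!\big[W_{\mathrm{CHSH}}\,\Tr_{A_{q+1}}(Z)\big]\qquad\big(Z\in\mathcal V(U^{(q+1)})\big),
\]
which follows from $\widetilde W$ being identity on all ancilla legs. So everything reduces to computing $\Tr_{A_{q+1}}\!\big(d^{(m)}_q(Y)\big)$ for each coface $m$.

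The main observation is that each coface is trace-preserving on the auxiliary factors. Concretely, writing $Y\otimes\tau_{\mathsf a_{q+1}}\in\mathcal V(U^{(q+1)})$, we have $d^{(0)}_q(Y)=Y\otimes\tau$, $d^{(1)}_q(Y)=(\mathrm{id}^{\otimes q}\!\otimes E)(Y\otimes\tau)$, and for $2\le m\le q+1$ the map $d^{(m)}_q$ applies $E$ to some pre-existing auxiliary leg. Since $E(\cdot)=\Tr_{\mathrm{aux}}(\cdot)\otimes\tau_{\mathrm{aux}}$ is trace-preserving, tracing out any leg on which $E$ has acted yields the same partial trace as if $E$ had not been applied. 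Consequently, for every $m\in\{0,1,\dots,q+1\}$,
\[
\Tr_{A_{q+1}}\!\big(d^{(m)}_q(Y)\big)\;=\;\Tr_{A_q}(Y)\cdot\Tr(\tau_{\mathsf a_{q+1}})\;=\;\Tr_{A_q}(Y).
\]
Next, I would unpack $Y\in\mathcal C^q(\rho_U)$ as $Y=\rho_U\otimes\tau^{\otimes q}+Y_0$ with $Y_0\in\mathcal V(U)\otimes\mathcal V_0(H_{\mathrm{aux}})^{\otimes q}$; since every auxiliary factor in $Y_0$ is traceless, $\Tr_{A_q}(Y_0)=0$, so $\Tr_{A_q}(Y)=\rho_U$.

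Combining these two inputs, the signed stabilize–reset sum collapses to a single scalar multiple of the $q{=}0$ CHSH expectation:
\[
\sum_{m=0}^{q+1}(-1)^m\,\Tr\!\big[\widetilde W\,d^{(m)}_q(Y)\big]
\;=\;\Tr\!\big[W_{\mathrm{CHSH}}\,\rho_U\big]\cdot\!\sum_{m=0}^{q+1}(-1)^m.
\]
Evaluating the alternating sum $\sum_{m=0}^{q+1}(-1)^m$ which equals $0$ when $q+1$ is odd (i.e. $q$ even) and $1$ when $q+1$ is even (i.e. $q$ odd) gives exactly the two cases (a) and (b). The final sentence is then immediate: the expression is either $0$ or the bare $q{=}0$ CHSH pairing, so the ancilla machinery adds no detection power beyond $q=0$.

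The entire argument is essentially bookkeeping once the product structure of $\widetilde W$ is recognized, so I do not anticipate any serious obstacle; the only delicate point is being careful that $d^{(0)}_q$ and $d^{(1)}_q$ coincide (they do, since $E(\tau)=\tau$ and $\Tr(\tau)=1$) and hence really contribute $(+1)-(+1)=0$ to the alternating sum, which is why the parity of $q+1$, rather than $q-1$, determines the sign count.
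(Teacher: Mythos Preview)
Your argument is correct and follows essentially the same approach as the paper: both exploit the product structure $\widetilde W=W_{\mathrm{CHSH}}\otimes\mathbf 1_{\mathrm{anc}}$ together with the decomposition $Y=\rho_U\otimes\tau^{\otimes q}+Y_0$ to reduce the signed sum to a scalar multiple of $\Tr[W_{\mathrm{CHSH}}\rho_U]$. The only cosmetic difference is ordering: the paper first invokes the identity $\delta_E(Y)=0$ (even $q$) or $\iota_{q+1}(\rho_U)$ (odd $q$) from Theorem~\ref{thm:completeness-weak} and then pairs with $\widetilde W$ via the witness restriction, whereas you pair each coface with $\widetilde W$ term-by-term and sum the alternating signs directly; your route is slightly more self-contained here since it does not appeal to the earlier theorem.
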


\begin{proof}
By the identities of the ancilla differential, for any $Y\in C_q(\rho_U)$ one has $\delta_E(Y)=0$ for even $q$, and $\delta_E(Y)=\iota_{q+1}(\rho_U)$ for odd $q$. Using the
functorial witness restriction
\[
r^{U^{(q+1)}}_{U}(\widetilde{W})=\Tr_{\mathrm{anc}}\!\left[\widetilde{W}\,(\mathbf{1}_U\otimes\tau_{\mathrm{aux}}^{\otimes(q+1)})\right]
= W_{\mathrm{CHSH}},
\]
the signed stabilize–reset pairing reduces to $0$ in case (a) and to $\Tr[W_{\mathrm{CHSH}}\rho_U]$ in case (b). This gives the two claims.
\end{proof}

Consequently, CHSH provides no additional detection power under ancilla thickening: for even $q$ the signed stabilize--reset pairing vanishes, and for odd $q$ it reduces to the $q=0$ CHSH expectation value. More generally, Proposition~\ref{prop:LED-tau-independence} shows the same parity collapse for LED$(q)$ with arbitrary separability witnesses: odd $q\ge 1$ detection is equivalent to an ordinary
$q=0$ witness test on the same patch, while even $q\ge 1$ detection is impossible.

\subsection{\label{sec:example}Examples of local entanglement groups}
Assume throughout the reindexed cofaces of Section \ref{sec:local_entanglement}: for $X\in\mathcal V(U^{(q)})$
\[
d^{(0)}_q(X)=X\otimes \tau_{\mathsf a_{q+1}},~
d^{(1)}_q(X)=(\mathrm{id}^{\otimes q}\otimes E)\bigl(X\otimes \tau_{\mathsf a_{q+1}}\bigr),~
d^{(i)}_q(X)=(\mathrm{id}^{\otimes(i-2)}\otimes E\otimes \mathrm{id}^{\otimes(q-i+2)})\bigl(X\otimes \tau_{\mathsf a_{q+1}}\bigr)
\]
for $2\le i\le q+1$, where $d^{(1)}_q$ resets the new slot and $d^{(i)}_q$ with $i\ge2$ resets the pre‑existing slot $(i{-}1)$. Let $\delta_E=\sum_{m=0}^{q+1}(-1)^m d^{(m)}_q$.

\begin{definition}\label{def:witness-trivial}
Let $U$ be a single patch and consider the ancilla column
\[
V\!\left(U^{(0)}\right)\xrightarrow{\ \delta_E\ }V\!\left(U^{(1)}\right)\xrightarrow{\ \delta_E\ }\cdots .
\]
At degree $q$ we distinguish the following two notions.

\begin{enumerate}
\item \textbf{Cohomological nontriviality.}
A class $[Y]\in E^q(U)=H^q\!\left(V(U^{(\bullet)}),\delta_E\right)$ is nonzero if $Y\in V(U^{(q)})$ satisfies $\delta_E Y=0$ and $Y\notin \mathrm{im}(\delta_E)$.

\item \textbf{LED pairing (obstruction detection).}
For $Y\in V(U^{(q)})$ and any Hermitian operator $W$ on $U^{(q+1)}$ one has the identity
\[
\sum_{m=0}^{q+1}(-1)^m\Tr\!\left(W\,d_q^{(m)}(Y)\right)
=\Tr\!\left(W\,\delta_E(Y)\right).
\]
In particular, this signed pairing vanishes whenever $Y$ is $\delta_E$--closed.
Thus the signed ``stabilize--then--reset'' pairing detects the \emph{obstruction cocycle} $\delta_E(Y)$ (as in $\mathrm{LED}(q)$),
and it does \emph{not} detect the cohomology class $[Y]\in E^q(U)$.
\end{enumerate}
\end{definition}

We first consider a basic \(q{=}1\) identity. Choose an auxiliary operator basis $\{S_b\}_{b\ge0}$ with $S_0:=\tau_{\mathrm{aux}}$ and traceless $S_b$ for $b\ge1$. Then $E(S_0)=S_0$ and $E(S_b)=0$ ($b\ge1$). Every $Y\in\mathcal V(U^{(1)})$ decomposes uniquely as
\[
Y \;=\; A\otimes S_0 \;+\; \sum_{b\ge 1} B_b\otimes S_b,\qquad A,B_b\in\mathcal V(U).
\]
With these cofaces $d^{(0)}_1=d^{(1)}_1$ and $d^{(2)}_1=(E_1\!\cdot\,)\otimes\mathrm{id}$, hence
\begin{equation}\label{eq:q1-identity}
\delta_E(Y) \;=\; (E_1Y)\otimes S_0 \;=\; (A\otimes S_0)\otimes S_0.
\end{equation}
Here $E_1$ means $E$ acting on the pre‑existing auxiliary leg ($d^{(2)}_1=E_1\otimes \mathrm{id}$). Thus $Y$ is $\delta_E$–closed if and only if its old slot is entirely traceless, i.e.\ $A=0$. Since $\delta_E\equiv 0$ in degree $0$, every closed $Y$ at $q=1$ represents a nonzero class in $E^1(U)$.

\begin{proposition}\label{prop:examples}
Let $\tau_{\mathrm{aux}}=\mathbf 1/d$ and define cofaces as above.

\begin{enumerate}
\item[\textup{(Bell)}] For the Bell state $\ket{\Psi^\pm}=\frac{\ket{10}\pm\ket{01}}{\sqrt{2}},\ket{\Phi^\pm}=\frac{\ket{00}\pm\ket{11}}{\sqrt{2}}$ on $S=\{a,b\}$ one has $d^{(0)}_0=d^{(1)}_0$, hence $\delta_E\equiv 0$ on $\mathcal V(S^{(0)})$ and $E^0(S)\cong \mathcal V(S)$. Moreover $[\rho_{\Psi^\pm}],[\rho_{\Phi^\pm}]$ are witness–nontrivial at $q=0$.

\item[\textup{($W$‑state)}] Let $\ket{W_N}=\sum_{j=1}^N\frac{\ket{0\cdots010\cdots0}}{\sqrt{N}}$ be the $W_N$-state (i.e., the uniform superposition of all basis states with a single excitation at $j=1,\cdots,N$). For $S\subset I=\{1,\cdots,N\}$, the marginal state of $\ket{W_N}$ on $S$ is defined by the reduced density matrix $\rho_S=\Tr_{I\setminus S}\ket{W_N}\bra{W_N}$. In particular, for $|S|=2$, $\rho_S=\frac{N-2}{N}\ket{00}\bra{00}+\frac{2}{N}\ket{\Psi^+}\bra{\Psi^+}$, which is entangled for all $N\ge3$, hence is witness‑nontrivial at degree $q=0$. 
\item[\textup{(GHZ\(_{k+1}\))}] Let $U=\{1,\dots,k+1\}$ and let $\rho_{\mathrm{GHZ}}=|\mathrm{GHZ}_{k+1}\rangle\langle \mathrm{GHZ}_{k+1}|$.
Then:
\begin{enumerate}
\item $\rho_{\mathrm{GHZ}}$ is entangled on $U$, hence it is detected already at $q=0$ by some witness $W_U\in\mathrm{Wit}(U)$:
$\Tr(W_U\rho_{\mathrm{GHZ}})<0$.
\item For every odd $q\ge 1$ and every faithful $\tau_{\mathrm{aux}}$, $\rho_{\mathrm{GHZ}}$ satisfies $\mathrm{LED}(q)$ on $U$.
One explicit choice is
$Y=\rho_{\mathrm{GHZ}}\otimes \tau_{\mathrm{aux}}^{\otimes q}$ and $W=W_U\otimes \mathbf 1_{A_{q+1}}$, which gives
\[
\sum_{m=0}^{q+1}(-1)^m\Tr\!\left(W\,d_q^{(m)}(Y)\right)=\Tr(W_U\rho_{\mathrm{GHZ}})<0.
\]
\item For even $q$, $\mathrm{LED}(q)$ never occurs (Proposition~\ref{prop:LED-tau-independence}).
\end{enumerate}
\end{enumerate}
\end{proposition}

\begin{proof}
(Bell \& $W$ states) At $q=0$ one has $d^{(0)}_0(X)=X\otimes \tau_{\rm aux}$ and 
$d^{(1)}_0(X)=(\mathrm{id}\otimes E)(X\otimes \tau_{\rm aux})=X\otimes \tau_{\rm aux}$, so $\delta_E\equiv0$. Entangled two–qubit states admit a separable witness with negative expectation, hence detection at $q=0$.

(GHZ$_3$ case \(k{=}2\)). Take $U=\{1,2,3\}$ and write
\(
Y=\sum_{b\ge1}B_b\otimes S_b\in\mathcal V(U^{(1)})
\)
built linearly from $\rho_{\mathrm{coh}}$ so that the old slot is purely traceless. By \eqref{eq:q1-identity} $\delta_E(Y)=0$, and since $\delta_E\equiv0$ at $q=0$, $[Y]\neq 0\in E^1(U)$. For operational detection, set $Y_{\mathrm{op}}:=\rho_{\mathrm{GHZ}}\otimes S_0\in \mathcal V(U^{(1)})$. Then
\[
\delta_E(Y_{\mathrm{op}})=(E_1Y_{\mathrm{op}})\otimes S_0
=\big(\rho_{\mathrm{GHZ}}\otimes S_0\big)\otimes S_0.
\]
Pick a separable GHZ witness $W_U$ on $U$ with $\Tr(W_U\rho_{\mathrm{GHZ}})<0$, and choose any $R\succeq 0$ on the old slot with $\Tr(R S_0)>0$. For $W=W_U\otimes R\otimes S_0$ one gets
\[
\sum_{m=0}^{2}(-1)^m\,\Tr\!\big[W\,d^{(m)}_1(Y_{\mathrm{op}})\big]
=\Tr\big[W\,\delta_E(Y_{\mathrm{op}})\big]
=\Tr(W_U\rho_{\mathrm{GHZ}})\,\Tr(R S_0)\,\Tr(S_0^2)\;<\;0.
\]

(GHZ$_{k+1}$). The pure state $\rho_{\mathrm{GHZ}}$ is entangled on $U$, hence admits a separability witness $W_U\in\mathrm{Wit}(U)$ with $\Tr(W_U\rho_{\mathrm{GHZ}})<0$. For odd $q\ge 1$, Proposition~\ref{prop:LED-tau-independence} gives LED$(q)$, e.g. with
$Y=\rho_{\mathrm{GHZ}}\otimes\tau_{\mathrm{aux}}^{\otimes q}$ and $W=W_U\otimes\mathbf{1}_{A^{q+1}}$. For even $q\ge 1$, Proposition~\ref{prop:LED-tau-independence} implies $\delta_E(Y)=0$ for all $Y\in C_q(\rho_{\mathrm{GHZ}})$, so LED$(q)$ cannot occur.
\end{proof}

\begin{table}[h]
\centering
\begin{tabular}{c|c|c}
degree $q$ & $\delta_E$ on $C_q(\rho_U)$ & operational consequence \\ \hline
$q=0$ & $\delta_E\equiv 0$ &
LED$(0)$ reduces to a usual separability witness on $U$ \\ \hline
even $q$ & $\delta_E(Y)=0~\forall Y\in C_q(\rho_U)$ &
the signed stabilize--reset pairing vanishes \\ \hline
odd $q$ & $\delta_E(Y)=\rho_U\otimes\tau_{\mathrm{aux}}^{\otimes(q+1)}~\forall Y\in C_q(\rho_U)$ &
LED$(q)$ holds iff $\rho_U$ is entangled
\end{tabular}
\caption{Parity collapse of the signed stabilize--reset pairing for the reindexed ancilla differential
(Proposition~\ref{prop:LED-tau-independence}). The conclusion is independent of the faithful choice of $\tau_{\mathrm{aux}}$ (Lemma~\ref{lem:tau-independence-cosimplicial}).}
\end{table}

\medskip
\noindent
These discussions are summarized in the following table.
\begin{center}
\renewcommand{\arraystretch}{1.15}
\begin{tabular}{l|p{0.3\linewidth}|p{0.35\linewidth}}
\hline
invariant & interpretation when it is $0$ & interpretation when it is $\neq 0$\\
\hline
$R^0=\ker(j)$ & If a global class exists, it is unique (injectivity of $j$). &
Multiple inequivalent global classes share the same marginals. Their differences are detected by witnesses.\\
\hline
Feasibility (PSD / sep.) & A global (PSD / separable) realization exists. &
Infeasibility is certified by a finite family of (separable) witnesses.\\
\hline
$E^q$ & No cohomological $(q\!+\!1)$-partite content confined to a single patch. Operational detection at degree $q$ may still be possible via the signed reset pairing.
& Some patch contains cohomological $(q+1)$-partite content. The first nonzero
cohomology degree is $q$.\\
\hline
\end{tabular}
\end{center}

If a chosen cover is too coarse to include the entangled block inside a single patch, the corresponding $E^{q}(\mathcal U)$ may look trivial simply because no patch detects the block. Refining the cover to add such a patch turns the appropriate $E^{q}$ on. In the colimit $\overline{E}^{\,q}=\varinjlim_{\mathcal U}E^{q}(\mathcal U)$ this refinement is automatic. On the single–patch cover $\{U\}$, $j$ is bijective, so $Q^{0}(\{U\})=R^{0}(\{U\})=0$. See Appendix B for procedure of entanglement test.

\section{A differential geometric perspective of the obstruction}
\label{sec:diff-form-obstruction}
Throughout this section, we fix a finite-dimensional complex Hilbert space $H\simeq\mathbb C^r$. We write $GL(H)$ for the set of all complex-linear isomorphisms on \(H\) and \(U(H):=\{U\in GL(H): U^\dagger U=I\}\simeq U(r)\). We consider a smooth manifold of parameters \(X\) and let \(\rho:X\to\mathcal D_{\mathrm{full}}(H)\) be a smooth field of full–rank density matrices on a fixed finite–dimensional Hilbert space \(H=\bigotimes_{j\in I}H_j\). The ordered presheaf \(\mathcal V\) remains as in the discrete theory, but we now organize the geometry on \(X\) via the principal \(U(r)\)–bundle of amplitudes, where \(r=\dim H\).

For each \(x\in X\) choose an amplitude \(W(x)\in GL(H)\) with \(\rho(x)=W(x)W(x)^{\dagger}\). The right action \(W\mapsto W u\) \((u\in U(H))\) leaves \(\rho\) unchanged, so local choices \(\{W_i\}\) over a good cover \(\mathcal U=\{U_i\}\) differ by unitaries \(u_{ij}:U_{ij}\to U(H)\),
\(
   W_i=W_j\,u_{ij}
\),
forming a Čech \(1\)–cocycle \(u=\{u_{ij}\}\in Z^{1}(\mathcal U,\,U(H))\).
The Uhlmann connection \(\mathcal A\in\Omega^{1}(X,\mathfrak u(H))\) is the unique unitary connection on the amplitude bundle whose horizontal spaces satisfy \(W^{\dagger}dW=dW^{\dagger}W\) (equivalently, the parallel transport maximizes Uhlmann fidelity) \cite{UHLMANN1986229}. Its curvature \( F=d A+ A\wedge A\in\Omega^{2}(X,\mathfrak u(H))\) is globally defined. The associated Chern–Weil forms are
\[
  c_k^{\mathrm{Uhl}}(\rho)\;:=\;\frac{1}{(2\pi i)^k}\,\Tr\big( F^{\,k}\big)
  \;\in\;\Omega^{2k}(X),\qquad d\,c_k^{\mathrm{Uhl}}(\rho)=0,
\]
giving de Rham classes \([c_k^{\mathrm{Uhl}}(\rho)]\in H^{2k}_{\mathrm dR}(X)\). For pure states this reduces to the usual Berry connection on the projective line bundle.

Given a smooth field of witnesses \(W:X\to C_{\sep}^{\ast}(H)\), the natural pairing with \(V\) produces scalar forms \(\omega^{(k)}_{W}:=\Tr(W F^{\,k})\), so that pairing the ordered Čech classes with \(W\) lands in de Rham cohomology. On overlaps \(U_{ij}\) the unitary transition maps \(u_{ij}\) define \(u\in Z^{1}(\mathcal U,U(H))\). For \(k\ge0\), the higher coboundaries \(\delta^{k}(u)\in C^{k+1}(\mathcal U,U(H))\) represent obstruction classes in Čech cohomology. Resetting order gives the standard class \([\delta^{k}(u)]\in H^{k+1}(\mathcal U,U(H))\). If we retain the order, we regard \(\delta^{k}(u)\) inside the ordered presheaf and then pair with smooth witness fields to compare with differential forms.

Since $C_{\mathrm{sep}}^{*}(H)$ depends on a fixed factorization $H=\bigotimes_j H_j$ and is not preserved by arbitrary $U(H)$, we treat a witness field as an $\mathrm{End}(H)$–valued weight (a section of the adjoint bundle) and make the Čech–de Rham comparison in a fixed gauge (or restrict the gauge group to local unitaries $\prod_j U(d_j)$).

\begin{HLblock}
\begin{theorem}
\label{thm:global-uhl}
Let $W_{\mathrm{sep}}:X\to C^*_{\mathrm{sep}}(H)$ be a smooth witness field which is covariantly constant with respect to the Uhlmann connection $A$ on the amplitude bundle, i.e.\ $D_AW_{\mathrm{sep}}=0$. Then for each $k\ge 1$ the Bott--Shulman--Stasheff/Dupont construction \cite{BOTT197643,DUPONT1976233} produces a Čech--de Rham cocycle $\Xi_k(W_{\mathrm{sep}},A,u)\in\mathrm{Tot}^{2k}$ whose de Rham class is
\begin{equation}
\label{eq:cocycle}
[\Xi_k(W_{\mathrm{sep}},A,u)]_{\mathrm{Tot}}
\;=\;
\Big[\frac{1}{(2\pi i)^k}\Tr\big(W_{\mathrm{sep}}F^{\,k}\big)\Big]_{\mathrm{dR}}
\in H^{2k}_{\mathrm{dR}}(X).
\end{equation}
In particular, for $W_{\mathrm{sep}}=\mathbf 1$ this recovers $[c_k^{\mathrm{Uhl}}(\rho)]_{\mathrm{dR}}$.
\end{theorem}
\end{HLblock}

\begin{proof}
Fix a good cover $\mathcal U=\{U_i\}$ of $X$. Choose local amplitudes $W_i:U_i\to GL(\mathcal H)$ with $\rho=W_iW_i^\dagger$. On overlaps $U_{ij}:=U_i\cap U_j$ they are related by unitary transition maps $u_{ij}:U_{ij}\to U(\mathcal H)$ via
\[
W_j = W_i\,u_{ij},
\]
so that on triple overlaps $U_{ijk}$ one has $u_{ij}u_{jk}=u_{ik}$.

\smallskip
\noindent
\emph{Step 1.}
We regard $W$ as a section of the adjoint bundle $\mathrm{Ad}(H)=X\times^{U(H)}\End(H)$, i.e.\ on overlaps $W$ transforms as $W|_{U_i}=u_{ij}^{-1}\,W|_{U_j}\,u_{ij}$. For each $i$ let $A_i\in\Omega^{1}(U_i,\mathfrak u(H))$ be the local Uhlmann connection form, with curvature $F_i=dA_i+A_i\!\wedge\!A_i$. Then on overlaps $A_j=u_{ij}^{-1}A_i\,u_{ij}+u_{ij}^{-1}du_{ij}$ and $F_j=u_{ij}^{-1}F_i\,u_{ij}$, while $W|_{U_j}=u_{ij}^{-1}W|_{U_i}u_{ij}$. Consequently the multilinear polynomial
\[
P_W(X_1,\dots,X_k):=\frac{1}{(2\pi i)^k}\Tr(WX_1\cdots X_k)
\]
is gauge-covariant in the following sense: for any unitary $u$,
\[
P_{u^{-1}Wu}\bigl(u^{-1}X_1u,\dots,u^{-1}X_ku\bigr)=P_W(X_1,\dots,X_k).
\]
In particular, if $u$ lies in the stabilizer $G_{W_0}$ (so that $u^{-1}W_0u=W_0$), then $P_{W_0}$ restricts to an $\mathrm{Ad}$-invariant polynomial on $\mathfrak g_{W_0}$.

\smallskip
\noindent
\emph{Step 2.}
Define on each $U_i$
\[
   \alpha_i \;:=\; P_{W}(F_i,\dots,F_i)
   \;=\;\frac{1}{(2\pi i)^k}\Tr\big(W\,F_i^{\,k}\big)
   \;\in\;\Omega^{2k}(U_i).
\]
The transformation laws imply $\alpha_i|_{U_{ij}}=\alpha_j|_{U_{ij}}$, so the $\alpha_i$ glue to a global form $\alpha$. Using $D_AW=0$ and the Bianchi identity $D_AF=0$, one has $d\alpha=0$.

\medskip
\noindent{Step 3.}
Fix a good cover $\mathcal U=\{U_i\}$ of $X$. Let
\[
C^{p,q}(\mathcal U):=\prod_{i_0<\cdots<i_p}\Omega^q(U_{i_0\cdots i_p}),
\qquad U_{i_0\cdots i_p}:=U_{i_0}\cap\cdots\cap U_{i_p}.
\]
The Čech coboundary is
\[
(\delta\omega)_{i_0\cdots i_{p+1}}
=\sum_{m=0}^{p+1}(-1)^m\,
\omega_{i_0\cdots\widehat{i_m}\cdots i_{p+1}}\big|_{U_{i_0\cdots i_{p+1}}}.
\]
We use the standard total differential
\[
D := \delta + (-1)^p d
\]
on $\mathrm{Tot}^n(\mathcal U)=\bigoplus_{p+q=n} C^{p,q}(\mathcal U)$.

\medskip
\noindent{Step 4.}
Fix a basepoint $x_0\in X$ and set $W_0:=W_{\mathrm{sep}}(x_0)\in \End(H)$.
Let
\[
G_{W_0}:=\{\,g\in U(H)\mid gW_0g^{-1}=W_0\,\}
\]
be the stabilizer (equivalently, the unitary centralizer) of $W_0$.

Since $D_AW_{\mathrm{sep}}=0$, parallel transport preserves $W_{\mathrm{sep}}$; in particular, the holonomy of $A$ lies in $G_{W_0}$. Equivalently, after choosing a good cover $\mathcal U=\{U_i\}$ and local gauges on each contractible $U_i$ so that $W_i\equiv W_0$ is constant, the transition functions satisfy $u_{ij}:U_{ij}\to G_{W_0}$.

For this reduced structure group $G_{W_0}$, the symmetric multilinear functional
\[
P_{W_0}(X_1,\dots,X_k):=\frac{1}{(2\pi i)^k}\Tr\bigl(W_0\,X_1\cdots X_k\bigr)
\]
is $\Ad_{G_{W_0}}$--invariant (because $W_0$ is central in $G_{W_0}$). Therefore the Bott--Shulman--Stasheff/Dupont construction \cite{BOTT197643,DUPONT1976233} applies to the $G_{W_0}$--bundle and produces, on a good cover, a canonical \v{C}ech--de~Rham cocycle
\(
\Xi_k(W_{\mathrm{sep}},A,u)\in \mathrm{Tot}^{2k}(\mathcal U)
\)
whose $(0,2k)$--component is
\[
\omega^{(0)}=(\omega_i^{(0)}),\qquad
\omega_i^{(0)}=\frac{1}{(2\pi i)^k}\Tr\bigl(W_i\,F_i^k\bigr),
\]
and whose top \v{C}ech component is the corresponding group cocycle $\Phi_{W_{\mathrm{sep}}}(u)$.

\smallskip
\noindent
\emph{Step 5.}
If $D_AW=0$, then the $2k$-forms $\omega^{(0)}_i$ agree on overlaps and glue to a global closed form
$\omega^{(0)}=\frac{1}{(2\pi i)^k}\Tr(WF^{\,k})$ on $X$. Hence the total cocycle is $D$-cohomologous to
$(\omega^{(0)},0,\dots,0)$ in $\mathrm{Tot}^{2k}$.

\end{proof}

For $\rho(x)=|\psi(x)\rangle\langle\psi(x)|$, the amplitude bundle reduces to the tautological line bundle, $A$ becomes the Berry connection, and Theorem~\ref{thm:global-uhl} recovers the familiar identification of Berry–Chern classes.

Fix a good cover $\mathcal U=\{U_i\}$ and local trivialisations with amplitudes $W_i$ and connection $A_i$ (curvatures $F_i$) and unitaries $u_{ij}$. Let $W=(W_i)_i$ be any smooth adjoint–valued witness field with $W_i=\operatorname{Ad}_{u_{ij}^{-1}}(W_j)$ on overlaps.
Consider the Bott–Shulman descent for the invariant polynomial $(X_1,\dots,X_k)\mapsto \operatorname{Tr}(W_i X_1\cdots X_k)$, which produces forms $\omega^{(p)}\in C^{p,\,2k-p}$ satisfying $\delta\omega^{(p)}=d\omega^{(p+1)}$ for $p=0,\dots,k-1$ and $\delta\omega^{(k)}=(-1)^k\,\Phi_W(u)$, where
\[
  \Phi_W(u)_{i_0\cdots i_k}
  \;=\;\frac{1}{(2\pi i)^k}\operatorname{Tr}\!\big(
     W_{i_0}\,\theta_{i_0i_1}\wedge\theta_{i_1i_2}\wedge\cdots\wedge\theta_{i_{k-1}i_k}
  \big),\qquad \theta_{ij}:=u_{ij}^{-1}du_{ij}.
\]
With the total differential $D=d+(-1)^p\delta$ on $C^{p,\bullet}$, one has
\[
D\Omega=\Phi_W(u)\in C^{k+1,k},
\qquad \Omega:=\omega^{(0)}\oplus\cdots\oplus\omega^{(k)}\in\bigoplus_{p=0}^k C^{p,\,2k-p}.
\]
Thus, in the total complex, the Čech class $\Phi_W(u)$ is cohomologous to $\omega^{(0)}=\frac{1}{(2\pi i)^k}\operatorname{Tr}(W F^k)$. Under the Čech–de Rham isomorphism for a good cover,
\[
\big\langle W,\ \delta^k(u)\big\rangle\;=\;\big[\Phi_W(u)\big]_{\check C}\;=\;
\Big[\tfrac{1}{(2\pi i)^k}\operatorname{Tr}(W F^k)\Big]_{\mathrm{dR}}.
\]
If, moreover, $D_A W=0$, then $d\,\omega^{(0)}=0$, and we may replace the descent data by a $D$–cohomologous representative with $\omega^{(p)}\equiv 0$ for all $p\ge1$. Hence the class is represented by the global closed form $\tfrac{1}{(2\pi i)^k}\operatorname{Tr}(W F^k)$.

\section{\label{sec:QGL}Quantum entanglement index and quantum geometric Langlands correspondence}
\subsection{Motivations}
Our formulation would give a natural ``quantum extension''  of the Atiyah–Singer index theorem \cite{26aef1a0-6391-389c-92d4-cfc9f0b3e4c6} from the perspective of quantum information geometry (see Remark \ref{remark:index}). While traditionally it has been used in various settings of physics and gauge theories \cite{Fukaya:2019qlf,Yamashita:2020nkf}, the conventional formulation does not quantify quantum entanglement in quantum many-body systems. However, a number of examples suggests that the index and quantum physics has significant relations (e.g., in nuclear physics \cite{PhysRevD.78.074033}, high energy theory \cite{Alvarez-Gaume:1983zxc} and condensed matter \cite{graf2013bulk}).

\subsection{\label{sec:index}Definition and basic properties}
Let $X$ be a smooth, closed, even–dimensional spin manifold with chiral Dirac operator $D_X^+:\Gamma(\mathscr S^+)\to\Gamma(\mathscr S^-)$, and let $\widehat A(TX)$ denote the $\widehat A$–class. Let $E\to X$ be a Hermitian complex vector bundle with a unitary connection $A$ and curvature $F_A=dA+A\wedge A$.

We consider a smooth witness field $W\in\Gamma(\End(E))$. For each $x\in X$, write the spectral decomposition $W(x)=\sum_j \lambda_j(x)\,P_j(x)$ into real eigenvalues $\lambda_j(x)$ with mutually orthogonal projections $P_j(x)$. Using the (Borel) functional calculus, define the spectral projectors
\[
P_+(x):=\sum_{\lambda_j(x)>0}P_j(x),\qquad
P_-(x):=\sum_{\lambda_j(x)<0}P_j(x),\qquad
P_0(x):=\sum_{\lambda_j(x)=0}P_j(x).
\]
Set $S:=\operatorname{sgn}(W):=P_+-P_-$ and $|S|:=P_++P_-=I-P_0$. Then we have:
\[
P_\pm=\frac{|S|\pm S}{2},\qquad P_0=I-|S|.
\]
If $W$ is invertible, $P_\pm=(I\pm S)/2$.

Assume now that $W$ is $A$–parallel, i.e. $D_AW=0$, where $D_A$ is the covariant derivative. By the resolvent/Riesz functional calculus, differentiating under the contour integral for $P_\pm$ gives $D_AP_\pm=0$ (and hence $D_AP_0=0$). Thus the ranks of $P_\bullet$ are constant and
\[
E \;=\; E^+ \oplus E^0 \oplus E^-,
\qquad
E^\bullet:=\operatorname{im}P_\bullet,
\]
is an $A$–parallel splitting. Applying $D_A^2$ to $P_\bullet$ yields
$[F_A,P_\bullet]=D_A^2P_\bullet=0$, so the curvature preserves the splitting and block–diagonalizes:
\[
F_A \;=\; F^+ \oplus F^0 \oplus F^-,\qquad F^\bullet:=F_A\big|_{E^\bullet}.
\]
Since $P_\bullet$ commute with $F_A$ and are orthogonal idempotents, one has the weighted trace identity
\begin{equation}
\label{eq:identity}
\Tr\bigl(S\,e^{F_A/2\pi i}\bigr) =\Tr\bigl(e^{F^+/2\pi i}\bigr)-\Tr\bigl(e^{F^-/2\pi i}\bigr) =\mathrm{ch}(E^+)-\mathrm{ch}(E^-).
\end{equation}
Then we introduce the following.
\begin{HLblock}
\begin{definition}
\label{def:quantum-index}
Define the \emph{quantum entanglement index} (QEI) as:
\begin{equation}
\label{eq:index}
\Ind_S(D_X\!\otimes E)\;:=\;\ind(D_X\!\otimes E^+)\;-\;\ind(D_X\!\otimes E^-)\;.
\end{equation}
\end{definition}
\end{HLblock}
\noindent
Here each twisted chiral Dirac operator $D_X^\pm\!\otimes\mathbf 1_{E^\pm}$ is elliptic, hence Fredholm on Sobolev spaces. Therefore its kernel and cokernel are finite dimensional and the analytic index is an integer:
\[
\ind\bigl(D_X\!\otimes E^\pm\bigr)\in\mathbb Z \quad\Longrightarrow\quad \Ind_S(D_X\!\otimes E)\in\mathbb Z.
\]

By the Atiyah–Singer index theorem for twisted Dirac operators, for any Hermitian bundle $F\to X$ with unitary connection,
\[
\ind(D_X\!\otimes F)\;=\;\big\langle \widehat A(TX)\wedge \ch(F),[X]\big\rangle.
\]
Applying this to $F=E^\pm$ and subtracting gives the cohomological formula
\[
\Ind_S(D_X\!\otimes E)\;=\;\Big\langle \widehat A(TX)\wedge\bigl(\mathrm{ch}(E^+)-\mathrm{ch}(E^-)\bigr),[X]\Big\rangle,
\]
and, by the weighted trace identity above,
\[
\Ind_S(D_X\!\otimes E)\;=\;\Big\langle \widehat A(TX)\wedge \Tr\bigl(S\,e^{F_A/2\pi i}\bigr),[X]\Big\rangle.
\]
Since the left–hand side is an analytic index, the differential form expression on the right is an integer. In summary, we have the following.
\begin{HLblock}
\begin{corollary}
\label{cor:quantum-index}
The index \eqref{eq:index} is an integer: 
\[
\hlEq{
\Ind_S(D_X\!\otimes E) \;=\; \Big\langle \widehat A(TX)\wedge \Tr\bigl(Se^{F_A/2\pi i}\bigr),[X]\Big\rangle\in\mathbb Z.
}
\]
\end{corollary}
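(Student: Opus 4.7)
The plan is to read off the corollary as an immediate consequence of Definition~\ref{def:quantum-index}, the Atiyah--Singer index theorem for twisted Dirac operators, and the weighted-trace identity~\eqref{eq:identity} that was already established. Conceptually, the entire argument is: ``split, apply Atiyah--Singer to each summand, recombine via $S$.'' The key inputs are the $A$--parallelism $D_A W=0$ (needed to get a smooth $A$--parallel splitting $E=E^+\oplus E^0\oplus E^-$ and to conclude $[F_A,P_\pm]=0$), and the integrality of each analytic index. I would organise the write-up around these two inputs.

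First, I would verify that $E^\pm$ are genuine smooth Hermitian subbundles of $E$ carrying the restricted connections $A|_{E^\pm}$, so that the twisted chiral Dirac operators $D_X^+\!\otimes \mathbf 1_{E^\pm}$ are well-defined elliptic (hence Fredholm) operators on $L^2$-sections, with integer analytic indices $\mathrm{ind}(D_X\!\otimes E^\pm)\in\mathbb Z$. This is where I would invoke the Riesz/resolvent functional calculus: choose small contours in $\mathbb C$ enclosing the strictly positive (resp.\ strictly negative) part of $\mathrm{Spec}(W(x))$, which can be done locally uniformly because $W$ is continuous and its eigenvalues move continuously; differentiation under the contour integral then yields $D_A P_\pm=0$ from $D_A W=0$, so $\mathrm{rank}\,P_\pm$ is locally constant and the splitting is smooth and $A$--parallel. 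Hence $\mathrm{Ind}_S(D_X\!\otimes E)\in\mathbb Z$ by definition~\eqref{eq:index}.

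Second, I would apply Atiyah--Singer to each summand: for any Hermitian bundle $F$ with unitary connection,
\[
\mathrm{ind}(D_X\!\otimes F)=\big\langle \widehat A(TX)\wedge \mathrm{ch}(F),[X]\big\rangle,
\]
so that
\[
\mathrm{Ind}_S(D_X\!\otimes E)=\big\langle \widehat A(TX)\wedge\bigl(\mathrm{ch}(E^+)-\mathrm{ch}(E^-)\bigr),[X]\big\rangle.
\]
The identity~\eqref{eq:identity} then rewrites the Chern-character difference as the weighted trace $\Tr(Se^{F_A/2\pi i})$, giving the claimed cohomological formula. The nontrivial point here is that $[F_A,P_\pm]=0$ (which follows by applying $D_A^2$ to the equation $D_AP_\pm=0$, using $D_A^2=F_A\wedge\cdot$), so that $F_A$ block-diagonalises against the splitting; without this the weighted trace identity would not factor cleanly.

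The main obstacle I anticipate is exactly the smoothness/constant-rank step in the first paragraph: if $W$ has eigenvalues crossing zero on a nontrivial subset of $X$, the contours separating $\pm$ sectors can fail to be chosen continuously, and $P_\pm$ can jump in rank. The hypothesis $D_A W=0$ is what rescues us, because a parallel endomorphism has locally constant spectrum (its eigenvalues are covariantly constant functions on connected components), so the sign decomposition is globally valid on each component of $X$. I would state this explicitly as a short lemma before assembling the pieces, since everything else is then a direct combination of~\eqref{eq:identity}, Atiyah--Singer, and the definition, and the integrality is automatic from the analytic side of the index theorem.
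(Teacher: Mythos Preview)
Your proposal is correct and follows essentially the same approach as the paper: the paper's argument (given in the text immediately preceding the corollary, which is stated as a summary) likewise uses the Riesz functional calculus to get $D_AP_\pm=0$ from $D_AW=0$, deduces the $A$--parallel splitting with $[F_A,P_\bullet]=0$, applies Atiyah--Singer to $E^\pm$ separately, and combines via the weighted trace identity~\eqref{eq:identity}. Your explicit remark that $D_AW=0$ forces the spectrum of $W$ to be locally constant (hence no zero-crossings and a globally valid sign decomposition) is a useful clarification that the paper leaves implicit.
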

\end{HLblock}

An important application of our index is mixed‑state geometry, while pure states are commonly considered in physics in this context. 

\begin{remark}\label{remark:index}
The index \eqref{eq:index} can be interpreted as the quantum version of the Atiyah–Singer index. The reasons are as follows. If $W$ detects quantum entanglement, $S$ selects a parallel subbundle $E^+\oplus E^-$ that encodes those ``entanglement‑active'' fiber directions. The spectral decomposition $W=W_+-W_-$ with supports $E^\pm$ gives 
\[
\Tr(W\rho)=\Tr(W_+\rho)-\Tr(W_-\rho)\;,
\]
so negativity ($\Tr(W\rho)<0$) can only occur if the state family places sufficient weight on the negative spectral sector $E^-$, i.e. the only sector that can contribute negatively to the witness pairings. Therefore the index \eqref{eq:index} measures a global, entanglement-induced imbalance that cannot be generated by separable (classical) families. 
\end{remark}

\subsection{Geometric Langlands correspondence with quantum entanglement}
Our discussions can be naturally related to the geometric Langlands program. The goal of this subsection is to link our theory of quantum many-body systems with established results (see \cite{Frenkel:2005pa}, for example). 

Let $(E,A)$ be a principal $U(r)$–bundle over a smooth base $X$ with unitary connection $A$ and curvature $F_A$. As before, let $W\in\Gamma(\End(E))$ be a smooth witness field with $D_AW=0$, write $S=\sgn(W)$, and denote the $A$–parallel spectral splitting by $E=E^+\oplus E^0\oplus E^-$. It is convenient to view parallel endomorphisms via the adjoint connection. On $\End(E)$, the induced connection is $\nabla^{\ad}:=D_A$ with curvature $(\nabla^{\ad})^2=\ad(F_A)$. Hence the centralizer subbundle
\[
\mathfrak z_A:=\ker(\ad(F_A))=\{Z\in\End(E):[F_A,Z]=0\}\subset\End(E)
\]
is preserved by $\nabla^{\ad}$ and the restriction is flat. The sheaf of horizontal sections of $(\mathfrak z_A,\nabla^{\ad})$ is the local system of endomorphisms that are parallel for $A$. Fixing a basepoint $x_0\in X$ and identifying $E_{x_0}\cong\mathbb C^r$, the space of global horizontal sections identifies with the commutant of the holonomy: 
\[
\Gamma_{\nabla^{\ad}}(X,\mathfrak z_A)\cong \End_{\Hol(A)}(\mathbb C^r)=\{T\in\End(\mathbb C^r):\Ad(h)(T)=T\ \forall\,h\in\Hol(A)\}.
\]
In particular, there exists a non‑scalar $A$–parallel Hermitian endomorphism $W$ (equivalently, a nontrivial $A$–parallel splitting $E=E^+\oplus E^-$ with $S=\sgn(W)$) if and only if the holonomy representation is reducible, i.e. $\Hol(A)$ preserves a proper subspace of $\mathbb C^r$. In other words, $\End_{\Hol(A)}(\mathbb C^r)$ contains a non‑scalar Hermitian. With the reduction of structure group to a Levi subgroup $U(r)\to U(r_+)\times U(r_-)$ determined by $S$, the curvature block‑diagonalizes and one has \eqref{eq:identity}.

Now let $C$ be a smooth complex projective curve, and consider holomorphic vector bundles with unitary Chern connections ($c_1=\frac{1}{2\pi i}\int_C\Tr F_A$). On a curve $\widehat A(TC)=1$, therefore the QEI reduces to
\[
\Ind_S\big|_C=\frac{1}{2\pi i}\int_C\Tr(SF_A)=\deg(E^+)-\deg(E^-).
\]
At a point $p\in C$, a positive elementary modification is a short exact sequence
\[
0\longrightarrow E\ \xrightarrow{\ \iota\ }\ E'\ \longrightarrow\ \mathbb C_p\longrightarrow 0,
\]
which increases the degree by $1$, i.e. $\deg E'=\deg E+1$. A Hecke modification of a split bundle $E$ at $p$ acts by an elementary modification on $E^+$ and/or $E^-$ at $p$ while leaving the other Levi factor(s) unchanged. More generally, given a coweight \(\lambda=(m_1,\dots,m_{r_+};\,n_1,\dots,n_{r_-})\in\Z^{r_+}\times\Z^{r_-}\), one performs \(|m_i|\) (resp. \(|n_j|\)) elementary modifications on the $i$‑th line in $E^+$ (resp. the $j$‑th line in $E^-$), with the sign of $m_i$ and $n_j$ determining positive/negative type.

These correspondences have a simple effect on the QEI along $C$. A positive elementary modification of $E^+$ at $p$ changes the index by $+1$, while a positive elementary modification of $E^-$ changes it by $-1$. For a general coweight $\lambda$, the total jump is
\[
\Delta\,\Ind_S\big|_C=\sum_{i=1}^{r_+} m_i\ -\ \sum_{j=1}^{r_-} n_j\ =:\ \langle S,\lambda\rangle.
\]
If one performs the Hecke modifications at marked points $\{p_a\}$ with coweights $\{\lambda_a\}$, the net change is the sum of the local signed charges:
\[
\Delta\,\Ind_S\big|_C=\sum_a \langle S,\lambda_a\rangle.
\]

Now let us consider a smooth base \(X\) and the amplitude bundle \((E,A)\) of a smooth full-rank quantum state family \(\rho\). For a given $\mathcal{D}$-module of witness fields, we can naturally ask its quantum geometric Langlands (QGL) correspondence for quantum many-body systems.

\begin{definition}
\label{def:QGL}
A \emph{QGL spectral datum} is \((E,A,W)\) with $D_AW=0$.
It determines \(L=\mathrm{GL}_{r_+}\times\mathrm{GL}_{r_-}\subset \mathrm{GL}_r\), and the graded Chern character
\(\text{ch}(E^+)-\text{ch}(E^-)=\Tr(Se^{F_A/2\pi i})\).
\end{definition}

\begin{conjecture}\label{conj:qgl-eis}
Let \((E,A,W)\) be a QGL spectral datum on \(C\), and assume \((E,\nabla)\) is flat (after complexification). Writing \(S=\sgn(W)\) and \(L=\mathrm{GL}_{r_+}\times\mathrm{GL}_{r_-}\), the \(\mathrm{GL}_r\)–eigenobject attached to \((E,\nabla)\) on \(\mathrm{Bun}_{\mathrm{GL}_r}\) is Eisenstein‑induced from \(\mathrm{Bun}_L\). Consequently, Hecke eigenvalues factor through \(L\), giving \(\Ind_S=\Ind(\cdot,E^+)-\Ind(\cdot,E^-)\).
\end{conjecture}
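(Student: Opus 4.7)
The strategy is to use the $A$-parallel spectral splitting produced by $W$ to reduce $(E,\nabla)$ to a Levi local system, apply $\mathrm{GL}_n$-geometric Langlands factor-wise, and then re-inflate via geometric Eisenstein. First, since $D_AW=0$, the Riesz functional-calculus argument already used in Section~\ref{sec:index} shows that the spectral projectors $P_\pm,P_0$ are $A$-parallel, so $E=E^+\oplus E^0\oplus E^-$ is an $A$-parallel orthogonal splitting. After complexification, the flatness assumption on $(E,\nabla)$ descends to each block, producing flat sub-bundles $(E^\pm,\nabla^\pm)$ and hence a reduction of the flat $\mathrm{GL}_r$-structure to the Levi $L=\mathrm{GL}_{r_+}\times\mathrm{GL}_{r_-}$ inside a parabolic $P\subset\mathrm{GL}_r$. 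The $E^0$ block, where $\sgn(W)=0$, contributes nothing to $\mathrm{Ind}_S$ and can be absorbed as a passive third Levi summand.

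Next, I would apply $\mathrm{GL}_n$-geometric Langlands (in the form of \cite{Frenkel:2005pa} and its later refinements) to each factor to obtain Hecke eigensheaves $\mathsf{Aut}_{E^\pm}\in D\text{-mod}(\mathrm{Bun}_{\mathrm{GL}_{r_\pm}})$ with eigenvalues $E^\pm$, then form the external product $\mathsf{Aut}_{E^+}\boxtimes\mathsf{Aut}_{E^-}$ on $\mathrm{Bun}_L$. Applying the geometric Eisenstein functor $\mathrm{Eis}^P_L$ through the standard correspondence $\mathrm{Bun}_L\leftarrow\mathrm{Bun}_P\rightarrow\mathrm{Bun}_{\mathrm{GL}_r}$ produces an object on $\mathrm{Bun}_{\mathrm{GL}_r}$. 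The Braverman--Gaitsgory compatibility of Eisenstein with Hecke functors, combined with the fact that under geometric Satake the standard $\mathrm{GL}_r$-representation restricts to $\mathrm{std}_+\oplus\mathrm{std}_-$ over $L$, yields the fiberwise eigenvalue $E^+_x\oplus E^-_x=E_x$ at each $x\in C$. This identifies $\mathrm{Eis}^P_L(\mathsf{Aut}_{E^+}\boxtimes\mathsf{Aut}_{E^-})$ as the eigenobject attached to $(E,\nabla)$, establishing Eisenstein induction.

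For the index statement, track a Hecke coweight $\lambda=(\lambda_+,\lambda_-)$ through the Levi. Geometric Satake shows $V_\lambda|_L$ is built from $V_{\lambda_+}\boxtimes V_{\lambda_-}$ and its lower constituents, so eigenvalues factor through $L$ and the Hecke operator $H_{x,\lambda}$ decomposes as $H^+_{x,\lambda_+}\otimes H^-_{x,\lambda_-}$ on the Eisenstein-induced eigenobject. Combining this with the curve-level identity $\mathrm{Ind}_S|_C=\deg E^+-\deg E^-$ from the preceding subsection and the additivity of $\deg$ under elementary modifications yields $\mathrm{Ind}_S=\mathrm{Ind}(\cdot,E^+)-\mathrm{Ind}(\cdot,E^-)$, with the coweight-$\lambda$ shift equal to $\langle S,\lambda\rangle=\sum_i m_i-\sum_j n_j$ as already computed.

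\textbf{Main obstacle.} The genuinely delicate step is the Eisenstein inflation: $\mathrm{Eis}^P_L$ does not in general preserve irreducibility or the property of being a strict Hecke eigensheaf, and whether the output is one depends on which variant (nilpotent, tempered, or the Gaitsgory--Raskin ``restricted'' framework) of geometric Langlands one invokes, and on regularity of the parameter---e.g.\ $E^+\not\cong E^-$ and absence of coincident monodromy that would produce non-trivial $\mathrm{Ext}$-groups between the Levi factors. A rigorous proof therefore must import the full Braverman--Gaitsgory machinery and handle the singular locus by hand. A secondary concern is matching the unitary origin of $W$ with the algebro-geometric flatness hypothesis: one must verify that the projectors $P_\pm$, built from the Riesz functional calculus of a Hermitian $A$-parallel field, remain algebraic and $\nabla$-flat after scalar extension, which they do because the construction only uses the commuting pair $(W,D_A)$ and not the real structure per se.
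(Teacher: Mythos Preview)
The paper does not prove this statement: it is explicitly stated as a \emph{conjecture}, and immediately after Conjectures~\ref{conj:qgl-eis} and~\ref{conj:qgl-hecke} the paper writes only that ``these conjectures can be resolved (or may already have been resolved) by applying the results of \cite{gaitsgory2024proof,arinkin2024proof,campbell2024proof,arinkin2024proof4}.'' There is no argument in the paper beyond this remark, so there is nothing to compare your proposal against at the level of proof details.

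That said, your outline is broadly the argument one would expect and is consistent with the paper's surrounding discussion: the reduction of structure group to the Levi $L$ via the $A$-parallel splitting is exactly what the paper sets up before stating the conjecture, and your invocation of geometric Eisenstein together with Braverman--Gaitsgory Hecke compatibility is the natural mechanism. You have also correctly identified the genuine obstacle---$\mathrm{Eis}^P_L$ need not produce a strict Hecke eigensheaf without regularity hypotheses on the Levi parameter, and the precise categorical framework (tempered, nilpotent, restricted) matters. The paper does not engage with any of this; it defers the entire argument to the cited recent proof of geometric Langlands. So your proposal is not a comparison target but rather a more detailed sketch than the paper itself provides, and your ``main obstacle'' paragraph accurately explains why the paper left this as a conjecture rather than a theorem.
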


\begin{conjecture}\label{conj:qgl-hecke}
In the setting above, for a Hecke modification of coweight \(\lambda\) at \(p\in C\), the automorphic object changes by the Eisenstein functor of type \(\lambda\), and the QEI satisfies
\(\Delta\Ind_S=\langle S,\lambda\rangle\).
\end{conjecture}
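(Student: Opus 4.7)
The plan is to prove the two claims in sequence, reducing the geometric assertion to the Levi subgroup $L=\mathrm{GL}_{r_+}\times\mathrm{GL}_{r_-}$ determined by $S=\sgn(W)$, and then deducing the numerical QEI jump from degree shifts on the two parallel subbundles.

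First, by Conjecture \ref{conj:qgl-eis} the $\mathrm{GL}_r$--automorphic object attached to $(E,\nabla)$ is Eisenstein--induced from $\mathrm{Bun}_L$,
\[
\mathsf{Aut}_{(E,W)}\ \simeq\ \mathrm{Eis}^{\mathrm{GL}_r}_L\big(\mathsf{Aut}_{E^+}\boxtimes \mathsf{Aut}_{E^-}\big),
\]
and its Hecke eigenvalues factor through $L$. Write a coweight of $\mathrm{GL}_r$ in the standard Levi coordinates as $\lambda=(m_1,\dots,m_{r_+};\,n_1,\dots,n_{r_-})\in\Z^{r_+}\times\Z^{r_-}$. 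I would invoke the compatibility of the Eisenstein functor with spherical Hecke actions, i.e.\ the geometric analogue of the Satake compatibility of constant terms with spherical convolution, to identify the action of $H^{\mathrm{GL}_r}_{p,\lambda}$ on the Eisenstein--induced object with the Eisenstein induction of the Levi Hecke functor $H^{\mathrm{GL}_{r_+}}_{p,(m_i)}\boxtimes H^{\mathrm{GL}_{r_-}}_{p,(n_j)}$. This yields the first claim that $\mathsf{Aut}_{(E,W)}$ transforms by the Eisenstein functor of type $\lambda$.

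Second, I would translate this operation into the geometry of the amplitude bundle. A Levi Hecke modification at $p$ performs $\sum_i m_i$ signed elementary modifications on $E^+$ and $\sum_j n_j$ on $E^-$, while preserving the $A$--parallel splitting $E=E^+\oplus E^0\oplus E^-$ because the modification lies in $L$ and therefore commutes with the spectral projectors $P_\pm$. Hence $W$ extends to a parallel witness field for the modified connection $A'$ (the extension across $p$ is governed by the Levi reduction). Since $\widehat A(TC)=1$ on a curve, Corollary \ref{cor:quantum-index} reduces the QEI to $\Ind_S|_C=\deg(E^+)-\deg(E^-)$, and elementary modifications shift these degrees by one apiece, giving
\[
\Delta\,\Ind_S\;=\;\sum_{i=1}^{r_+}m_i\;-\;\sum_{j=1}^{r_-}n_j\;=\;\langle S,\lambda\rangle,
\]
where the pairing records $+1$ on coweight directions lying in $\mathrm{GL}_{r_+}$ and $-1$ on those in $\mathrm{GL}_{r_-}$, with $E^0$ contributing trivially.

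The main obstacle is the first step: rigorously establishing the Eisenstein/Hecke compatibility at the level of geometric Langlands is a substantive input relying on the geometric Satake equivalence and on the compatibility of parabolic induction with spherical convolution. A subsidiary technical issue is showing that the witness field $W$ extends parallelly across $p$ under the modification; this is where the Levi valuedness of the Hecke correspondence is essential, since any component of $\lambda$ outside $L$ would mix the $E^\pm$ factors and destroy $D_{A'}W=0$. Finally one must justify the Chern--Weil/Atiyah--Singer computation on the possibly singular modified connection at $p$, which is handled by a standard cutoff/regularization argument that reduces the index difference to the topological degree shift already computed.
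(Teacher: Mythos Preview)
The paper does not prove this statement: it is explicitly labeled a \emph{Conjecture}, and immediately after stating it the paper writes that ``these conjectures can be resolved (or may already have been resolved) by applying the results of'' the recent geometric Langlands papers of Gaitsgory et al. There is thus no proof in the paper to compare against.

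That said, the two halves of the conjecture have different status. The numerical claim $\Delta\Ind_S=\langle S,\lambda\rangle$ is in fact established in the paragraph preceding the conjecture: on a curve $\widehat A(TC)=1$, so $\Ind_S|_C=\deg(E^+)-\deg(E^-)$, and the paper records that elementary modifications shift these degrees by $\pm 1$ according to whether they hit $E^+$ or $E^-$. Your second step reproduces exactly this computation, and is correct.

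The genuinely conjectural content is the first clause, that the automorphic object transforms by the Eisenstein functor of type $\lambda$. Your argument for this is not a proof but a reduction: you assume Conjecture~\ref{conj:qgl-eis} (itself unproven in the paper) and then invoke the compatibility of Eisenstein induction with spherical Hecke functors. You correctly flag this as ``the main obstacle,'' and that is an honest assessment: this compatibility is a substantive ingredient of the geometric Langlands machinery, and the paper does not supply it either. So your proposal is best read as an outline of how the conjecture would follow from Conjecture~\ref{conj:qgl-eis} plus standard (but deep) Eisenstein/Hecke compatibility, together with a self-contained verification of the QEI jump. That is a fair summary of what one expects, but it does not constitute a proof, and the paper does not claim one.
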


These conjectures can be resolved by applying the results of \cite{gaitsgory2024proof,arinkin2024proof,campbell2024proof,arinkin2024proof4}.

\vskip0.3cm
Physics interpretations of the statements and conjectures are as follows. A quantum phase transition (QPT) in the witness–selected sector is precisely the event where the \(A\)-parallel splitting \(E=E^+\oplus E^-\) changes by a Hecke modification, and this is detected by a quantized jump of the invariant associated to quantum entanglement. Away from such a locus the virtual class \([E^+]-[E^-]\) is constant, hence the QEI \eqref{eq:index} is locally constant, so it can change only when a Hecke modification alters the Levi factors (i.e. a genuine sector change), which is the physical signature of a QPT. On a two–parameter submanifold \(C\), this reduces to the quantum number \(\nu_{\mathrm{ent}}=\tfrac{1}{2\pi i}\!\int_C\!\Tr(SF)\), and a unit Hecke correspondence in the \(E^+\) block produces the quantized jump \(\Delta\nu_{\mathrm{ent}}=+1\) (and similarly \(-1\) for \(E^-\)). Following the standard argument of phase transitions, this jump would accompany a gap closing in the \(S\)-graded sector (typically visible as a spike in the fidelity/BKM metric and often in entanglement entropy (see also Section \ref{sec:phys})). Thus, integer jumps of \(\nu_{\mathrm{ent}}\) serve as a robust, topological diagnosis of QPTs as Hecke modifications in parameter space.

\vskip0.3cm
Fix a QGL spectral datum $(E,A,W)$ and define the $L$–character as
\[
\chi_S:\; L\longrightarrow U(1),\qquad \chi_S(g_+,g_-)\;=\;\det(g_+)\,\det(g_-)^{-1}.
\]
Then for every smooth closed curve $\gamma\subset X$ the \emph{entanglement Wilson loop} is the evaluation of this character on the monodromy of the $L$–local system determined by $(E,A)$:
\begin{equation}\label{eq:W-loop-as-character}
\mathfrak{W}_S(\gamma)
\;=\;\exp\!\Big(-\oint_\gamma \Tr(SA)\Big)
\;=\;\chi_S\big(\Hol_E(\gamma)\big)
\;=\;\frac{\det\!\big(\Hol_{E_+}(\gamma)\big)}{\det\!\big(\Hol_{E_-}(\gamma)\big)}.
\end{equation}

Under geometric Langlands, $(E,A)$ is the spectral input and automorphic Hecke functors act on $\mathcal D$–modules on $\mathrm{Bun}_{GL_r}$. The entanglement grading supplied by $S$ (via the QEI) is expected to cause Hecke eigenvalues to factor through the Levi $L$ determined by $S$. Concretely, the loop $\gamma$ around a point $x\in X$ picks out the eigenvalue
\[
\chi_S\big(\Hol_E(\gamma_x)\big)\;=\;\mathfrak{W}_S(\gamma_x),
\]
so $\mathfrak{W}_S$ is the $1$–dimensional character by which the Hecke kernel acts on the $S$–graded sector. Equivalently, the eigenvalue of $H_{x,V}$ on the automorphic object attached to $(E,A,W)$ factors through the $L$–character $\chi_S$.

\smallskip

A Hecke modification of coweight $\lambda$ at $x$ changes the $L$–eigenvalue by the signed charge $\langle S,\lambda\rangle$. At the level of \eqref{eq:W-loop-as-character} this appears as
\[
\mathfrak{W}_S(\gamma_x)\;\longmapsto\;e^{2\pi i\langle S,\lambda\rangle}\,\mathfrak{W}_S(\gamma_x),
\]
which matches the jump $\Delta\Ind_S=\langle S,\lambda\rangle$ of the QEI and, on oriented surfaces, the jump $\Delta\nu_{\mathrm{ent}}$ of the entanglement‑induced number \eqref{eq:EE-invariant}.

\smallskip
For the Satake/Hecke description it is convenient to replace $U(1)$ by $\mathbb G_m$ and view $\chi_S^{\mathrm{alg}}:\,L_\mathbb C\to\mathbb G_m$. Through spherical Satake, $\chi_S^{\mathrm{alg}}$ corresponds to a one–dimensional representation of the dual Levi ${}^{L}L\subset{}^{L}G$, and the associated spherical kernel acts by the scalar $\chi_S(\Hol_E(\gamma_x))=\mathfrak{W}_S(\gamma_x)$ on the automorphic side. In the absence of entanglement, the duality between a Wilson loop and a Hecke operator is expected \cite{Kapustin:2006pk,Frenkel:2005pa}. The arguments here are applied to a topological phase of matter in \cite{2026arXiv260113467I}.

\section{\label{sec:phys}Outlook: Implications to quantum physics}
This section offers a brief outlook and sketches future directions.
Detailed numerical implementations and case studies for concrete quantum many-body models
are presented in the companion paper~\cite{2026arXiv260113467I}. Here we focus on the conceptual and
mathematical framework.
\subsection{Toward practical detection of entanglement via QEI in quantum many-body systems}
In this work, on 2-manifolds $X$, for a given 2-form $F$, we define an entanglement curvature 2‑form
\[
\Omega^{(W)}=\Tr(WF)\;,
\]
which can be used to analyze the topological space of entangled states. With $S:=\operatorname{sgn}(W)=P_+-P_-$, then we obtain the entanglement-induced number 
\begin{equation}
\label{eq:EE-invariant}
\nu_\text{ent}:=\frac{1}{2\pi i}\int_X\Tr(SF)\;,
\end{equation}
which is analogous to the TKNN/Berry number \cite{PhysRevLett.49.405}, but now filtered by the chosen entanglement structure\footnote{In conventional topological insulators, the quantization of topological charge is rooted in the anomaly cancellation between the bulk and the edges/surfaces, rather than originating primarily from quantum entanglement. For example, the quantum Hall effect can be described by a single-particle theory, making quantum entanglement irrelevant in this context.}. (Note that when $D_AW=0$, $\Omega^{(W)}$ is a closed 2‑form $d\Omega^{(W)}=0$ due to the Bianchi identity $D_AF=0$.) This gives a way to project out purely classical/locally separable contributions, leaving an obstruction that is operationally tied to quantum entanglement. 

Across a topology‑changing transition, $\nu_\text{ent}$ jumps, which is a signature of phase transition induced solely by quantum entanglement, corresponding to the Hecke modification. 

We also highlight the implications for condensed matter physics and nuclear physics, suggesting that the mathematical quantities proposed in this article may be implemented and discovered experimentally. Physically, $\nu_\text{ent}$ could be observed as an \emph{entanglement-induced (topological) effect}, where the conductance receives a correction due to entanglement. Furthermore, the QEI has a natural connection to chiral physics: it measures the entanglement between left- and right-handed modes. The traditional Atiyah-Singer index counts the imbalance, $n_L - n_R = \Tr[\rho J_5] =\tfrac{1}{2\pi i} \int \Tr[F]$, between the number of left- and right-moving modes. Here $J_5$ is the axial current, and the number of each mode is given by the index of the corresponding Dirac operator: $n_\bullet=\ind D^\bullet$. In contrast, the QEI detects their entanglement: $ \Tr[\rho S J_5] = \tfrac{1}{2\pi}\int \Tr[S F]$. Both entanglement-induced phenomena can be detected by two-dimensional condensed matter systems, such as the Haldane systems and (multilayer) graphenes. For nuclear physics, it would correspond to measuring the entanglement between the left- and right-handed movers in the chiral magnetic effect (CME) \cite{PhysRevD.78.074033,Kharzeev:2022ydx}.

The following is a prescription for numerical simulations. The Uhlmann curvature on parameter space admits the Bogoliubov–Kubo–Mori inner product \cite{doi:10.1143/JPSJ.12.570,10.1143/PTP.33.423,Petz:1999xrh}
\begin{equation}
\label{eq:uhlmann-kubo}
   F_{\lambda\sigma}\;=\; \frac{i}{2}\int_{0}^{1}\!ds\;\Tr\big(\rho^{s}\,[\partial_\lambda K,\partial_\sigma K]\,\rho^{1-s}\big), \qquad K:=-\ln\rho.
\end{equation}
For numerics, one may use a Fukui–Hatsugai–Suzuki–type discretization on a $N_x\times N_y$ mesh (torus $T^2$) \cite{Fukui:2005wr}. Let $\Phi(k)$ denote Uhlmann amplitudes in parallel-transport gauge at the plaquette nodes $k=(i,j)$, and set link variables
\[
U_\mu(k)\;=\;\frac{\langle \Phi(k),\,\Phi(k+\hat\mu)\rangle}{|\langle \Phi(k),\,\Phi(k+\hat\mu)\rangle|}\,,
\qquad \mu\in\{x,y\}.
\]
Then the plaquette curvature is
\[
F_{ij}\;=\;U_x(i,j)\,U_y(i+1,j)\,U_x(i,j+1)^{-1}\,U_y(i,j)^{-1},
\]
and $\nu_{\mathrm{ent}}$ is obtained by the normalized lattice sum of $\,\arg F_{ij}\,$ with a witness.

\subsection{Relation to high energy theory}
In general covariant settings, metric variations insert the stress tensor. Consider a case where $X$ is a product $X=\Sigma\times\Sigma$ of Riemann surfaces \((\Sigma,g)\). Writing
\(\langle\cdot\rangle_\rho^{c}\) for connected correlators, we have
\begin{equation}\label{eq:stress-kubo}
   F_{\lambda\sigma}
   \;=\; -\!\int_{\Sigma}\!\! d^dx\,\sqrt{g(x)}
          \int_{\Sigma}\!\! d^dy\,\sqrt{g(y)}\
          \big\langle \tfrac12 T_{\mu\nu}(x),\tfrac12 T_{\rho\kappa}(y)\big\rangle^{c}_{\rho}
          \partial_{[\lambda} g^{\mu\nu}(x)\partial_{\sigma]} g^{\rho\kappa}(y).
\end{equation}
Pairing \eqref{eq:stress-kubo} with a smooth witness field \(W\) (as in Section~\ref{sec:diff-form-obstruction}) yields scalar \(2\)-forms on \(\mathcal M\) representing de Rham classes after the Čech–de Rham identification.

When \(\rho[g]\) tracks the geometry, the Einstein equation is a standard backreaction model:
\begin{equation}
\label{eq:einstein-semi}
  G_{\mu\nu}[g]\;+\;\Lambda\,g_{\mu\nu}\;=\;8\pi G\,\big\langle T_{\mu\nu}\big\rangle_{\rho[g]}\,.
\end{equation}
Linearizing around a background solution \(g_0\) along \(g(\lambda)\) gives
\begin{equation}
\label{eq:lin-einstein}
  \partial_\lambda\big\langle T_{\mu\nu}\big\rangle_{\rho[g]}
  \;=\;\frac{1}{8\pi G}\Big(\partial_\lambda G_{\mu\nu}[g]
  +\Lambda\,\partial_\lambda g_{\mu\nu}\Big)
  \;+\;\text{(local anomaly/contact terms)}\,,
\end{equation}
so variations of the metric are traded for variations of geometric tensors. Combining \eqref{eq:stress-kubo} and \eqref{eq:lin-einstein}, we find that there exists a
bidistribution \(\mathcal K^{\alpha\beta\gamma\delta}_{W}(x,y,g_0)\), depending on the state and the witness, such that for geometry directions \(\lambda,\sigma\),
\begin{equation}
\label{eq:F-Einstein}
  \big\langle W,F_{\lambda\sigma}\big\rangle
  \;=\;\iint_{\Sigma\times\Sigma}\!\!\mathcal K^{\alpha\beta\gamma\delta}_{W}(x,y,g_0)\;
     \partial_{[\lambda}\!\big(G_{\alpha\beta}+\Lambda g_{\alpha\beta}\big)(x)\;
     \partial_{\sigma]}\!\big(G_{\gamma\delta}+\Lambda g_{\gamma\delta}\big)(y)\,
     d\Sigma_x\,d\Sigma_y.
\end{equation}
Thus, after pairing with witnesses, the Uhlmann curvature along metric directions is a quadratic functional of the linearized Einstein tensor. In regimes where the correlation length is short relative to curvature scales, \(\mathcal K_{W}\) localizes, and \eqref{eq:F-Einstein} reduces to a local curvature density built from \(R_{\mu\nu\rho\sigma}\) (and its contractions), in line with Chern–Weil locality.

\smallskip
We now return to a closed oriented four‑dimensional smooth $X$. The following follows essentially immediately from standard index theory:
\begin{equation}
\label{eq:4D-full}
\Ind_S(D_X\!\otimes E) =\int_X\!\left\{\frac12\big(c_1^2(E^+)-c_1^2(E^-)\big)-\big(c_2(E^+)-c_2(E^-)\big)\ -\ \frac{r_+-r_-}{24}\,p_1(TX)\right\}.
\end{equation}
We then consider its reduction to a 2d case. For 2d CFT on a compact Riemann surface \((\Sigma,g)\),
\[
\frac{1}{2\pi i}\,\mathrm{Tr}\,F \;=\; -\,\frac{c}{24\pi}\,R^{(2)}\,(i\,g_{z\bar z})\,dz\wedge d\bar z,
\]
so the Berry/Uhlmann curvature density is proportional to the Gaussian curvature, with proportionality given by the central charge \(c\). This matches the local form of \eqref{eq:F-Einstein} and reflects the 2d trace anomaly structure. Let $S=\mathrm{sgn}(W)$ be $A$–parallel with parallel splitting $E=E^{+}\oplus E^{0}\oplus E^{-}$ of ranks $r_{\pm}=\mathrm{rank}\,E_{\pm}$ and $r=\mathrm{rank}\,E$. In vacuum 2d CFT the curvature is central, $F=f\,\mathbf{1}_{r}$, hence
\[
\mathrm{Tr}(S F)=(\mathrm{Tr}\,S)\,f=(r_{+}-r_{-})\,f
\quad\text{and}\quad\frac{1}{2\pi i}\int_{\Sigma}\mathrm{Tr}(S F)= -\,\frac{r_{+}-r_{-}}{r}\,\frac{c}{6}\,\chi(\Sigma).
\]
Equivalently, when $F$ is central one has 
\[
c_{1}(E_{+})-c_{1}(E_{-}) \;=\; \frac{r_{+}-r_{-}}{r}\,c_{1}(E),
\]
so the entanglement-induced 2d invariant is just a rank‑weighted multiple of the unfiltered one. This is the 2d analogue of the $(r_{+}-r_{-})$ gravitational coefficient appearing in eq.~\eqref{eq:4D-full}. If we consider a dynamical system, this would correspond to measuring the entanglement in the flow within a curved spacetime background. 

For any subregion \(A\) with modular Hamiltonian \(K_A[g]=-\ln\rho_A[g]\), the first law of entanglement yields \cite{Casini:2011kv}
\begin{equation}
\label{eq:first-law}
  \delta S_A\;=\;\delta\langle K_A\rangle\;=\;\int_A \xi^\mu\,\delta\big\langle T_{\mu\nu}\big\rangle\, d\Sigma^\nu,
\end{equation}
with \(\xi^\mu\) the modular flow vector (e.g.\ a Killing/boost field in symmetric
setups). Using \eqref{eq:lin-einstein}, variations of \(S_A\) are therefore driven
by variations of the Einstein tensor, paralleling the curvature response
\eqref{eq:F-Einstein}.

The entanglement obstruction on parameter space, calculated from the unitary overlaps of local amplitudes, matches the de Rham classes of the Chern–Weil forms of the Uhlmann connection when paired with suitable test functions. Hence, when the state depends smoothly on background data, the visible topological entanglement obstruction reduces to ordinary characteristic classes of a unitary connection. Physically, its curvature is encoded by stress–tensor two–point functions as in \eqref{eq:stress-kubo}.


\clearpage
\section*{Appendix A. Summary of Notations}
\begin{center}
\renewcommand{\arraystretch}{1.2}
\begin{tabular}{@{}p{0.20\textwidth}p{0.75\textwidth}@{}}
\hline
Symbol & Meaning \\
\hline
$I=\{1,\dots,N\}$ & Site index set; $U\subset I$ a finite subsystem/patch. \\
$H_U=\bigotimes_{j\in U} H_j$ & Hilbert space on $U$. \\
$\mathcal V(U)=\mathrm{Herm}(H_U)$ & Real vector space of Hermitian operators on $H_U$. \\
$\mathcal D(U)$, $\mathcal D_{\mathrm{sep}}(U)$ & Density operators on $U$; separable density operators. \\
$C_{\mathrm{sep}}(U)$ & Separable cone in $\mathcal V(U)$ (closed, convex, pointed). \\
$C^*_{\mathrm{sep}}(U)$ & Dual cone (entanglement witnesses). \\
$\mathrm{Wit}(U)$, $W(U)$ & Witness cone and its real linear span $W(U)=\mathrm{span}_{\mathbb{R}}\mathrm{Wit}(U)$. \\
$r^U_{V}$ & Restriction on witnesses: $r^U_{V}(W)=\mathrm{Tr}_{U\setminus V}\!\big[W(\mathbf 1_V\otimes \tau_{U\setminus V})\big]$. Depending on the context, it is also used for the partial trace of states. \\
$\delta_C$ & Čech coboundary. \\
$\delta_E$ & The ancilla–cosimplicial differential associated to $E(Z)=\mathrm{Tr}_{\mathrm{aux}}(Z)\otimes\tau_{\mathrm{aux}}$. \\
$Q^0(\mathcal U)$, $R^0(\mathcal U)$ & Cokernel/kernel of restriction $j:\mathcal V(I)\to H^0(\mathcal U,\mathcal V)$ (gluing/uniqueness defects). \\
$E^q(\mathcal U)$ & Local entanglement groups defined via $H^q(C^{0,\bullet}(\mathcal U),\delta_E)$. \\
$\overline{Q}^0,\overline{R}^0,\overline{E}^q$ & Refinement colimits over covers. \\
\hline
\end{tabular}
\end{center}

\section*{Appendix B. Flowchart of entanglement test}
\begin{figure}[H]
\centering
\begin{tikzpicture}[
  >=Latex,
  every node/.style={font=\scriptsize},
  RowSep/.store in=\RowSep, RowSep=3.0mm,
  ColSep/.store in=\ColSep, ColSep=12mm,
  BlockW/.store in=\BlockW, BlockW=48mm,
  block/.style  ={rectangle, rounded corners=2pt, draw=black!60, fill=gray!08,
                  align=center, inner sep=2pt, text width=\BlockW},
  dec/.style    ={rectangle, rounded corners=2pt, draw=black!60, fill=white,
                  align=center, inner sep=1.6pt, text width=\BlockW, minimum height=4.2mm},
  result/.style ={rectangle, rounded corners=2pt, draw=black!70, fill=black!6,
                  align=center, inner sep=2pt, text width=\BlockW},
  flow/.style   ={-Latex, line width=0.55pt, draw=black!70},
  lab/.style    ={font=\scriptsize, inner sep=1pt}
]
\matrix[matrix of nodes, row sep=\RowSep, column sep=\ColSep, nodes in empty cells] (M) {
  \node[block] (start) {Input $\rho\in\mathcal D(I)$, cover $\mathcal U$}; & \\
  \node[dec]   (fam)   {Smooth parameter \mbox{family} $\rho:X\to\mathcal D_{\mathrm{full}}(H)$};                                      & \node[block]  (geo)    {Compute $A,F_A$; pick $W_{\rm sep}$; evaluate $(2\pi i)^{-k}\Tr(W_{\rm sep}F_A^k)$}; \\
  \node[block] (cech)  {Run cover‑compatibility $(Q^0,R^0)$ and solve feasibility SDPs (PSD / SEP)}; & \\
  \node[dec]   (feas)  {Infeasible or $R^0\neq 0$?};                             & \node[result] (witfam) {Return witness family $(W_i)$ with $\sum_i\Tr(W_i\,\rho|_{U_i})<0$; entanglement detected}; \\
  \node[block] (qzero) {Patch test $q{=}0$: find $W\in C^*_{\rm sep}(U)$ with $\Tr(W\,\rho|_U)<0$}; & \\
  \node[dec]   (qzok)  {Found?};                                                 & \node[result] (q0out)  {Return $(U,W)$ with $\Tr(W\,\rho|_U)<0$; entanglement detected}; \\
  \node[block] (anc)   {With ancilla column $q{\ge}1$ on each $U$, do LED$(q)$ test with $Y\in\mathcal C^q(\rho_U)$ and $W$ s.t. $\sum_m(-1)^m\Tr[W\,d_q^{(m)}(Y)]{<}0$}; & \\
  \node[dec]   (anck)  {Detected at degree $q$?};                                & \node[result] (ancout) {Return $(U,q,Y,W)$; \\entanglement detected}; \\
  \node[block] (ref)   {Refine cover};                                           & \\
};

\draw[flow] (start) -- (fam);
\draw[flow] (fam.east) -- node[lab, above]{Y} (geo.west);
\draw[flow] (fam) -- node[lab, left]{N} (cech);
\draw[flow] (geo.south) |- (cech);

\draw[flow] (cech) -- (feas);
\draw[flow] (feas.east) -- node[lab, above]{Y} (witfam.west);
\draw[flow] (feas) -- node[lab, left]{N} (qzero);

\draw[flow] (qzero) -- (qzok);
\draw[flow] (qzok.east) -- node[lab, above]{Y} (q0out.west);
\draw[flow] (qzok) -- node[lab, left]{N} (anc);

\draw[flow] (anc) -- (anck);
\draw[flow] (anck.east) -- node[lab, above]{Y} (ancout.west);
\draw[flow] (anck) -- node[lab, left]{N} (ref);
\end{tikzpicture}
\end{figure}

\begin{remark}
The flowchart should be read as a hierarchy of sufficient tests.  For a fixed cover $\mathcal U$ and bounded ancilla order $q$, failure to detect entanglement at the last stage does not imply separability of $\rho$.  In principle, every entangled state admits an entanglement witness, so completeness can only be expected in a formal limit where one allows either (i) patches as large as the full system $U=I$ (reducing to a global witness test), and/or (ii) unbounded witness/ancilla size. Both limits are generically intractable (the separability problem is NP-hard).
\end{remark}

\section*{Conflict of interest}
The author declares no competing interests

\section*{Data availability}
The manuscript has no associated data.

\bibliographystyle{amsalpha}
\bibliography{ref}

\newcommand{\etalchar}[1]{$^{#1}$}
\providecommand{\bysame}{\leavevmode\hbox to3em{\hrulefill}\thinspace}
\providecommand{\MR}{\relax\ifhmode\unskip\space\fi MR }
\providecommand{\MRhref}[2]{%
  \href{http://www.ams.org/mathscinet-getitem?mr=#1}{#2}
}
\providecommand{\href}[2]{#2}
\begin{thebibliography}{TKNdN82}

\bibitem[AB11]{abramsky2011sheaf}
Samson Abramsky and Adam Brandenburger, \emph{The sheaf-theoretic structure of non-locality and contextuality}, New Journal of Physics \textbf{13} (2011), no.~11, 113036.

\bibitem[ABC{\etalchar{+}}24a]{arinkin2024proof}
Dima Arinkin, Dario Beraldo, J~Campbell, L~Chen, J~Faergeman, Dennis Gaitsgory, K~Lin, S~Raskin, and N~Rozenblyum, \emph{Proof of the geometric langlands conjecture ii: Kac-moody localization and the fle}, arXiv preprint arXiv:2405.03648 (2024).

\bibitem[ABC{\etalchar{+}}24b]{arinkin2024proof4}
Dima Arinkin, Dario Beraldo, Lin Chen, Joakim Faergeman, Dennis Gaitsgory, Kevin Lin, Sam Raskin, and Nick Rozenblyum, \emph{Proof of the geometric langlands conjecture iv: ambidexterity}, arXiv preprint arXiv:2409.08670 (2024).

\bibitem[AG83]{Alvarez-Gaume:1983zxc}
Luis Alvarez-Gaume, \emph{{Supersymmetry and the Atiyah-Singer Index Theorem}}, Commun. Math. Phys. \textbf{90} (1983), 161.

\bibitem[And79]{33675660-3119-37a8-a5d2-fa5a40dfb227}
Joel Anderson, \emph{Extensions, restrictions, and representations of states on $c^\ast$- algebras}, Transactions of the American Mathematical Society \textbf{249} (1979), no.~2, 303--329.

\bibitem[AS68]{26aef1a0-6391-389c-92d4-cfc9f0b3e4c6}
M.~F. Atiyah and I.~M. Singer, \emph{The index of elliptic operators: I}, Annals of Mathematics \textbf{87} (1968), no.~3, 484--530.

\bibitem[BS00]{berenstein2000coadjoint}
Arkady Berenstein and Reyer Sjamaar, \emph{Coadjoint orbits, moment polytopes, and the hilbert-mumford criterion}, Journal of the American Mathematical Society \textbf{13} (2000), no.~2, 433--466.

\bibitem[BSS76]{BOTT197643}
R~Bott, H~Shulman, and J~Stasheff, \emph{On the de rham theory of certain classifying spaces}, Advances in Mathematics \textbf{20} (1976), no.~1, 43--56.

\bibitem[BT13]{BottTu}
R.~Bott and L.W. Tu, \emph{Differential forms in algebraic topology}, Graduate Texts in Mathematics, Springer New York, 2013.

\bibitem[CCF{\etalchar{+}}24]{campbell2024proof}
Justin Campbell, Lin Chen, Joakim Faergeman, Dennis Gaitsgory, Kevin Lin, Sam Raskin, and Nick Rozenblyum, \emph{Proof of the geometric langlands conjecture iii: compatibility with parabolic induction}, arXiv preprint arXiv:2409.07051 (2024).

\bibitem[CDKW14]{2014CMaPh.332....1C}
Matthias {Christandl}, Brent {Doran}, Stavros {Kousidis}, and Michael {Walter}, \emph{{Eigenvalue Distributions of Reduced Density Matrices}}, Communications in Mathematical Physics \textbf{332} (2014), no.~1, 1--52.

\bibitem[CHM11]{Casini:2011kv}
Horacio Casini, Marina Huerta, and Robert~C. Myers, \emph{{Towards a derivation of holographic entanglement entropy}}, JHEP \textbf{05} (2011), 036.

\bibitem[CM06]{Christandl2006}
Matthias Christandl and Graeme Mitchison, \emph{The spectra of quantum states and the kronecker coefficients of the symmetric group}, Communications in Mathematical Physics \textbf{261} (2006), no.~3, 789--797.

\bibitem[CM23]{Collins:2021xyq}
Beno{\^\i}t Collins and Colin McSwiggen, \emph{{Projections of orbital measures and quantum marginal problems}}, Trans. Am. Math. Soc. \textbf{376} (2023), no.~08, 5601--5640.

\bibitem[Dup76]{DUPONT1976233}
Johan~L. Dupont, \emph{Simplicial de rham cohomology and characteristic classes of flat bundles}, Topology \textbf{15} (1976), no.~3, 233--245.

\bibitem[FFM{\etalchar{+}}20]{Fukaya:2019qlf}
Hidenori Fukaya, Mikio Furuta, Shinichiroh Matsuo, Tetsuya Onogi, Satoshi Yamaguchi, and Mayuko Yamashita, \emph{{The Atiyah{\textendash}Patodi{\textendash}Singer Index and Domain-Wall Fermion Dirac Operators}}, Commun. Math. Phys. \textbf{380} (2020), no.~3, 1295--1311.

\bibitem[FHS05]{Fukui:2005wr}
Takahiro Fukui, Yasuhiro Hatsugai, and Hiroshi Suzuki, \emph{{Chern numbers in a discretized Brillouin zone: Efficient method to compute (spin) Hall conductances}}, J. Phys. Soc. Jap. \textbf{74} (2005), 1674--1677.

\bibitem[FKW08]{PhysRevD.78.074033}
Kenji Fukushima, Dmitri~E. Kharzeev, and Harmen~J. Warringa, \emph{Chiral magnetic effect}, Phys. Rev. D \textbf{78} (2008), 074033.

\bibitem[Fre07]{Frenkel:2005pa}
Edward Frenkel, \emph{{Lectures on the Langlands program and conformal field theory}}, {Les Houches School of Physics: Frontiers in Number Theory, Physics and Geometry}, 2007, pp.~387--533.

\bibitem[GP13]{graf2013bulk}
Gian~Michele Graf and Marcello Porta, \emph{Bulk-edge correspondence for two-dimensional topological insulators}, Communications in Mathematical Physics \textbf{324} (2013), no.~3, 851--895.

\bibitem[GR24]{gaitsgory2024proof}
Dennis Gaitsgory and Sam Raskin, \emph{Proof of the geometric langlands conjecture i: construction of the functor}, arXiv preprint arXiv:2405.03599 (2024).

\bibitem[GS82]{guillemin1982convexity}
Victor Guillemin and Shlomo Sternberg, \emph{Convexity properties of the moment mapping}, Inventiones mathematicae \textbf{67} (1982), no.~3, 491--513.

\bibitem[GT09]{Guhne:2008qic}
Otfried G{\"u}hne and G{\'e}za T{\'o}th, \emph{{Entanglement detection}}, Phys. Rept. \textbf{474} (2009), 1--75.

\bibitem[HHH96]{Horodecki:1996nc}
Michal Horodecki, Pawel Horodecki, and Ryszard Horodecki, \emph{{On the necessary and sufficient conditions for separability of mixed quantum states}}, Phys. Lett. A \textbf{223} (1996), 1.

\bibitem[IR26]{2026arXiv260113467I}
Kazuki {Ikeda} and Steven {Rayan}, \emph{{Quantum Entanglement, Stratified Spaces, and Topological Matter: Towards an Entanglement-Sensitive Langlands Correspondence}}, arXiv e-prints (2026), arXiv:2601.13467.

\bibitem[Kha22]{Kharzeev:2022ydx}
Dmitri~E. Kharzeev, \emph{{Chiral magnetic effect in heavy ion collisions and beyond}}, 4 2022.

\bibitem[Kir84]{kirwan1984convexity}
Frances Kirwan, \emph{Convexity properties of the moment mapping, iii}, Inventiones mathematicae \textbf{77} (1984), no.~3, 547--552.

\bibitem[KS59]{25f8c26d-c81c-39c1-9128-a71146ac84ff}
Richard~V. Kadison and I.~M. Singer, \emph{Extensions of pure states}, American Journal of Mathematics \textbf{81} (1959), no.~2, 383--400.

\bibitem[Kub57]{doi:10.1143/JPSJ.12.570}
Ryogo Kubo, \emph{Statistical-mechanical theory of irreversible processes. i. general theory and simple applications to magnetic and conduction problems}, Journal of the Physical Society of Japan \textbf{12} (1957), no.~6, 570--586.

\bibitem[KW07]{Kapustin:2006pk}
Anton Kapustin and Edward Witten, \emph{{Electric-Magnetic Duality And The Geometric Langlands Program}}, Commun. Num. Theor. Phys. \textbf{1} (2007), 1--236.

\bibitem[Mor65]{10.1143/PTP.33.423}
Hazime Mori, \emph{Transport, collective motion, and brownian motion}, Progress of Theoretical Physics \textbf{33} (1965), no.~3, 423--455.

\bibitem[MSS15]{7f2d66a9-2d0a-3ab6-8756-1cda100696a6}
Adam~W. Marcus, Daniel~A. Spielman, and Nikhil Srivastava, \emph{Interlacing families ii: Mixed characteristic polynomials and the kadison-singer problem}, Annals of Mathematics \textbf{182} (2015), no.~1, 327--350.

\bibitem[Pet99]{Petz:1999xrh}
D{\'e}nes Petz, \emph{{Monotone metrics on matrix spaces}}, Linear Algebra Appl. \textbf{244} (1999), 81--96.

\bibitem[Ter02]{2001quant.ph..1032T}
Barbara~M Terhal, \emph{Detecting quantum entanglement}, Theoretical computer science \textbf{287} (2002), no.~1, 313--335.

\bibitem[TKNdN82]{PhysRevLett.49.405}
D.~J. Thouless, M.~Kohmoto, M.~P. Nightingale, and M.~den Nijs, \emph{Quantized hall conductance in a two-dimensional periodic potential}, Phys. Rev. Lett. \textbf{49} (1982), 405--408.

\bibitem[Uhl86]{UHLMANN1986229}
Armin Uhlmann, \emph{Parallel transport and ``quantum holonomy'' along density operators}, Reports on Mathematical Physics \textbf{24} (1986), no.~2, 229--240.

\bibitem[Ume62]{umegaki1962conditional}
Hisaharu Umegaki, \emph{{Conditional expectation in an operator algebra. IV. Entropy and information}}, Kodai Math. Sem. Rep. \textbf{14} (1962), no.~2, 59 -- 85.

\bibitem[Yam21]{Yamashita:2020nkf}
Mayuko Yamashita, \emph{{A Lattice Version of the Atiyah{\textendash}Singer Index Theorem}}, Commun. Math. Phys. \textbf{385} (2021), no.~1, 495--520.

\end{thebibliography}
\end{document}